\newenvironment{bprooftree}
  {\leavevmode\hbox\bgroup}
  {\DisplayProof\egroup}
\newcommand{\Pts}{\sf{Pt-S}}
\newcommand{\Bes}{\sf{Be-S}}
\newcommand{\At}{\sf{At}}
\newcommand{\Atb}{\At_\bot}
\newcommand{\NE}{\sf{NE_B}}
\newcommand{\el}{\mathcal{L}}
\newcommand{\red}[1]{\textcolor{red}{#1}}
\newcommand\ie{\hbox{\textit{i.e.}}}
\newcommand\eg{\hbox{\textit{e.g.}}}
\theoremstyle{thmstyleone}%
\newtheorem{theorem}{Theorem}
\newtheorem{proposition}[theorem]{Proposition}%
\newtheorem{lemma}[theorem]{Lemma}
\newtheorem{corollary}[theorem]{Corollary}
\theoremstyle{thmstyletwo}%
\newtheorem{example}{Example}%
\newtheorem{remark}{Remark}%
\newtheorem{notation}{Notation}
\theoremstyle{thmstylethree}%
\newtheorem{definition}{Definition}%
\begin{document}

\title[An ecumenical view of proof-theoretic semantics]{An ecumenical view of proof-theoretic semantics}

\author[1]{\fnm{Victor} \sur{Barroso-Nascimento}}\email{ victorluisbn@gmail.com}

\author[2]{\fnm{Luiz Carlos} \sur{Pereira}}\email{luiz@inf.puc-rio.br}

\author[1]{\fnm{Elaine} \sur{Pimentel}}\email{e.pimentel@ucl.ac.uk}

\affil[1]{\orgdiv{Computer Science Department}, \orgname{UCL}, \orgaddress{\street{Gower St.}, \city{London}, \postcode{WC1E 6EA},  \country{UK}}}

\affil[2]{\orgdiv{Department of Philosophy}, \orgname{UERJ}, \orgaddress{\street{Rua S\~ao Francisco Xavier}, \city{Rio de Janeiro}, \postcode{20550-900},  \country{Brazil}}}


\abstract{
Debates concerning philosophical grounds for the validity of classical and intuitionistic logics often have the very nature of proofs as a point of controversy. The intuitionist advocates for a strictly constructive notion of proof, while the classical logician advocates for a notion which allows the use of non-constructive principles such as \textit{reductio ad absurdum}. In this paper we show how to coherently combine  {\em logical ecumenism} and  {\em proof-theoretic semantics} ($\Pts$) by providing not only a medium in which classical and intuitionistic {\em logics} coexist, but also one in which their respective {\em notions of proof} coexist. Intuitionistic proofs receive the standard treatment of $\Pts$, whereas classical proofs are given a semantics based on ideas by David Hilbert.  Furthermore, we advance the state of the art in $\Pts$ by introducing a key contribution: treating the absurdity constant $\bot$ as an atomic proposition and requiring all bases to be consistent. This treatment is essential for the obtainments of some ecumenical results, and it can also be used in standard intuitionistic $\Pts$. Additionally, we employ normalization techniques to demonstrate the consistency of simulation bases. These innovations provide fresh technical and conceptual insights into the study of bases in $\Pts$.

}

\keywords{}

\maketitle

\section{Introduction}
{\em What is the meaning of a logical connective?} This is a very difficult and controversial question, for many reasons. 
First of all, it depends on the logical setting.  For example, asserting that 
\begin{quote}
$A\vee B$ is valid only if it is possible to give a proof of either $A$ or $B$ 
\end{quote}
clearly does not correctly determine the meaning of the classical disjunction. It turns out, as shown in~\cite{piecha2015failure} and further analysed in~\cite{pym2023categorical,gheorghiu2022prooftheoretic}, that this also does not seem enough for determining meaning in intuitionistic logic, due to the intrinsic non-determinism on choosing between $A$ or $B$ for validating $A\vee B$. 

In model-theoretic semantics, mathematical structures help in supporting the notion of validity, which is based on a notion of {\em truth}. In the case of intuitionistic logic, for example, one could use Kripke structures, where the validation of atomic propositions using the classical notion of truth (\eg\ via truth tables) is enough for describing the meaning of the disjunction {\em in a given world}, where worlds are organised in a pre-order. 

Although it became common to specify the meaning of formulas in terms of truth conditions, we agree with Quine's objection to that, quoting Prawitz~\cite{DBLP:journals/Prawitz15}
\begin{quote}
Following Tarski, he  [Quine] states truth conditions of compound sentences, not as a way to explain the logical constants, but as a first step in a definition of logical truth or logical consequence, which Quine takes to demarcate the logic that he is interested in. He points out that the truth conditions do not explain negation, conjunction, existential quantification and so on, because the conditions are using the corresponding logical constants and are thus presupposing an understanding of the very constants that they would explain. I think that he is essentially right in saying so and that the situation is even worse: when stating truth conditions, one is using an ambiguous natural language expression that is to be taken in a certain specific way, namely in exactly the sense that the truth condition is meant to specify.
\end{quote}

Proof-theoretic semantics~\cite{pts-91,schroeder2006validity,sep-proof-theoretic-semantics} ($\Pts$) provides an alternative perspective for the meaning of logical operators compared to the viewpoint offered by model-theoretic semantics. In $\Pts$, the concept of {\em truth} is substituted with that of {\em proof}, emphasizing the fundamental nature of proofs as a means through which we gain demonstrative knowledge, particularly in mathematical contexts. $\Pts$ has as philosophical background  {\em inferentialism}~\cite{Brandom2000}, according to which inferences establish the meaning of expressions. This makes $\Pts$ a superior approach for comprehending reasoning since it ensures that the meaning of logical operators, such as connectives in logics, is defined based on their usage in inference. 

Base-extension semantics~\cite{Sandqvist2015IL} ($\Bes$) is a strand of $\Pts$ where proof-theoretic validity is defined relative to a given collection $S$ of inference rules defined over basic formulas of the language.\footnote{It should be noted that, in~\cite{Sandqvist2015IL},  base rules are restricted to formulas in the logic-free fragment only, that is, to {\em atomic propositions}. Here we will follow~\cite{piecha2015failure} and give the unit $\bot$ an ``atomic status'', allowing it to appear in atomic rules.} Hence, for example, while satisfiability of an atomic formula $p$ at a state $w$ in a Kripke model $\mathcal{M}=(W,R,V)$ is often given by
\[w\Vdash p \qquad \mbox{iff} \qquad w\in V(p) 
\]
in $\Bes$, validity w.r.t. a set $S$ of atomic rules has the general  shape
\[
\Vdash_{S} p \qquad \mbox{iff} \qquad\vdash_S p
\]
where $\vdash_S p$ indicates that $p$ is {\em derivable} in the proof system determined by $S$. After defining validity for atoms one can also define validity for logical connectives via semantic clauses that express proof conditions (e.g. $A \land B$ is provable in $S$ if and only if both $A$ and $B$ are provable in $S$), which results in a framework that evaluates propositions exclusively in terms of proofs of its constituents.

Although the $\Bes$ project has been successfully developed for intuitionistic~\cite{Sandqvist2015IL} and classical logics~\cite{Sandqvist,DBLP:journals/igpl/Makinson14}, it has not yet been systematically developed as a foundation for logical reasoning~\cite{DBLP:journals/synthese/DicherP21,DBLP:journals/jphil/Kurbis15,DBLP:journals/logcom/Francez16a}. In this paper, we intend to move on with this quest, by proposing a $\Bes$ view of {\em ecumenical logics}, inspired by Prawitz's ~\cite{DBLP:journals/Prawitz15} proposal of a system combining classical and intuitionistic logics. 

In Prawitz' system, 
the classical logician and the intuitionistic logician would share the universal quantifier, conjunction, negation and the constant for the absurdity, but they would each have their own existential quantifier, disjunction and implication, with different meanings. 
Prawitz' 
main motivation was to provide a logical framework that would make possible an inferentialist semantics for the classical logical constants.
In this way, inferentialism brought forth a very specific proposal when it emerged in the ecumenical context: to provide acceptable assertability conditions for the operators of a certain logical system in another logical system which does not accept them, thus allowing the acceptance and reinterpretation of the previously rejected operators under the new inferential guise. In the context of conflicting discussions between classical and intuitionistic logicians, this would be comparable to defining assertability conditions for classical operators inside intuitionistic logic, which Prawitz actually does in~\cite{DBLP:journals/Prawitz15}. Therefore, the inferentialist's main task is to create ecumenical connectives that, with the assertability conditions exposed in its inferential rules, can represent connectives accepted by one of the logical systems and rejected by the others inside the ecumenical environment. 


In this work we {\em do not} intend to provide a $\Bes$ for Prawitz' original system, but rather to proceed with a careful analysis of different aspects of $\Bes$ for logical systems where classical and intuitionistic notions of proof coexist in peace (\ie\ without collapsing). We define intuitionistic proofs through the usual semantic conditions of $\Bes$, which encapsulate the traditional idea of Brouwer, Heyting and Dummett that mathematical existence of an object can only be guaranteed by means of its construction \cite{Brouwer1981-BROBCL,Dummett1977-DUMEOI-2,Heyting1956-HEYIAI-2}. On the other hand, classical proofs are defined by taking into account an idea advanced by David Hilbert to justify non-constructive proof methods: the concept of consistency is conceptually prior to that of truth, and in order to prove the truth of a proposition in a given context it suffices to prove its consistency. In his words \cite{ConsistencyHillbert61554c58-c869-34f0-b322-2cff263d9ae0,Hilbert1979-HILMPL,Hilbert1900}:


\begin{quote}
You [Frege] write ``From the truth of the axioms it follows that they do not
contradict one another''. It interested me greatly to read this sentence of yours,
because in fact for as long as i have been thinking, writing and lecturing about
such things, i have always said the very opposite: if arbitrarily chosen axioms
together with everything which follows from them do not contradict one
another, then they are true, and the things defined by the axioms exist. For
me that is the criterion of truth and existence. 
\end{quote}

In order to properly represent this idea of classical proof in $\Bes$ we must change the semantic treatment given to the absurdity constant $\bot$, but it is shown that this can be done without issues.  As expected of an ecumenical framework, the resulting environment allows both notions of proof to coexist peacefully, retain their independence and fruitfully interact  -- so we are able, for instance, to analyze the semantic content of a proposition which is in part proved classically and in part proved intuitionistically in terms of interactions between the respective proof notions.


We start by proposing a {\em weak} version of ecumenical $\Bes$ (in Section~\ref{sec:wes}). This version relies on the concepts of local and global validity (see \eg~\cite{DBLP:journals/sLogica/Cobreros08}), enabling us to examine various aspects of both classical and intuitionistic validities. In particular we demonstrate that, while intuitionistic validity has the property of {\em monotonicity}, meaning that it remains unchanged under extensions, this characteristic does not hold true for classical formulas. This observation gives rise to the motto: 
\begin{quote}
{\em Classical proof plus monotonicity equals intuitionistic proof of double negation.}
\end{quote}
In Section~\ref{sec:ses} we will unwrap the full power of ecumenical $\Bes$, by showing a strong notion of validity. In Section~\ref{sec:ND} 
we define the ecumenical natural deduction system $\NE$, and prove its soundness and completeness  w.r.t. such (proof-theoretic) semantic. We then conclude with some ideas to push forward the $\Pts$ agenda for ecumenical systems.

\section{Base extension semantics}

\subsection{Basic definitions}
We will adopt Sandqvist's~\cite{Sandqvist2015IL} terminology, adapted to the ecumenical setting.

The  propositional {\em base language} is assumed to have a set $\At=\{p_1, p_2,\ldots\}$ of countably many atomic propositions, together with the unit $\bot$.
The set $\At\cup\{\bot\}$ will be denoted by $\Atb$, and its elements will be called {\em basic sentences}.

We use, as does Sandqvist, systems containing natural deduction rules over basic sentences for the semantical analysis, and we allow inference rules to discharge sets of basic hypotheses. Unlike Sandqvist, however, we allow the logical constant $\bot$ to be manipulated by the rules.

\begin{definition}[Atomic systems]\label{def:as}
An {\em atomic system} (a.k.a. a  {\em base}) $S$ is a (possibly empty) set of atomic rules of the form
\begin{prooftree}
\AxiomC{$\Gamma_{\At}\;\;[P_1]$}
\noLine
\UnaryInfC{$p_1$}
\AxiomC{$\ldots$}
\AxiomC{$\Gamma_{\At}\;\;[P_n]$}
\noLine
\UnaryInfC{$p_n$}
\TrinaryInfC{$p$}
\end{prooftree}
where $p_i,p\in\Atb$ and $\Gamma_{\At},P_i$ are (possibly empty) sets of basic sentences. The sequence $\langle  p_{1}, \ldots , p_{n} \rangle$ of premises of the rule can be empty -- in this case the rule is called an {\em atomic axiom}. 
\end{definition}

Labels will sometimes be written as the superscript of $[P_{i}]$ and to the right of a rule to denote that $P_{i}$ was discharged at that rule application. 

\begin{definition}[Extensions]\label{def:exts}
An atomic system $S'$ is an {\em extension} of an atomic system $S$ (written $S \subseteq S'$), if $S'$ results from adding a (possibly empty) set of atomic rules to $S$.   
\end{definition}


\begin{definition}[Deducibility]\label{def:deduc}
  The deducibility relation $\vdash_S$ coincides with the usual notion in the system of natural deduction consisting of just the rules in $S$, that is, $p_1,\ldots,p_n \vdash_S p$ iff there exists a deduction with the rules of $S$ whose conclusion is $p$ and whose set of undischarged premises is a subset of $\{p_{1}, ... , p_{n}\}$.
\end{definition}

\begin{definition}[Consistency]\label{def:deduc}
An atomic system $S$ is called {\em consistent} if $\nvdash_S \bot$.
\end{definition}

Notice that, due to how deducibility is defined, if  $p \in \{p_{1}, ... , p_{n}\}$ then $p_1, ..., p_{n} \vdash_{S} p$ holds regardless of the $S$, as a single occurrence of the assumption $p$ already counts as a ruleless deduction of $p$ from $p$.

Those basic definitions are usually combined with validity clauses to obtain semantics for intuitionistic logic. For instance, Sandqvist \cite{Sandqvist2015IL} defines atomic derivability using $\At$ instead of $\Atb$ and employs the following clauses:

\begin{definition}[Base-extension semantics]\label{def:wvalidity} Validity in $\Bes$ is defined as follows:
    
\begin{enumerate}
    \item $\Vdash_{S} p$ iff $\vdash_S a$, for  $p \in\At$;

\medskip

    \item $\Vdash_{S} (A \land B)$ iff $\Vdash_{S} A$ and $\Vdash_{S} B$;

\medskip      

     \item $\Vdash_{S} (A \to B)$ iff $A \Vdash_{S}  B$; 

\medskip       

      \item $\Vdash_{S} A \lor B$ iff  $ \forall S' (S \subseteq S') $ and all $p \in \At$, $A \Vdash_{S'} p$ and $B \Vdash_{S'} p$ implies $\Vdash_{S'}  p$;

\medskip

\item $\Vdash_{S} \bot$ iff $\Vdash_{S} p$ for all $p \in \At$

\medskip

\item For non-empty $\Gamma$, $\Gamma \Vdash_{S} A$ iff for all $S'$ such that $S \subseteq S'$ it holds that, if $\Vdash_{S'} B$ for all $B \in \Gamma$, then $\Vdash_{S'} A$;

\medskip

\item $\Gamma \Vdash_{\Bes} A$ iff $\Gamma \Vdash_{S} A$ for all $S$;

\end{enumerate}
\end{definition}

The idea being that $\Bes$ validity ($\Vdash_{\Bes}$) is defined in terms of $S$-validity ($\Vdash_{S}$) and $S$-validity is reducible to derivability in $S$ and its extensions, so we obtain a semantics defined exclusively in terms of {\em proofs} and {\em proof conditions}. In this sense, $\Bes$ not only aims at elucidating the meaning of a logical proof, but also providing means for its use as a basic concept of semantic analysis.

\subsection{On the semantics of $\bot$ in $\Bes$}\label{sec:bot}

The semantic conditions for $\bot$ are usually defined in $\Bes$ in one of two ways. The first one is to define atomic derivability by using $\At$ instead of $\Atb$ and employ the following semantic clause: 
\[
\Vdash_{S} \bot \qquad \mbox{iff} \qquad \Vdash_{S} p \ \mbox{holds for all} \ p \in \At
\]
Absurdity is treated as a logical constant and cannot figure in atomic bases, hence the switch from $\Atb$ to $\At$. This clause, used most notably by Sandqvist \cite{Sandqvist2015IL}, borrows from Dummett \cite{dummett1991logical} the idea of defining absurdity in terms of logical explosion, but restricts it to just atoms in order to make the definition inductive. 

The second one is to consider $\bot$ an atom and require all bases to contain atomic \textit{ex falso} rules concluding $p$ from $\bot$ for every $p \in \Atb$ \cite{piecha2015failure}. If for some $S$ we have $\Vdash_{S} \bot$ this now implies $\vdash_{S} \bot$; hence, for any $p \in \Atb$, the deduction of $\bot$ from empty premises in $S$ can be extended by the appropriate \textit{ex falso} rule to a deduction showing $\vdash_{S} p$ that also shows $\Vdash_{S} p$. Since $\Vdash_{S} p$ for all $p \in \Atb$ also implies $\Vdash_{S} \bot$ this means that ($\Vdash_{S} \bot$ iff $\Vdash_{S} p$ for all $p \in \Atb$) holds in all such $S$, so by restricting ourselves to these ``atomically explosive" bases we end up giving the same semantic treatment to $\bot$.

Although technically sound, such definitions are inadequate from a conceptual standpoint because they do not express the intended meaning of $\bot$. The absurdity constant is supposed to represent a statement that never holds. Logical explosion should be a \textit{consequence} of the definition of absurdity, not its definition. Ideally, $\bot \Vdash_{S} A$ should hold for all $S$ because no extension $S'$ of $S$ validates $\bot$ and the entailment holds vacuously, not because there are extensions validating $\bot$ that also validate the arbitrary formula $A$. 

Unfortunately, we cannot define $\bot$ through the natural semantic clause:
\[
\nVdash_{S} \bot \qquad \mbox{for all $S$.}
\]
If we define $\neg A$ as $(A \to \bot)$ and adopt the clause above, we can demonstrate that $\Vdash_{S} \neg \neg p$ holds for every $p \in \At$ in every $S$. This follows because $\Vdash_{S} \neg p$ would hold only if no extension of $S$ validates $p$. However, for every $S$ and every $p \in \At$, there is always some extension $S'$ of $S$ such that $\Vdash_{S'} p$ (for instance, the extension obtained by adding the atomic axiom with conclusion $p$ to $S$). Piecha et al.~\cite{piecha2015failure} observe that, aside from ruling out this definition of $\bot$, the fact that every atom is validated in some extension of every system ``\textit{might be considered a fault of validity-based proof-theoretic semantics, since it speaks against the intuitionistic idea of negation $\neg A$ as expressing that $A$ can never be verified}".

As will be shown in this paper, a technically sound and conceptually adequate treatment of $\bot$ has been thus far overlooked by the literature. Even though we cannot define $\bot$ in terms of unsatisfiability through a semantic clause, it is still possible to do it by simply requiring all bases to be consistent:
\[
\nvdash_{S} \bot \qquad \mbox{for all $S$.}
\]
While at first glance it may seem that the definitions are equivalent, this switch actually allows us to solve the issues with the semantic clause. Moreover, the restriction implements the desiderata of Piecha et al. and allows bases to contain no extensions validating some specific atoms. To see why, consider that if $S$ is a consistent base in which $p \vdash_{S} \bot$ holds then there can be no extension $S'$ of $S$ validating $p$, since if $\Vdash_{S'} p$ for some $S \subseteq S'$ we would have a deduction showing $\vdash_{S'} p$ which could be composed with the deduction showing $p \vdash_{S} \bot$ (which is also a deduction showing $p \vdash_{S'} \bot$, since all rules of $S$ are in $S'$ by the definition of extension) to obtain one showing $\vdash_{S'} \bot$, hence $S'$ would be inconsistent. It is easy to show that $p \vdash_{S} \bot$ now implies both $\Vdash_{S} \neg p$ and $\nVdash_{S} \neg  \neg p$. As such, by considering $\bot$ an atom but requiring it to always be underivable we allow bases to indirectly restrict their own admissible extensions by conveying information about which formulas will never be validates in their extensions.

The completeness proof presented in Section \ref{sec:ND} can easily be adapted to standard intuitionistic $\Bes$ by simply omitting all steps concerning classical formulas. Since the remaining steps are precisely the constructive ones, this yields a fully constructive proof of completeness for $\Bes$ with the consistency constraint and without any semantic clause for $\bot$ w.r.t. intuitionistic propositional logic. It should also be noted that many of the results we will demonstrate for ecumenical logics are essentially dependent on the consistency constraint, so the definition is optional in intuitionistic logic but essential in our ecumenical semantics.



\subsection{Ecumenical language}



Propositional formulas are built from basic sentences using the binary connectives $\to,\wedge,\vee$. The ecumenical language is defined as follows.

\begin{definition}
    The ecumenical language $\mathcal{L}$ is comprised of the following ecumenical formulas.

    \begin{enumerate}
        \item If $p \in \Atb$, them $p^i,p^c\in \el$;

\medskip
        
        \item If $A,B\in\el$, then $(A \land B)^i$, $(A \lor B)^i,(A \to B)^i\in\el$;

\medskip
        
        \item If $A,B\in \el$, then $(A \land B)^c$, $(A \lor B)^c,(A \to B)^c\in\el$;
    \end{enumerate}

\end{definition}



\begin{notation}

Parenthesis are omitted whenever no confusion ensues. 
For easing the notation,
$\neg A$, $A \to B$, $A \land B$ and $A \lor B$ will be abbreviations of $(A \to \bot)^i$, $(A \to B)^i$  $(A \land B)^i$,  $(A \lor B)^i$, respectively. Finally, we stipulate that if a formula $A$ is used without specification of its superscript, then it may be either $i$ or $c$. For instance, $A^{i} \land B^{c}$ should be read as a placeholder for $(A^{i} \land B^{c})^{i}$, but $A \land B$ should be read as a placeholder for $(A^{i} \land B^{i})^{i}$,   $(A^{c} \land B^{i})^{i}$,  $(A^{i} \land B^{c})^{i}$ and  $(A^{c} \land B^{c})^{i}$. 

\end{notation}



\begin{definition}\label{def:complexity}
The {\em complexity} of a formula with shape $A^{i}$ is the number of logical operators distinct from $\bot$ occurring on it. The {\em complexity} of a formula with shape $A^{c}$ is the complexity of $A^{i}$ plus 1. 
\end{definition}

Intuitively, a \textit{intuitionistic formula} $A^i$ holds whenever there exists an intuitionistic proof of $A$, and a \textit{classical formula} $A^c$ holds whenever there exists a classical proof of $A$. Since every formula of the usual language has both a classical and an intuitionistic version, every formula (except the absurdity constant) is allowed to have classical and intuitionistic proofs.

In this paper, we will focus on two definitions of semantic ecumenism,  called \textit{weak} and \textit{strong} ecumenical semantics, respectively. In both the semantics of classical proofs is given in terms of the consistency of formulas w.r.t. some atomic system, 
%
but the notions induce the classical behaviour in very different ways. 

It should be observed that the weak ecumenical semantics proposed next does not have a simple syntactic characterization, and its study is meant for semantic purposes only -- the goal is to explore deeply the ecumenical proof-theoretic behaviour. 
We present, in Section~\ref{sec:ND}, an interesting ecumenical natural deduction system which is sound and complete w.r.t. 
the strong ecumenical semantics described in Section~\ref{sec:ses}.



\section{Weak ecumenical semantics}\label{sec:wes}

We start by distinguishing between two notions of logical validity for every atomic system $S$: \textit{local} logical validity (represented by $\Vdash^{L}_{S}$) and \textit{global} logical validity (represented by $\Vdash^{G}_{S}$). The idea is that, in local validity, we only concern ourselves with what holds in a base and its extensions, but in global validity we also take into account what holds in extensions of a base's extensions. In many contexts both notions collapse, but in some there is reason to make a distinction. Some relationships between both will be studied in more depth later.

A weak ecumenical version of $\Bes$ is given by the definition below. As usual, we start by giving semantic conditions for basic sentences in atomic systems,  expanding them through semantic clauses.

\begin{definition}[Weak Validity]\label{def:wvalidity} Weak $S$-validity and weak validity are defined as follows.
    
\begin{enumerate}
    \item $\Vdash_{S}^{L} p^i$ iff $\vdash_S p$, for  $p \in\Atb$;

    \medskip    
    
    \item $\Vdash_{S}^{L} p^c$ iff $p \nvdash_S \bot$, for  $p \in \Atb$;

\medskip    
    
    \item $\Vdash_{S}^{L} A^c$ iff $A^i \nVdash^{L}_S \bot^{i}$, for $A \notin \Atb$;

 \medskip     
        
    \item $\Vdash_{S}^{L} (A \land B)^i$ iff $\Vdash_{S}^{L} A$ and $\Vdash_{S}^{L} B$;

\medskip      

     \item $\Vdash_{S}^{L} (A \to B)^i$ iff $A \Vdash_{S}^{G}  B$; 

\medskip       

      \item $\Vdash_{S}^{L} (A \lor B)^i$ iff  $ \forall S' (S \subseteq S') $ and all $p \in \Atb$, $A \Vdash_{S'}^{L} p^i$ and $B \Vdash_{S'}^{L} p^i$ implies $\Vdash_{S'}^{L}  p^i$;
      
\medskip  

\item For non-empty $\Gamma$, we have that $\Gamma \Vdash_{S}^{L} A$ iff for all $S'$ such that $S \subseteq S'$ it holds that, if $\Vdash_{S'}^{L} B$ for all $B \in \Gamma$, then $\Vdash_{S'}^{L} A$;

\medskip

\item $\Gamma \Vdash_{S}^{G}  A$ iff for all $S'$ such that $S \subseteq S'$ we have that, if for all $S''$ such that $S' \subseteq S''$ it holds that $\Vdash_{S''}^{L} B$ for all $B \in \Gamma$, then for all $S''$ such that $S' \subseteq S''$ it also holds that $\Vdash_{S''}^{L} A$;

\medskip






      \item $\Gamma \Vdash A $ iff $\Gamma \Vdash_{S}^{G} A$ for all $S$.
\end{enumerate}
\end{definition}


There are important bits of information to unpack in those clauses. First, notice that there is one clause for classical proofs of atoms and one for classical proofs of non-atomic formulas, but both are defined in terms of consistency proofs for the formula's immediate subformula. Second, while Clause 7 is the same as Sandqvist's clause ``(Inf)", Clause 8  is slightly more complex; the former is our definition of \textit{local} validity, the later of \textit{global} validity ~\cite{DBLP:journals/sLogica/Cobreros08}. This distinction is redundant in usual intuitionistic semantics, but essential in the weak version of ecumenical $\Bes$. Finally, notice that when defining the semantic clause for disjunction we use local entailment instead of global, which is done to show that some desirable semantic properties follows from this weak definition. By using the global notion instead we would obtain an alternative presentation of what we later define as strong ecumenical semantics.




The following result easily follow from Definition~\ref{def:wvalidity} and the requirement of atomic systems to be consistent.

\begin{lemma}\label{lemma:bot}
$\nVdash^{L}_{S} \bot^i$ and  $\nVdash^{L}_{S} \bot^c$ for all $S$.
\end{lemma}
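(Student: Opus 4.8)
The plan is to unfold the relevant clauses of Definition~\ref{def:wvalidity} for the basic sentence $\bot$, which lies in $\Atb$, and then invoke the standing requirement that every atomic system be consistent, together with the reflexivity of $\vdash_S$ noted after Definition~\ref{def:deduc}.

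For the intuitionistic case, since $\bot \in \Atb$, Clause~1 applies directly and gives $\Vdash^{L}_{S} \bot^i$ iff $\vdash_S \bot$. But consistency of $S$ is by definition exactly $\nvdash_S \bot$, so $\Vdash^{L}_{S} \bot^i$ fails for every $S$. For the classical case, again because $\bot \in \Atb$, it is Clause~2 — not Clause~3, which is reserved for formulas $A \notin \Atb$ — that governs $\bot^c$, yielding $\Vdash^{L}_{S} \bot^c$ iff $\bot \nvdash_S \bot$. However, $\bot \vdash_S \bot$ holds in every $S$: a lone occurrence of the hypothesis $\bot$ is already a rule-free deduction of $\bot$ from $\bot$, as observed in the remark following Definition~\ref{def:deduc}. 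Hence $\Vdash^{L}_{S} \bot^c$ also fails for every $S$.

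There is essentially no obstacle in this argument. The only points demanding a moment's attention are selecting the correct clause for $\bot^c$ (Clause~2, since $\bot$ is atomic, rather than the non-atomic Clause~3) and recalling that it is precisely the reflexivity of $\vdash_S$ that rules out $\bot \nvdash_S \bot$. Both are immediate from the definitions already in place, so the lemma follows directly once the clauses are instantiated at $\bot$.
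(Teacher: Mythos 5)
Your argument is correct and follows exactly the same route as the paper's proof: Clause~1 plus the consistency requirement handles $\bot^i$, and Clause~2 plus the reflexive deduction $\bot \vdash_S \bot$ handles $\bot^c$. You are also right that Clause~2 (not Clause~3) is the one governing $\bot^c$, since $\bot \in \Atb$.
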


\begin{proof}
For any $S$, since $S$ must be consistent we have $\nvdash_{S} \bot$, so $\nVdash^{L}_{S} \bot^i$. Moreover, since $\bot \vdash_{S} \bot$ holds by the definition of deducibility we have  $\nVdash^{L}_{S} \bot^c$.
\end{proof}


Due to this result, from now on $\bot$ will be used as an abbreviation of $\bot^i$ ($\equiv\bot^c$) in semantic contexts.



Although expected, since intuitionistic provability implies classical provability, the next two results are only possible due to
the change from Sandqvist's clause for $\bot$ to the consistency requirement.

\begin{theorem}\label{lemma:intimpliesclasatom}
$p^{i} \Vdash_{S}^{L} p^{c}$ for any $p \in \Atb$.
\end{theorem}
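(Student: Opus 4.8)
The plan is to unfold the definitions until the claim reduces to a purely proof-theoretic statement about composing natural deduction derivations, and then to invoke the blanket requirement that every base be consistent.

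First I would apply Clause 7 of Definition~\ref{def:wvalidity} (the clause for non-empty $\Gamma$, here $\Gamma = \{p^i\}$): the assertion $p^i \Vdash_S^L p^c$ says precisely that for every extension $S'$ of $S$, if $\Vdash_{S'}^L p^i$ then $\Vdash_{S'}^L p^c$. Since $p \in \Atb$, Clauses 1 and 2 let me rewrite this as: for every $S'$ with $S \subseteq S'$, if $\vdash_{S'} p$ then $p \nvdash_{S'} \bot$. So it suffices to fix an arbitrary $S' \supseteq S$, assume $\vdash_{S'} p$, and derive $p \nvdash_{S'} \bot$.

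I would argue by contradiction. Suppose $p \vdash_{S'} \bot$. By Definition~\ref{def:deduc}, there is a derivation $\mathcal{D}_1$ in $S'$ with conclusion $p$ and no undischarged premises (witnessing $\vdash_{S'} p$), and a derivation $\mathcal{D}_2$ in $S'$ with conclusion $\bot$ whose undischarged premises form a subset of $\{p\}$ (witnessing $p \vdash_{S'} \bot$). Grafting a copy of $\mathcal{D}_1$ onto each undischarged occurrence of $p$ in $\mathcal{D}_2$ — the standard substitution of a closed derivation for an open assumption, which is legitimate because $\mathcal{D}_1$ and $\mathcal{D}_2$ use only rules of the single system $S'$ — yields a derivation in $S'$ with conclusion $\bot$ and no undischarged premises, i.e. $\vdash_{S'} \bot$. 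This contradicts the standing assumption that every atomic system, and in particular $S'$, is consistent. Hence $p \nvdash_{S'} \bot$, that is, $\Vdash_{S'}^L p^c$.

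I do not expect a genuine obstacle here; the proof is a one-line composition argument plus the consistency constraint (which is exactly the feature that Section~\ref{sec:bot} argues makes such results available, in contrast with Sandqvist's $\bot$-clause). The only point deserving a remark is the degenerate case $p = \bot$: then by Lemma~\ref{lemma:bot} the antecedent $\vdash_{S'}\bot$ never holds, so the implication is vacuously true, consistently with the general argument above.
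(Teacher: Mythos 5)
Your proof is correct and follows essentially the same route as the paper's: unfold Clause 7 together with Clauses 1 and 2, then compose the derivation of $p$ with the derivation of $\bot$ from $p$ to contradict the consistency requirement on $S'$. The extra remark about the degenerate case $p = \bot$ is harmless but not needed, since the general composition argument already covers it.
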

\begin{proof} 
%
%
%

Assume $\Vdash_{S'}^{L} p^{i}$ for some $S \subseteq S'$. Then $\vdash_{S'} p$. Now suppose $p \vdash_{S'} \bot$. Then by composing both deductions we have $\vdash_{S'} \bot$, contradicting the consistency requirement. So $p \nvdash_{S'} \bot$, hence $\Vdash^{L}_{S'} p^{c}$. Since $S'$ is an arbitrary extension of $S$, we have $p^i \Vdash_{S}^{L} p^c$ by Clause 7.  
\end{proof}

\begin{theorem}\label{lemma:intimpliesclas}
$A^i \Vdash_S^{L} A^c$ for any $A \notin \Atb$.
\end{theorem}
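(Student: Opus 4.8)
The plan is to mimic the structure of the proof of Theorem~\ref{lemma:intimpliesclasatom}, but now routing everything through Clause~3 (the clause for classical proofs of non-atomic formulas) together with Lemma~\ref{lemma:bot}, rather than through the consistency requirement directly. By Clause~7, establishing $A^i \Vdash_S^{L} A^c$ amounts to fixing an arbitrary extension $S'$ of $S$, assuming $\Vdash_{S'}^{L} A^i$, and showing $\Vdash_{S'}^{L} A^c$.

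Since $A \notin \Atb$, Clause~3 tells us that $\Vdash_{S'}^{L} A^c$ holds precisely when $A^i \nVdash^{L}_{S'} \bot^i$. So the heart of the argument is to derive this non-entailment from the assumption $\Vdash_{S'}^{L} A^i$. I would argue by contradiction: suppose $A^i \Vdash^{L}_{S'} \bot^i$. Unfolding Clause~7 (with $\Gamma = \{A^i\}$ and base $S'$), this says that for every $S''$ with $S' \subseteq S''$, $\Vdash_{S''}^{L} A^i$ implies $\Vdash_{S''}^{L} \bot^i$. Instantiating at $S'' = S'$ (a legitimate extension of itself) and using the standing hypothesis $\Vdash_{S'}^{L} A^i$, we would get $\Vdash_{S'}^{L} \bot^i$, which is impossible by Lemma~\ref{lemma:bot}. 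Hence $A^i \nVdash^{L}_{S'} \bot^i$, and therefore $\Vdash_{S'}^{L} A^c$.

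Since $S'$ was an arbitrary extension of $S$, Clause~7 then yields $A^i \Vdash_S^{L} A^c$, completing the proof.

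There is no real obstacle here: the only subtlety is being careful with the nested universal quantifiers in the definitions of local validity and local entailment, and noticing that the relevant instantiation in Clause~7 is simply $S'' = S'$. It is worth remarking, though, that whereas Theorem~\ref{lemma:intimpliesclasatom} invokes consistency of bases explicitly, here the consistency constraint enters only indirectly, through Lemma~\ref{lemma:bot} — which is exactly what makes the non-atomic case go through just as smoothly as the atomic one.
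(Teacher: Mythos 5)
Your proposal is correct and follows essentially the same route as the paper: fix an arbitrary extension $S'$ with $\Vdash_{S'}^{L} A^i$, suppose $A^i \Vdash_{S'}^{L} \bot$ for contradiction, instantiate Clause~7 at $S''=S'$ to get $\Vdash_{S'}^{L}\bot$, and refute this. The only cosmetic difference is that you cite Lemma~\ref{lemma:bot} for the final contradiction, whereas the paper unfolds that step directly via Clause~1 and the consistency requirement --- the same content either way.
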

\begin{proof}

Assume  $\Vdash_{S'}^{L} A^i$ for some $S \subseteq S'$ and suppose that $A^i \Vdash_{S'}^{L} \bot$. Then by Clause 7 of Definition~\ref{def:wvalidity} we have $\Vdash_{S'}^{L} \bot$, and then $\vdash_{S'} \bot$ by Clause 1, which is a contradiction. Thus, $A^i \nVdash_{S'}^{L} \bot$, and so $\Vdash_{S'}^{L} A^c$. Since $S'$ is an arbitrary extension of $S$, we have $A^i \Vdash_{S}^{L} A^c$ by Clause 7.  
\end{proof}

If Sandqvist's definition was used, from  $\Vdash_{S'}^{L} A^i$ and  $A^i \Vdash_{S'}^{L} \bot$ we could get $\Vdash_{S'}^{L} p^i$ for arbitrary $p \in \Atb$, but it would not be the case that $A^i \nVdash_{S'}^{L} \bot$. The same would happen with the proof for atoms if we allowed $\bot$ to occur in atomic bases and required all bases to contain all instances of the atomic \textit{ex falso}.

\subsection{Monotonicity}
It is well known that $\Bes$ validity in intuitionistic logic is {\em monotonic}, in the sense that it is stable under base extensions. As it turns out, this is not the case in the ecumenical setting, as discussed next.
\begin{definition}[Monotonicity] A formula $A$ is called {\em $S$-monotonic} with respect to an atomic system $S$ if, for all $S\subseteq S'$,  $\Vdash_{S}^{L} A$ implies $\Vdash_{S'}^{L} A$. $A$ is called {\em monotonic} if it is $S$-monotonic for any atomic system $S$.
\end{definition}

Some parts of Clause 8 come for free in the presence of monotonicity (as shown next), 
but they must be explicitly stated on the lack of it. As such, the original notion of logical consequence provides only a weak kind of validation for non-monotonic formulas, and thus would indirectly treat classical and intuitionistic formulas {\em very differently}.

\begin{theorem}\label{thm:monotoniccollapse}
     If $S$-monotonicity holds for $A$ and all formulas in $\Gamma$, then $\Gamma \Vdash_{S}^{L} A$ iff $ \Gamma \Vdash_{S}^{G} A$.
\end{theorem}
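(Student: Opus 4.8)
The plan is to prove both implications directly from Clauses~7 and~8 of Definition~\ref{def:wvalidity}, by carefully shuffling the nested extension quantifiers and invoking the monotonicity hypothesis only where the quantifier structure actually forces it.

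For the implication $\Gamma \Vdash_{S}^{L} A \Rightarrow \Gamma \Vdash_{S}^{G} A$ I would argue straight from the definitions; when $\Gamma$ is non-empty this direction in fact needs no monotonicity at all. Unfolding Clause~8, fix an arbitrary $S^{*} \supseteq S$ and assume that $\Vdash_{S''}^{L} B$ holds for every $B \in \Gamma$ and every $S'' \supseteq S^{*}$; the goal is $\Vdash_{S''}^{L} A$ for every such $S''$. Given one such $S''$, note that $S \subseteq S''$ and $\Vdash_{S''}^{L} B$ for all $B \in \Gamma$, so Clause~7 instantiated at the extension $S''$ of $S$ yields $\Vdash_{S''}^{L} A$, as required. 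In the degenerate case $\Gamma = \emptyset$, the clause reads $\Gamma \Vdash_{S}^{G} A$ iff $\Vdash_{S''}^{L} A$ for all $S'' \supseteq S$, and here $S$-monotonicity of $A$ is precisely what upgrades $\Vdash_{S}^{L} A$ to this statement.

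For the converse $\Gamma \Vdash_{S}^{G} A \Rightarrow \Gamma \Vdash_{S}^{L} A$ I would use the monotonicity of the formulas in $\Gamma$. Unfolding Clause~7, fix $S' \supseteq S$ with $\Vdash_{S'}^{L} B$ for all $B \in \Gamma$; the goal is $\Vdash_{S'}^{L} A$. By monotonicity of each $B \in \Gamma$ we obtain $\Vdash_{S''}^{L} B$ for all $B \in \Gamma$ and all $S'' \supseteq S'$ — i.e. the inner antecedent of Clause~8 is satisfied at $S'$. Instantiating the outer quantifier of Clause~8 at this very $S'$ then delivers $\Vdash_{S''}^{L} A$ for all $S'' \supseteq S'$, and taking $S'' := S'$ gives $\Vdash_{S'}^{L} A$. (For $\Gamma = \emptyset$ the hypothesis $\Gamma \Vdash_{S}^{G} A$ unfolds to $\Vdash_{S''}^{L} A$ for all $S'' \supseteq S$, so $S'' := S$ gives $\Vdash_{S}^{L} A$ with no appeal to monotonicity.)

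The main obstacle is not mathematical depth but bookkeeping: in Clause~8 one must instantiate the outer quantifier at exactly the base $S'$ reached inside Clause~7, not at $S$ itself — doing the latter would wrongly demand that $\Gamma$ hold in every extension of $S$. A secondary subtlety worth flagging is that in the converse direction monotonicity of the context formulas is used at the extension $S'$ one lands in, so the hypothesis is really being read as monotonicity persisting along extensions of $S$; for the formulas to which the theorem is actually applied — in particular intuitionistic formulas, whose monotonicity is established below — this presents no problem.
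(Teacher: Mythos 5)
Your proof is correct and follows essentially the same quantifier-unfolding argument as the paper's, including the same use of monotonicity of the context formulas in the global-to-local direction and the same (correctly flagged) reading of $S$-monotonicity as persisting along extensions of $S$. Your forward direction is in fact slightly sharper: for non-empty $\Gamma$ it needs no monotonicity at all --- which is precisely the content of the paper's subsequent Lemma~\ref{lemma:localimpliesglobal} --- whereas the paper's proof of this theorem routes that direction through monotonicity of $A$.
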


\begin{proof}
The result is trivial if $\Gamma=\emptyset$, so we will assume $\Gamma$ non-empty.

\medskip

($\Rightarrow$) Suppose $\Gamma \Vdash_{S}^{L} A$. Then, by Clause 7, for every $S'$ such that $S \subseteq S'$ we have that, if $\Vdash_{S'}^{L} B$ for every $B \in \Gamma$ then $\Vdash_{S'}^{L} A$. Now pick any such $S'$ in which, for all $S''$ such that $S' \subseteq S''$, we have $\Vdash_{S''}^{L} B$ for all $B \in \Gamma$. Since $S' \subseteq S'$, we have $\Vdash_{S'}^{L} B$ for every $B \in \Gamma$, so we conclude $\Vdash_{S'}^{L} A$. But by $S$-monotonicity we also have that $\Vdash_{S'}^{L} A$ implies $\Vdash_{S''}^{L} A$ for all $S''$ such that $S' \subseteq S''$ and, since $S'$ was an arbitrary extension of $S$ satisfying the antecedent of the second part of Clause 7, we have $\Gamma \Vdash_{S}^{G} A$.

\medskip

($\Leftarrow$) Assume $\Gamma \Vdash_{S}^{G} A$. Then, by Clause 7, for every $S'$ such that $S \subseteq S'$ we have that, if $\Vdash_{S''}^{L} B$ for every $B \in \Gamma$ and for all $S''$ such that $S' \subseteq S''$ then $\Vdash_{S''}^{L} A$ for all such $S''$ as well. Now let $S'$ be any extension of $S$ such that $ \Vdash_{S'}^{L} B$ for all $B \in \Gamma$. By monotonicity, for all formulas $B \in \Gamma$ we have that, if $ \Vdash_{S'}^{L} B$, then $ \Vdash_{S''}^{L} B$ for every $S''$ such that $S' \subseteq S''$. Taken together with our assumption, this yields $ \Vdash_{S''}^{L} A$ for all such $S''$. In particular, since $S' \subseteq S'$ we have $ \Vdash_{S'}^{L} A$ and, since $S'$ was an arbitrary extension of $S$ satisfying the antecedent of the second part of Clause 7, we have $\Gamma \Vdash_{S}^{L} A$.  
\end{proof}



Even though intuitionistic atoms and connectives are monotonic, this is not the case in the classical setting. 

\begin{theorem}\label{thm:monotonicity}
    Every formula  containing only intuitionistic subformulas is monotonic. Classical atoms are not monotonic.
\end{theorem}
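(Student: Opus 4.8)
The statement has two parts. For the first part, I would argue by induction on the complexity of a formula $A$ built using only intuitionistic superscripts, showing $S$-monotonicity for every $S$. The base case is atoms $p^i$: if $\Vdash^L_S p^i$ then $\vdash_S p$, and since every rule of $S$ is a rule of any extension $S'$, the same deduction witnesses $\vdash_{S'} p$, hence $\Vdash^L_{S'} p^i$. For the inductive step I would go clause by clause in Definition~\ref{def:wvalidity}. Conjunction $(A\wedge B)^i$ is immediate from the induction hypothesis applied to the two conjuncts. Disjunction $(A\vee B)^i$ is monotonic essentially by construction, since Clause~6 already quantifies universally over all extensions $S'$ of $S$: if the condition holds for all extensions of $S$, it holds a fortiori for all extensions of any $S'\supseteq S$. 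Implication $(A\to B)^i$ reduces to $A\Vdash^G_S B$, and here I would note that the global-consequence relation $\Vdash^G_S$ is itself upward-closed in $S$ by its own definition (Clause~8 quantifies over all $S'\supseteq S$), so $A\Vdash^G_S B$ implies $A\Vdash^G_{S'} B$ for any $S'\supseteq S$, which is exactly $\Vdash^L_{S'}(A\to B)^i$. The only subtlety is that the notation ``intuitionistic subformulas'' must include $\bot$ occurring as a subformula (e.g. in $\neg A = (A\to\bot)^i$), but Lemma~\ref{lemma:bot} shows $\nVdash^L_S\bot$ for every $S$, so monotonicity for $\bot$ holds vacuously; this handles negation as a special case of implication.

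For the second part I need a concrete counterexample: an atom $p$ and bases $S\subseteq S'$ with $\Vdash^L_S p^c$ but $\nVdash^L_{S'} p^c$. By Clause~2, $\Vdash^L_S p^c$ means $p\nvdash_S\bot$, and $\nVdash^L_{S'} p^c$ means $p\vdash_{S'}\bot$. So I would take $S=\emptyset$ (or any consistent base in which $p$ does not entail $\bot$), for which trivially $p\nvdash_\emptyset\bot$, hence $\Vdash^L_\emptyset p^c$. Then I would let $S'$ be the extension obtained by adding the single atomic axiom with conclusion $\bot$ discharging the hypothesis $p$ — i.e. the rule $\dfrac{[p]}{\bot}$ — so that $p\vdash_{S'}\bot$, giving $\nVdash^L_{S'} p^c$. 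One must check that $S'$ is still consistent, i.e. $\nvdash_{S'}\bot$: since the only rule of $S'$ requires the undischarged-at-some-point hypothesis $p$ and there is no way to produce $p$ from no premises in $S'$, there is no closed deduction of $\bot$, so $S'$ is a legitimate (consistent) base. This exhibits a classical atom that is not monotonic.

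The main obstacle is not any single hard argument but rather the bookkeeping in the implication case: one must be careful that the ``monotonicity for free'' in Clauses~6 and~8 is genuinely built into the weak semantics, and that the global-consequence clause used to interpret $(A\to B)^i$ is upward-closed — this is the one place where the local/global distinction does real work, and it is worth spelling out explicitly rather than treating as obvious. For the counterexample, the only thing to be careful about is confirming that the extension $S'$ remains consistent, so that it is an admissible base at all; this is where the paper's standing requirement that all bases be consistent interacts with the construction, and it is precisely this interaction that makes classical atoms non-monotonic in the first place (an inconsistency can be ``introduced conditionally'' by a later extension).
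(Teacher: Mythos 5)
Your proof follows essentially the same route as the paper: the intuitionistic half is the standard induction on complexity (with the clauses for disjunction and implication already quantifying over all extensions, so only conjunction genuinely needs the induction hypothesis), and the classical half uses exactly the paper's counterexample, namely $S=\emptyset$ extended by the rule deriving $\bot$ from $p$. One caution: that rule must have $p$ as an \emph{undischarged} premise --- your phrase ``atomic axiom with conclusion $\bot$ discharging the hypothesis $p$'' would, read literally, give a closed derivation of $\bot$ and make $S'$ inconsistent, but your own consistency check makes clear you intend the non-discharging one-premise rule, which is the correct choice.
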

\begin{proof}
The result for formulas containing only intuitionistic subformulas is easily proven by induction on the complexity of formulas in the same way as in~\cite[Lemma 3.2. (a)]{Sandqvist2015IL}, where the induction hypothesis is only needed for conjunction (the case for implication holds directly from the definition of general validity).

Regarding classical atoms, for $S=\emptyset$ we have that $p \nvdash_S \bot$ for every $p\in\At$. But if $S'$ is the atomic system containing only the rule obtaining $\bot$ from $p$, $S'$ is consistent and $p \vdash_{S'} \bot$.
Hence $\Vdash_{S}^{L} p^c$ and $S \subseteq S'$,  but $\nVdash_{S'}^{L} p^c$. More generally, if $p$ does not occur in the rules of $S$ then $p \nvdash_{S} \bot$ and by adding a rule obtaining $\bot$ from $p$ to $S$ we have an extension $S'$ guaranteed to be consistent, so whenever $p$ does not appear on the rules of $S$ we have $\Vdash_{S}^{L} p^c$ but $\nVdash_{S'}^{L} p^c$ for some $S \subseteq S'$.

\end{proof}
In short, for intuitionistic formulas it is irrelevant whether local or global notions of validity is used. For ecumenical formulas containing classical subformulas, however, this choice makes an enormous difference, as illustrated in Section~\ref{sec:eb}.

\subsection{Basic lemmata}

Before proceeding, we shall briefly present some lemmas which will be handy in what comes next.

We start by showing that local validity implies global validity only for non-empty contexts, but global validity implies local validity only when the context is empty. A counter-example for Lemma~\ref{lemma:globaltheoremimplieslocaltheorem} with non-empty contexts is given in Theorem~\ref{thm:Acai}, whereas one for Lemma \ref{lemma:localimpliesglobal} with empty contexts can be obtained by putting $A = p^{c}$ and remembering that classical atoms are not always monotonic.

\begin{lemma}\label{lemma:localimpliesglobal}
 For non-empty $\Gamma$,   $\Gamma \Vdash^L_S A$ implies $\Gamma \Vdash^G_S A$.
\end{lemma}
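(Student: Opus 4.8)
The plan is to unwind Clause~8 of Definition~\ref{def:wvalidity} directly, using nothing more than the transitivity of the extension relation from Definition~\ref{def:exts} together with Clause~7. Assume $\Gamma \Vdash^L_S A$ with $\Gamma$ non-empty; the goal is $\Gamma \Vdash^G_S A$. To this end I would fix an arbitrary extension $S'$ of $S$ and suppose the antecedent of the clause for global consequence holds for it, namely that $\Vdash^L_{S''} B$ for all $B \in \Gamma$ and all $S''$ with $S' \subseteq S''$; it then remains to show $\Vdash^L_{S''} A$ for every such $S''$.

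So fix $S''$ with $S' \subseteq S''$. Since $S \subseteq S' \subseteq S''$, transitivity of $\subseteq$ gives $S \subseteq S''$, so $S''$ is itself an extension of $S$. Moreover $S''$ is among the bases ranged over by the supposed antecedent (because $S' \subseteq S''$), so $\Vdash^L_{S''} B$ for all $B \in \Gamma$. Applying Clause~7 of the hypothesis $\Gamma \Vdash^L_S A$, taking $S''$ as the relevant extension of $S$, we obtain $\Vdash^L_{S''} A$. As $S''$ was an arbitrary extension of $S'$, the consequent of Clause~8 holds for $S'$, and as $S'$ was an arbitrary extension of $S$, we conclude $\Gamma \Vdash^G_S A$.

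I do not expect a genuine obstacle here; the argument is essentially bookkeeping with the nested extension quantifiers. The one point that deserves care is that the inner universal quantifier occurring inside the antecedent of Clause~8 ranges over \emph{all} $S'' \supseteq S'$, the target base $S''$ included, which is precisely what lets us discharge the local-consequence hypothesis at each $S''$. It is also worth noting explicitly that non-emptiness of $\Gamma$ is used only so that both sides of the implication are governed by Clauses~7 and~8 rather than by the degenerate empty-context case; without it the implication fails, as witnessed by $A = p^{c}$ in view of the non-monotonicity of classical atoms established in Theorem~\ref{thm:monotonicity}.
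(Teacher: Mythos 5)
Your proof is correct and follows essentially the same route as the paper's: fix an arbitrary $S' \supseteq S$ satisfying the antecedent of Clause~8, fix an arbitrary $S'' \supseteq S'$, use transitivity to see $S \subseteq S''$, and apply the local-consequence hypothesis at $S''$. The remarks on the role of non-emptiness and the counterexample $A = p^{c}$ match the paper's surrounding discussion as well.
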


\begin{proof}
    Let $\Gamma \Vdash^L_S A$. Then, by the definition of local consequence, for every $S'$ such that $S \subseteq S'$ we have that if $\Vdash^L_{S'} B$ for all $B \in \Gamma$ then $\Vdash^L_{S'} A$. Now, let $S'$ be any extension of $S$ in which for all $S''$ such that $S' \subseteq S''$, we have $\Vdash^L_{S''} B$ for all $B \in \Gamma$. Consider any such $S''$. Since $\Gamma \Vdash^L_{S} A$ holds and $S''$ is also an extension of $S$ (by transitivity of extension), we have that $\Vdash^L_{S''} B$ for all $B \in \Gamma$ implies $\Vdash^L_{S''} A$. Since $S''$ is an extension of $S'$, by definition we have $\Vdash^L_{S''} B$ for all $B \in \Gamma$, and thus we have $\Vdash^L_{S''} A$. But this holds for arbitrary $S''$ extending $S'$, and so for every $S''$ we have $\Vdash^L_{S''} A$. Since $S'$ is an arbitrary extension of $S$ satisfying the antecedent of Clause 8 of the Definition \ref{def:wvalidity}, we conclude $\Gamma \Vdash^G_S A$. 
    
\end{proof}


\begin{lemma}\label{lemma:globaltheoremimplieslocaltheorem}
    $\Vdash_{S}^{G} A$ implies $\Vdash_{S}^{L} A$. 
\end{lemma}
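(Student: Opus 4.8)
The plan is simply to unfold Clause~8 of Definition~\ref{def:wvalidity} in the special case $\Gamma = \emptyset$ and then instantiate the quantified extensions at $S$ itself.

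First I would observe that, for the empty context, the inner hypothesis occurring in Clause~8 — ``for all $S''$ with $S' \subseteq S''$ it holds that $\Vdash_{S''}^{L} B$ for all $B \in \Gamma$'' — is vacuously true, since there is no $B \in \emptyset$ to verify. Hence $\Vdash_{S}^{G} A$ reduces to the plain statement: for every $S'$ with $S \subseteq S'$ and every $S''$ with $S' \subseteq S''$, we have $\Vdash_{S''}^{L} A$.

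Then, using reflexivity of the extension relation (which is immediate from Definition~\ref{def:exts}, adding the empty set of rules), I would take $S' := S$ and subsequently $S'' := S$; both are legitimate since $S \subseteq S$. This instantiation yields $\Vdash_{S}^{L} A$ directly, completing the argument.

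I do not expect any real obstacle here: the only subtlety is noticing that the nested antecedent of Clause~8 collapses to a vacuous truth precisely when $\Gamma$ is empty, after which the conclusion is a one-line instantiation. It is worth noting — and the excerpt already flags this — that the same reasoning genuinely fails for non-empty $\Gamma$, since then the antecedent is no longer automatically satisfied; the counterexample is deferred to Theorem~\ref{thm:Acai}.
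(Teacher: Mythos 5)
Your proposal is correct and follows essentially the same route as the paper: unfold Clause~8 with the empty context (whose antecedent is vacuously satisfied) and instantiate $S'' = S' = S$ using reflexivity of the extension relation. The paper's proof is exactly this one-line instantiation, so there is nothing to add.
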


\begin{proof}
    Assume $\Vdash_{S}^{G} A$. Then for every $S \subseteq S'$, it holds  that $\Vdash^{L}_{S''} A$, for every $S' \subseteq S''$. By putting $S'' = S' = S$ we conclude $\Vdash^{L}_{S} A$.  

\end{proof}

\begin{lemma}\label{lemma:newsimplification}
  If  $\nVdash_{S'}^{L} A$ for all $S \subseteq S'$ then $\Vdash_{S'}^{L} \neg A$ for all $S \subseteq S'$.
\end{lemma}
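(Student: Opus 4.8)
The plan is to unwind the abbreviation $\neg A = (A \to \bot)^i$ together with Clauses~5 and~8 of Definition~\ref{def:wvalidity}, and observe that the relevant entailment holds \emph{vacuously}: the hypothesis forbids $A$ from being locally valid in any extension of $S$, which is exactly the antecedent that would need to be met for the global conditional defining $\neg A$ to have any force.

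Concretely, I would fix an arbitrary $S'$ with $S \subseteq S'$ and aim to show $\Vdash^L_{S'} \neg A$, i.e.\ $\Vdash^L_{S'} (A \to \bot)^i$. By Clause~5 this amounts to $A \Vdash^G_{S'} \bot$. Now I would spell out Clause~8 for this case: $A \Vdash^G_{S'} \bot$ means that for every $S''$ with $S' \subseteq S''$, \emph{if} $\Vdash^L_{S'''} A$ for all $S'''$ with $S'' \subseteq S'''$, \emph{then} $\Vdash^L_{S'''} \bot$ for all such $S'''$. So let $S''$ be any extension of $S'$ and suppose, for contradiction, that its antecedent holds; instantiating $S''' := S''$ gives $\Vdash^L_{S''} A$. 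But $S \subseteq S' \subseteq S''$, so the hypothesis of the lemma (applied to the extension $S''$ of $S$) yields $\nVdash^L_{S''} A$, a contradiction. Hence the antecedent fails for every such $S''$, the conditional in Clause~8 is satisfied vacuously, and therefore $A \Vdash^G_{S'} \bot$, that is, $\Vdash^L_{S'} \neg A$. Since $S'$ was an arbitrary extension of $S$, this gives $\Vdash^L_{S'} \neg A$ for all $S \subseteq S'$, as required.

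There is no genuine obstacle here beyond bookkeeping: the only point demanding care is the correct nesting of the three layers of quantifiers in the global-validity clause (Clause~8) and noticing that it suffices to use the $S''' = S''$ instance to reach the contradiction. I would also remark that this argument is precisely where the consistency constraint on bases pays off indirectly — it is what makes the clause $\Vdash^L_S \bot$ never hold (Lemma~\ref{lemma:bot}), so that ``$A$ is never verified'' genuinely collapses to ``$\neg A$ is verified everywhere'', recovering the desideratum of Piecha et al.\ discussed in Section~\ref{sec:bot}.
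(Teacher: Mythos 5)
Your proof is correct and follows essentially the same route as the paper's: both arguments use transitivity of the extension relation to show that $A$ fails locally in every extension of the chosen $S'$, so that the global entailment $A \Vdash^{G}_{S'} \bot$ underlying $\Vdash^{L}_{S'} \neg A$ is satisfied vacuously. Your version merely spells out the quantifier bookkeeping in Clause~8 (and the $S''' = S''$ instantiation) more explicitly than the paper does.
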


\begin{proof}
    Assume $\nVdash_{S'}^{L} A$ for all $S \subseteq S'$. Pick any such extension $S'$. For any $S' \subseteq S''$ by transitivity of the extension relation we have $S \subseteq S''$, so $\nVdash_{S''}^{L} A$. But then clearly $A^{i} \Vdash^{G}_{S'} \bot$ is satisfied vacuously for any $S'$, so  $\Vdash_{S'}^{L} \neg A$ for all $S \subseteq S'$.     

\end{proof}

The following is a form of global {\em modus ponens}. 
\begin{lemma}\label{lemma:globalmodusponens}
    $\Vdash_{S}^{G} A$ and $A \Vdash_{S}^{G} B$ implies $\Vdash_{S}^{G} B$.
\end{lemma}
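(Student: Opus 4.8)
The plan is to unfold the definition of global consequence (Clause 8) applied to the two hypotheses and then thread the quantifiers together. First I would take an arbitrary extension $S'$ with $S\subseteq S'$ and an arbitrary $S''$ with $S'\subseteq S''$; the goal is to show $\Vdash^{L}_{S''}B$, since establishing this for all such $S''$ (under the appropriate antecedent, which here is vacuous because the context is empty) gives $\Vdash^{G}_{S}B$. From $\Vdash^{G}_{S}A$, Clause 8 with empty $\Gamma$ says that for every $S\subseteq S'$ and every $S'\subseteq S''$ we have $\Vdash^{L}_{S''}A$; in particular $\Vdash^{L}_{S''}A$ for our chosen $S''$.

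Next I would invoke $A\Vdash^{G}_{S}B$. Unfolding Clause 8 for a singleton context $\Gamma=\{A\}$: for every $S\subseteq S_1$, if for all $S_1\subseteq S_2$ we have $\Vdash^{L}_{S_2}A$, then for all $S_1\subseteq S_2$ we have $\Vdash^{L}_{S_2}B$. The key move is to instantiate $S_1$ with our $S''$ itself. Then I need the antecedent ``for all $S''\subseteq S_2$, $\Vdash^{L}_{S_2}A$''. This is exactly where I again use $\Vdash^{G}_{S}A$: since $S''$ extends $S$, every $S_2$ extending $S''$ also extends $S$ (transitivity of the extension relation), so $\Vdash^{L}_{S_2}A$ for all such $S_2$. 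Hence the antecedent of the $A\Vdash^{G}_{S}B$ clause is met at $S_1=S''$, and we conclude $\Vdash^{L}_{S_2}B$ for all $S''\subseteq S_2$; taking $S_2=S''$ gives $\Vdash^{L}_{S''}B$.

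Since $S'$ and $S''$ were arbitrary with $S\subseteq S'\subseteq S''$, this establishes $\Vdash^{L}_{S''}B$ for every such pair, which is precisely the statement $\Vdash^{G}_{S}B$ (the antecedent in Clause 8 being vacuously satisfied for the empty context). I do not expect a serious obstacle here; the only subtlety is bookkeeping the nested quantifiers of the global-validity clause and making sure that the global hypothesis on $A$ is strong enough to be re-used ``locally at $S''$'', which it is precisely because global validity quantifies over all extensions and the extension relation is transitive. It may also be worth noting that this is genuinely a \emph{global} modus ponens and has no local analogue, which is consistent with the earlier remarks distinguishing the two notions.
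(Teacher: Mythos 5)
Your proof is correct and follows essentially the same route as the paper's: both unfold Clause 8, use $\Vdash^{G}_{S}A$ (with the empty-context antecedent vacuously satisfied) to discharge the antecedent of the clause for $A\Vdash^{G}_{S}B$, and read off $\Vdash^{L}_{S''}B$. The only cosmetic difference is that the paper instantiates the intermediate extension at $S$ itself while you instantiate it at the arbitrary $S''$ and appeal to transitivity; both are valid and equivalent.
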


\begin{proof}
Assume $\Vdash_{S}^{G} A$. Thus for all $S \subseteq S'$ we have that $S' \subseteq S''$ implies $\Vdash_{S''}^{L} A$. Assume $A \Vdash_{S}^{G} B$. Then, for any $S \subseteq S'$, if for all $S' \subseteq S''$ we have $\Vdash_{S''}^{L} A$, then for all $S' \subseteq S''$ we have $\Vdash_{S''}^{L} B$. By putting $S = S'$ the antecedent gets satisfied we immediately get $\Vdash_{S''}^{L} B$ for all $S \subseteq S''$, hence  $\Vdash_{S}^{G} B$.   
\end{proof}

Finally, the following results show interactions between monotonicity, global validity and negation.
\begin{lemma}\label{lemma:mon}
     If $\Vdash_{S}^{L} A$, $S$-monotonicity holds for $A$ and $A \Vdash_{S}^{G} B$, then both $\Vdash_{S}^{L} B$ and $\Vdash_{S}^{G} B$.
\end{lemma}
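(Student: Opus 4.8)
The plan is to prove Lemma~\ref{lemma:mon} as a direct combination of the monotonicity hypothesis with the global \emph{modus ponens} lemma (Lemma~\ref{lemma:globalmodusponens}) and the collapse theorem (Theorem~\ref{thm:monotoniccollapse}). The statement has two conjuncts in its conclusion, $\Vdash_{S}^{L} B$ and $\Vdash_{S}^{G} B$; since Lemma~\ref{lemma:globaltheoremimplieslocaltheorem} already gives $\Vdash_{S}^{G} B \Rightarrow \Vdash_{S}^{L} B$, it suffices to establish $\Vdash_{S}^{G} B$, and the local version follows for free.

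First I would upgrade the hypothesis $\Vdash_{S}^{L} A$ to $\Vdash_{S}^{G} A$. This is exactly where $S$-monotonicity of $A$ is used: unfolding Clause~8 of Definition~\ref{def:wvalidity} (with $\Gamma = \emptyset$), $\Vdash_{S}^{G} A$ means that for every $S \subseteq S'$ and every $S' \subseteq S''$ we have $\Vdash_{S''}^{L} A$. Given $\Vdash_{S}^{L} A$ and $S$-monotonicity, for any such $S''$ (which is an extension of $S$ by transitivity of $\subseteq$) we get $\Vdash_{S''}^{L} A$ directly. Hence $\Vdash_{S}^{G} A$. Alternatively one could phrase this as an instance of Theorem~\ref{thm:monotoniccollapse} with $\Gamma = \emptyset$, but the direct unfolding is shorter and avoids the degenerate-context caveat there.

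Then I would apply Lemma~\ref{lemma:globalmodusponens} to $\Vdash_{S}^{G} A$ and the hypothesis $A \Vdash_{S}^{G} B$, obtaining $\Vdash_{S}^{G} B$. Finally, Lemma~\ref{lemma:globaltheoremimplieslocaltheorem} yields $\Vdash_{S}^{L} B$, completing both conjuncts.

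I do not expect any real obstacle here; the only point requiring a little care is making sure the quantifier structure of global validity for the \emph{empty} context is handled correctly when converting $\Vdash_{S}^{L} A$ into $\Vdash_{S}^{G} A$ — one must check that monotonicity of $A$ alone (not of anything in a context) is what licenses the step, and that transitivity of the extension relation is invoked so that an arbitrary $S''$ with $S \subseteq S' \subseteq S''$ is indeed an extension of $S$ to which $S$-monotonicity applies. Everything else is a mechanical chaining of already-established lemmas.
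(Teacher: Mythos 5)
Your proof is correct and follows essentially the same route as the paper's: both arguments hinge on using $S$-monotonicity to propagate $\Vdash_{S}^{L} A$ to every extension of $S$, which discharges the antecedent of the global entailment $A \Vdash_{S}^{G} B$ and yields $\Vdash_{S}^{G} B$, with $\Vdash_{S}^{L} B$ following because $S$ extends itself. The only difference is presentational — you package the steps as appeals to Lemma~\ref{lemma:globalmodusponens} and Lemma~\ref{lemma:globaltheoremimplieslocaltheorem}, whereas the paper unfolds Clause~8 inline — and your care about the quantifier structure of the empty-context case and transitivity of $\subseteq$ is exactly the right point to check.
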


\begin{proof} Since $\Vdash_{S}^{L} A$ holds and monotonicity holds for $A$, for all $S'$ such that $S \subseteq S'$ we have that $\Vdash_{S'}^{L} A$. Since $A \Vdash_{S}^{G} B$ holds and $S$ is an extension of itself, we immediately conclude that  $\Vdash_{S'}^{L} B$ for all $S'$ extending $S$ and all $S''$ extending any $S'$, and thus $\Vdash_{S}^{G} B$. In particular, since $S$ is an extension of itself, we also have $\Vdash_{S}^{L} B$. 
\end{proof}

\begin{lemma}\label{lemma:neg}
 $(p \vdash_{S} \bot)$ iff $(p^i \Vdash_{S}^{L} \bot)$ iff $(p^i \Vdash_{S}^{G} \bot)$ iff $(\Vdash_{S}^{L} \neg p^i)$.

\end{lemma}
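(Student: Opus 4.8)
The plan is to establish the four-way equivalence as a short cycle of implications $1\Rightarrow 2\Rightarrow 3\Rightarrow 1$, supplemented by the observation that items $3$ and $4$ are the same statement after unfolding notation. Indeed, since $\neg p^i$ abbreviates $(p^i\to\bot)^i$, Clause 5 of Definition~\ref{def:wvalidity} says directly that $\Vdash_S^L\neg p^i$ iff $p^i\Vdash_S^G\bot$, so $3\Leftrightarrow 4$ needs no argument and I would dispose of it first.

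Next I would prove $1\Rightarrow 2$ by showing that $p^i\Vdash_S^L\bot$ holds \emph{vacuously}. Assume $p\vdash_S\bot$ and fix any (consistent) extension $S'\supseteq S$. If $\Vdash_{S'}^L p^i$ then $\vdash_{S'}p$ by Clause 1; composing this deduction with a copy in $S'$ of the deduction witnessing $p\vdash_S\bot$ (legitimate, since every rule of $S$ is a rule of $S'$) gives $\vdash_{S'}\bot$, contradicting the consistency of $S'$. Hence the antecedent of Clause 7 is never satisfied and $p^i\Vdash_S^L\bot$ follows. Then $2\Rightarrow 3$ is immediate from Lemma~\ref{lemma:localimpliesglobal} applied with the non-empty context $\Gamma=\{p^i\}$.

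The crux is $3\Rightarrow 1$, which I would prove by contraposition. Suppose $p\nvdash_S\bot$ and let $S^\star$ be $S$ together with the atomic axiom whose conclusion is $p$. The key sublemma is that $S^\star$ is consistent: given any deduction of $\bot$ in $S^\star$ with no undischarged assumptions, reinterpret each application of the new axiom (a leaf labelled $p$) as an undischarged assumption $p$; the result is a deduction in $S$ of $\bot$ whose undischarged premises are a subset of $\{p\}$, i.e. $p\vdash_S\bot$ (or even $\vdash_S\bot$), contradicting the hypothesis. Thus $S^\star$ is a legitimate extension of $S$, and in every further extension $S''\supseteq S^\star$ the axiom $p$ is available, so $\vdash_{S''}p$ and hence $\Vdash_{S''}^L p^i$ by Clause 1. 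Instantiating Clause 8 of $p^i\Vdash_S^G\bot$ at $S'=S^\star$ would then force $\Vdash_{S''}^L\bot$ for all $S''\supseteq S^\star$, contradicting Lemma~\ref{lemma:bot}. Therefore $p^i\nVdash_S^G\bot$, completing the cycle.

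The main obstacle is precisely the consistency of $S^\star$ in the last step: it is the only place where a proof transformation (trivial though it is — replacing axiom applications by assumptions) is required, and it is exactly where the standing consistency constraint on bases is doing the work. Everything else is routine manipulation of the nested quantifiers defining local and global validity, together with the already-established Lemmas~\ref{lemma:bot} and~\ref{lemma:localimpliesglobal}.
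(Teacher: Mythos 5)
Your proof is correct, and it reuses both of the paper's two key ideas: the vacuous satisfaction of $p^{i} \Vdash_{S}^{L} \bot$ when $p \vdash_{S} \bot$, and the construction of a consistent extension $S^{\star}$ by adjoining an atomic axiom for $p$, with consistency established by trading applications of that axiom for undischarged assumptions. Where you differ is the decomposition. The paper splits the statement into three biconditionals: it obtains $(p^{i} \Vdash_{S}^{L} \bot) \Leftrightarrow (p^{i} \Vdash_{S}^{G} \bot)$ from the monotonicity collapse (Theorem~\ref{thm:monotoniccollapse} via Theorem~\ref{thm:monotonicity}), obtains the link to $\Vdash_{S}^{L} \neg p^{i}$ from the implication clause exactly as you do, and then proves $(p \vdash_{S} \bot) \Leftrightarrow (p^{i} \Vdash_{S}^{L} \bot)$ directly in both directions, its backward direction deriving the contradiction at the \emph{local} level by combining $\Vdash_{S^{\star}}^{L} p^{i}$ with the assumed local entailment. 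Your cycle $1 \Rightarrow 2 \Rightarrow 3 \Rightarrow 1$ needs only the one-directional Lemma~\ref{lemma:localimpliesglobal} for the local-to-global step, so it bypasses the monotonicity machinery entirely, and your closing step refutes the \emph{global} entailment directly by instantiating Clause 8 at $S' = S^{\star}$ and invoking Lemma~\ref{lemma:bot}. The trade-off is modest: the paper's route isolates the local biconditional $(p \vdash_{S} \bot) \Leftrightarrow (p^{i} \Vdash_{S}^{L} \bot)$ as a reusable self-contained fact, whereas yours is slightly leaner in its dependencies; both are sound.
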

\begin{proof} It follows from Theorem~\ref{thm:monotonicity} that  $(p^i \Vdash_{S}^{L} \bot)$ iff $(p^i \Vdash_{S}^{G} \bot)$, since $p^i$ is an intuitionistic formula and $\bot$ is either the intuitionistic $\bot^{i}$ or the equivalent $\bot^{c}$. On the other hand, $(p^i \Vdash_{S}^{G} \bot)$ iff $(\Vdash_{S}^{L} \neg p^i)$ follows from the clause for implication. Thus we only need to prove that $(p \vdash_{S} \bot)$ iff $(p^i \Vdash_{S}^{L} \bot)$.

\medskip

\begin{enumerate}
    \item[$(\Rightarrow)$] Let $p \vdash_{S} \bot$. 
Assume that there is an extension $S'$ of $S$ in which $\Vdash_{S'}^{L} p^i$. By Clause 1 we have $\vdash_{S'} p$ and, since $S'$ is an extension of $S$ and thus contains all its rules,  $p \vdash_{S'} \bot$ . By composing both derivations we get $\vdash_{S'} \bot $, which clashes with the consistency requirement. Thus, for all $S'$ extending $S$ we have $\nVdash_{S}^{L} p^i$, which together with Clause 4 yields $p^i \Vdash_{S}^{L} \bot$.

\medskip

\item[$(\Leftarrow)$] Assume $p^i \Vdash^{L}_{S} \bot$ and $p \nvdash_{S} \bot$. Let $S'$ be the system obtained by adding to $S$ only a rule $\alpha$ which concludes $p$ from empty premises. We start by proving that $S'$ is consistent, as thus a valid extension of $S$ in our semantics.
\end{enumerate}

Assume $S'$ is inconsistent, and consider the proof $\Pi$ of $\bot$ in $S'$. There are two possibilities:

\begin{enumerate}
    \item If $\Pi$ does not use the rule $\alpha$, then $\Pi$ is a proof in $S$. This yields a contradiction, as $S$ must be consistent.
    \item If $\Pi$ uses the rule $\alpha$, replace each application of $\alpha$ in $\Pi$ by an assumption $p$. This immediately yields a derivation showing $p \vdash_S \bot$, contradicting the second initial hypothesis.
\end{enumerate}

We then conclude that $S'$ is consistent. But, given that $\vdash_{S'} p$, we have $\Vdash_{S'}^{L} p^i$, which can be used together with the assumption $p^i \Vdash^{L}_{S} \bot$ to show $ \Vdash^{L}_{S} \bot$ and thus $\vdash_S \bot$, contradicting the consistency requirement. Thus $p \vdash_{S} \bot$.  
\end{proof}

\begin{corollary}\label{cor:class-iff-not-int}
    $\Vdash_{S}^{L} p^c$ iff $p^i \nVdash_{S}^{L} \bot$.
\end{corollary}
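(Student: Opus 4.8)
The plan is to obtain this as an immediate chain of two equivalences already in hand, so there is essentially no new work beyond bookkeeping. Since $p$ is atomic here, Clause 2 of Definition~\ref{def:wvalidity} applies directly and tells us that $\Vdash_{S}^{L} p^c$ holds if and only if $p \nvdash_{S} \bot$. This rewrites the left-hand side of the biconditional purely in terms of the deducibility relation.

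Next I would invoke Lemma~\ref{lemma:neg}, whose first listed equivalence states that $(p \vdash_{S} \bot)$ iff $(p^i \Vdash_{S}^{L} \bot)$. Taking the contrapositive of this biconditional gives $(p \nvdash_{S} \bot)$ iff $(p^i \nVdash_{S}^{L} \bot)$. Chaining this with the equivalence from the previous paragraph yields $\Vdash_{S}^{L} p^c$ iff $p^i \nVdash_{S}^{L} \bot$, which is exactly the claim.

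I do not expect any genuine obstacle: the corollary is precisely the repackaging of Lemma~\ref{lemma:neg} together with the atomic clause for classical formulas, and the only point worth a word of care is that $p$ ranges over atoms (indeed over $\Atb$), so that Clause 2 — rather than Clause 3 — is the relevant one. If one wished to state an analogous fact for non-atomic $A$, Clause 3 gives $\Vdash_{S}^{L} A^c$ iff $A^i \nVdash_{S}^{L} \bot$ directly by definition, but for the atomic case the detour through Lemma~\ref{lemma:neg} is what is needed.
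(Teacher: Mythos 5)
Your proof is correct and is exactly the intended derivation: the paper states this as an immediate corollary of Lemma~\ref{lemma:neg} (whose first equivalence, contraposed, gives $p \nvdash_S \bot$ iff $p^i \nVdash^L_S \bot$) combined with Clause 2 of Definition~\ref{def:wvalidity}, which is precisely your chain. Your remark that the non-atomic analogue follows directly from Clause 3 while the atomic case needs the detour through Lemma~\ref{lemma:neg} is also accurate.
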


%
%
%

\begin{lemma}\label{lemma:p^c implies existence of extension}
    $A \nVdash^L_{S} \bot$ iff there is some $S \subseteq S'$ such that $\Vdash_{S'}^{L} A$.
\end{lemma}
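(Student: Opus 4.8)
The plan is to prove both directions simultaneously by unfolding the definition of local entailment and using the fact that $\bot$ is never locally valid. Recall that here $\bot$ abbreviates $\bot^i$ (by the remark following Lemma~\ref{lemma:bot}), and that the context $\{A\}$ is non-empty, so Clause~7 of Definition~\ref{def:wvalidity} applies: $A \Vdash^L_S \bot$ holds iff for every $S'$ with $S \subseteq S'$, $\Vdash^L_{S'} A$ implies $\Vdash^L_{S'} \bot$.

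First I would invoke Lemma~\ref{lemma:bot}, which gives $\nVdash^L_{S'} \bot$ for every atomic system $S'$. Hence the consequent $\Vdash^L_{S'} \bot$ in the clause above is always false, and a conditional with a false consequent holds precisely when its antecedent is false. Therefore $A \Vdash^L_S \bot$ is equivalent to: for every $S'$ with $S \subseteq S'$, $\nVdash^L_{S'} A$.

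Finally I would contrapose. Negating both sides of the equivalence just obtained yields: $A \nVdash^L_S \bot$ iff it is \emph{not} the case that $\nVdash^L_{S'} A$ for all extensions $S'$ of $S$, i.e. iff there exists some $S'$ with $S \subseteq S'$ such that $\Vdash^L_{S'} A$. This is exactly the statement of the lemma.

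There is no real obstacle here; the argument is a routine quantifier manipulation whose only genuine input is Lemma~\ref{lemma:bot} (which in turn is where the consistency requirement on bases does its work). The one point to be careful about is that Clause~7 is stated for non-empty $\Gamma$ — but since we are entailing from the singleton $\{A\}$, this hypothesis is met, and no appeal to the empty-context clause or to global validity is needed.
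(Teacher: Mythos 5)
Your proof is correct and follows essentially the same route as the paper's: both arguments reduce to the observation (via Lemma~\ref{lemma:bot}, i.e.\ the consistency requirement) that $\Vdash^{L}_{S'}\bot$ never holds, so that $A \Vdash^{L}_{S}\bot$ collapses to ``$\nVdash^{L}_{S'}A$ for every extension $S'$''. The only difference is presentational --- you phrase it as one equivalence chain followed by contraposition, whereas the paper runs two separate arguments by contradiction --- and nothing of substance changes.
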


\begin{proof} 
  Assume $A \nVdash^L_{S} \bot$.  Suppose there is no $S \subseteq S'$ with $\Vdash_{S'}^{L} A$. Then $A \Vdash_{S}^{L} \bot$ holds vacuously, which is a contradiction. Hence, for some $S \subseteq S^n$ we have $\Vdash_{S'}^{L} A$.
 On the other hand, 
assume that  there is some $S \subseteq S'$ such that $\Vdash_{S'}^{L} A$. Suppose $A \Vdash^L_{S} \bot$. Then we have $\Vdash_{S'}^{L} \bot$, yielding a contradiction. Thus, $A \nVdash^L_{S} \bot$.   
\end{proof}


\begin{lemma}\label{lemma:maximalconsistencyexistence}
Every consistent system $S$ has a $\bot$\textit{-complete} extension $S'$ such that, for every $p \in \Atb$, either $\vdash_{S'} p$ or $p \vdash_{S'} \bot$
\end{lemma}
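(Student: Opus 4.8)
The plan is a Lindenbaum-style construction that decides each atom greedily. Recall $\At = \{p_1, p_2, \ldots\}$ and set $S_0 := S$. Given a consistent base $S_n$, let $\alpha_{n+1}$ denote the atomic axiom whose conclusion is $p_{n+1}$ and whose premise sequence is empty. If $S_n \cup \{\alpha_{n+1}\}$ is consistent, put $S_{n+1} := S_n \cup \{\alpha_{n+1}\}$; otherwise put $S_{n+1} := S_n$. Finally, let $S' := \bigcup_n S_n$. By construction $S \subseteq S_n \subseteq S'$ for every $n$.

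First I would verify that every $S_n$, and hence $S'$, is consistent. For the $S_n$ this is an immediate induction: $S_0 = S$ is consistent by hypothesis, and at each step $S_{n+1}$ is either $S_n$ or a base that was explicitly checked for consistency before being chosen. For $S'$, any derivation of $\bot$ in $S'$ is a finite object and therefore uses only rules occurring in some $S_n$; thus $\vdash_{S'} \bot$ would give $\vdash_{S_n} \bot$, contradicting the consistency of $S_n$. So $S'$ is a consistent extension of $S$, and in particular an admissible base.

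Next I would establish $\bot$-completeness. The case $p = \bot$ is trivial, since $\bot \vdash_{S'} \bot$ holds by the definition of deducibility. So fix $p = p_{n+1} \in \At$ and consider stage $n+1$. If $S_n \cup \{\alpha_{n+1}\}$ was consistent, we set $S_{n+1} = S_n \cup \{\alpha_{n+1}\}$, whence $\vdash_{S_{n+1}} p_{n+1}$ and therefore $\vdash_{S'} p_{n+1}$. If $S_n \cup \{\alpha_{n+1}\}$ was inconsistent, then -- exactly as in the $(\Leftarrow)$ direction of Lemma~\ref{lemma:neg} -- a derivation of $\bot$ in $S_n \cup \{\alpha_{n+1}\}$ must use $\alpha_{n+1}$ (since $S_n$ is consistent), and replacing each application of $\alpha_{n+1}$ by an undischarged assumption $p_{n+1}$ produces a derivation witnessing $p_{n+1} \vdash_{S_n} \bot$; since $S_n \subseteq S'$, we get $p_{n+1} \vdash_{S'} \bot$. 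In either case $p_{n+1}$ is decided, so $S'$ is $\bot$-complete.

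I do not expect a serious obstacle here; the only points that need care are the limit step -- deriving consistency of $S'$ from consistency of all the $S_n$, which rests on finiteness of derivations -- and the equivalence, imported from Lemma~\ref{lemma:neg}, between the inconsistency of $S_n \cup \{\alpha_{n+1}\}$ and the derivability $p_{n+1} \vdash_{S_n} \bot$. If one prefers the refutation of a decided atom to be literally present as a rule, one can additionally throw into $S_{n+1}$, in the inconsistent case, the atomic rule concluding $\bot$ from $p_{n+1}$; this is harmless, as that rule is already derivable in $S_n$ and hence does not affect consistency.
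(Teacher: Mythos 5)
Your proposal is correct and follows essentially the same Lindenbaum-style construction as the paper: enumerate the basic sentences, greedily add an atomic axiom for each atom whenever consistency permits, take the union, and use finiteness of derivations for the limit step. The only cosmetic difference is that you test consistency of $S_n \cup \{\alpha_{n+1}\}$ where the paper tests $p^{m} \nvdash_{S^{m-1}} \bot$; these are equivalent by the same rule-to-assumption replacement argument both proofs use.
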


\begin{proof}
Since $\Atb$ is countable, assign to each of its elements a unique natural number greater than $0$ as a superscript.  Define $S = S^{0}$, and consider the following construction procedure (for $m \geq 1)$:

\begin{enumerate}
    \item If $p^{m} \nvdash_{S^{m-1}} \bot$, then $S^{m}$ is obtained by adding a atomic axiom with conclusion $p^{m}$ to $S^{m-1}$;

    \medskip
        \item If $p^{m} \vdash_{S^{m-1}} \bot$, then $S^{m} = S^{m-1}$.
\end{enumerate}

We briefly show by a simple induction that every $S^{m}$ produced this way is consistent. Since $S^{0} = S$, $S^{0}$ is consistent. Now suppose that $S^{m-1}$ is consistent. If $p^{m} \vdash_{S^{m -1}} \bot$ then $S^{m-1} = S^{m}$, so $S^{m}$ is consistent. If $p^{m} \nvdash_{S^{m -1}} \bot$, assume for the sake of contradiction that $\vdash_{S^{m}} \bot$. Then either the proof of $\bot$ in $S^{m}$ does not use the atomic axiom with conclusion $p^{m}$ included in $S^{m-1}$, in which case it is also a deduction showing $\vdash_{S^{m-1}} \bot$, or it does use the atomic axiom, in which case by removing all instances of it from the deduction we obtain a deduction showing $p^{m} \vdash_{S^{m-1}} \bot$. In the first case we contradict the assumption that $S^{m-1}$ was consistent, and in the second we contradict the assumption that $p^{m} \nvdash_{S^{m-1}} \bot$, so in any case we obtain a contradiction. Hence, if $S^{m-1}$ is consistent then $S^{m}$ is consistent, and since $S^{0}$ is consistent we have that each $S^{m}$ is consistent.

Now let $S^{\bot  C} = \{R \in S^{m} | m \geq 1 \}$. Clearly, $S \subseteq S^{\bot C}$. To show that $S^{\bot C}$ is also consistent, assume for the sake of contradiction that there is a deduction showing $\vdash_{S^{\bot C}} \bot$. By the definition of deducibility, this deduction can only use finitely many rules. If the deduction does not use any atomic axioms, it is already a deduction in $S$, thus contradicting the fact that $S$ is consistent. If it does use atomic axioms, let $m$ be the greatest superscript occurring in atomic axioms of the deduction. This deduction only uses axioms with superscript equal to or less than $m$, thus all rules used in it must already occur in $S^{m}$ (as they could not have been added later in the construction). But then this means that $\vdash_{S^{m}} \bot$, contradicting our result that each $S^{m}$ is consistent. In both cases we reach a contradiction, so we conclude that $S^{\bot C}$ is indeed consistent.

Finally, pick any $p^{m} \in \Atb$. If $p^{m} \vdash_{S^{m-1}} \bot$ then $p^{m} \vdash_{S^{\bot C}} \bot$, as by the definition of $S^{\bot C}$ we have $S^{m - 1} \subseteq S^{\bot C}$. If $p^{m} \nvdash_{S^{m-1}} \bot$ then $S^{m}$ contains an atomic axiom concluding $p^{m}$, and since $S^{m} \subseteq S^{\bot C}$ we conclude $\vdash_{S^{\bot C}} p^{m}$. Since this holds for every $m \geq 1$ and every atom was assigned such a superscript, we conclude that for every $p \in \Atb$ either $p \vdash_{S^{\bot C}} \bot$ or $\vdash_{S^{\bot C}} p$, as desired.

\end{proof}

\begin{lemma}\label{lemma:sameextensionsproperty}
Let $A$ and $B$ be two formulas and $S$ a system such that, for all $S \subseteq S'$, $\Vdash^{L}_{S} A$ iff $\Vdash^{L}_{S'} A$ and $\Vdash^{L}_{S} B$ iff $\Vdash^{L}_{S'} B$. Then for all $S \subseteq S'$ we have $A \Vdash^{L}_{S'} B$ iff $A \Vdash^{G}_{S'} B$ iff either $\nVdash^{L}_{S'} A$ or $\Vdash^{L}_{S'} B$.
\end{lemma}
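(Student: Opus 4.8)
The plan is to fix an arbitrary extension $S' \supseteq S$ and establish the chain of implications
\[
A \Vdash^{L}_{S'} B \;\Longrightarrow\; A \Vdash^{G}_{S'} B \;\Longrightarrow\; \bigl(\nVdash^{L}_{S'} A \text{ or } \Vdash^{L}_{S'} B\bigr) \;\Longrightarrow\; A \Vdash^{L}_{S'} B,
\]
which yields all three biconditionals. The first thing to record is that the stability hypothesis propagates upward along the extension cone: since every $S'' \supseteq S'$ is also an extension of $S$ (transitivity of the extension relation), the assumption on $A$ gives $\Vdash^{L}_{S''} A$ iff $\Vdash^{L}_{S} A$ iff $\Vdash^{L}_{S'} A$, and likewise for $B$. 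Hence $A$ and $B$ each keep a \emph{constant} local-validity status throughout the cone above $S'$; this is the only property of the hypothesis I will use.

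For the first implication I would simply invoke Lemma~\ref{lemma:localimpliesglobal}, noting that the antecedent context here is the single formula $A$, hence non-empty, so $A \Vdash^{L}_{S'} B$ implies $A \Vdash^{G}_{S'} B$ directly. For the second implication, assume $A \Vdash^{G}_{S'} B$ and $\Vdash^{L}_{S'} A$; I want $\Vdash^{L}_{S'} B$. By the propagation observation, $\Vdash^{L}_{S''} A$ holds for \emph{every} $S'' \supseteq S'$, so instantiating Clause~8 of Definition~\ref{def:wvalidity} with the outer extension taken to be $S'$ itself makes the antecedent of its inner implication true; therefore $\Vdash^{L}_{S''} B$ for every $S'' \supseteq S'$, and in particular $\Vdash^{L}_{S'} B$.

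For the third implication, suppose $\nVdash^{L}_{S'} A$ or $\Vdash^{L}_{S'} B$, and verify Clause~7 for $A \Vdash^{L}_{S'} B$: let $S'' \supseteq S'$ be arbitrary with $\Vdash^{L}_{S''} A$. If $\nVdash^{L}_{S'} A$, then by propagation $\Vdash^{L}_{S''} A$ cannot hold, so the required implication is vacuously satisfied at $S''$; if instead $\Vdash^{L}_{S'} B$, then by propagation $\Vdash^{L}_{S''} B$, so the implication holds trivially. In either case $A \Vdash^{L}_{S'} B$, closing the cycle. There is no real obstacle here; the only point requiring care is to apply the stability hypothesis at extensions of $S'$ rather than just of $S$ (justified by transitivity) and to check that Lemma~\ref{lemma:localimpliesglobal} is legitimately applicable because the antecedent is a single formula and hence a non-empty context.
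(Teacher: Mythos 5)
Your proof is correct, and it rests on exactly the same key observation as the paper's: the stability hypothesis propagates by transitivity to the whole cone of extensions above $S'$, so that both entailment notions collapse to the material conditional evaluated at $S'$ itself. The only difference is organisational. The paper argues by an exhaustive three-way case split on the truth values of $A$ and $B$ at $S'$ (namely $\nVdash^{L}_{S'} A$; $\Vdash^{L}_{S'} B$; and $\Vdash^{L}_{S'} A$ together with $\nVdash^{L}_{S'} B$), checking directly that all three statements are simultaneously true in the first two cases and simultaneously false in the third. You instead close a cyclic chain of implications, which lets you delegate the step from local to global entailment to Lemma~\ref{lemma:localimpliesglobal} rather than reproving it, and confines the case analysis to the final implication; this is marginally more economical, while the paper's version has the small virtue of exhibiting explicitly the witnessing failure of both entailments in the third case. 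Your two points of care --- applying the hypothesis at extensions of $S'$ via transitivity, and checking that the context $\{A\}$ is non-empty before invoking Lemma~\ref{lemma:localimpliesglobal} --- are both correctly handled.
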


\begin{proof}
Let $S$ be any system $A$ and $B$ be two formulas as specified. Let $S'$ be an arbitrary extension of $S$. If $\nVdash^{L}_{S'} A$ then $\nVdash^{L}_{S} A$ and for all $S \subseteq S''$ we have $\nVdash^{L}_{S''} A$. Since $S' \subseteq S'''$ implies $S \subseteq S'''$ we have that $\nVdash^{L}_{S'''} A$  also holds for all extensions $S'''$ of $S'$, so $A \Vdash^{L}_{S'} B$ and $A \Vdash^{G}_{S'} B$ hold vacuously. Likewise, if $\Vdash^{L}_{S'} B$ then $\Vdash^{L}_{S} B$ and  so also $\Vdash^{L}_{S''} B$ for all $S \subseteq S''$, so again for all $S' \subseteq S'''$ we have $\Vdash^{L}_{S'''} B$, hence by similar reasoning we conclude that both $A \Vdash^{L}_{S'} B$ and $A \Vdash^{G}_{S'} B$ hold. Finally, if $\Vdash^{L}_{S'} A$ and $\nVdash^{L}_{S'} B$ hold then since $S' \subseteq S'$ clearly $A \nVdash^{L}_{S'} B$ and $A \nVdash^{G}_{S'} B$, and since this covers all cases we conclude the desired result.
     
\end{proof}

\begin{lemma}\label{lemma:maximalconsistentpersistency}
Let $S^{\bot C}$ be a $\bot$-complete extension of some system. Then, for every $S^{\bot C} \subseteq S'$ and every $A$, $\Vdash^{L}_{S^{\bot C}} A$ iff $\Vdash^{L}_{S'} A$.
\end{lemma}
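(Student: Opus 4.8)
The plan is to prove Lemma~\ref{lemma:maximalconsistentpersistency} by induction on the complexity of the formula $A$ (Definition~\ref{def:complexity}), exploiting the key structural property of $\bot$-complete extensions established in Lemma~\ref{lemma:maximalconsistencyexistence}: for every $p \in \Atb$, either $\vdash_{S^{\bot C}} p$ or $p \vdash_{S^{\bot C}} \bot$. The left-to-right direction would actually be immediate in the monotonic cases, but the delicate point throughout is the right-to-left direction and the non-monotonic classical connectives; the clean strategy is to prove the stronger biconditional simultaneously for all formulas. I would first handle the atomic base cases. For $p^i$ with $p \in \Atb$: by $\bot$-completeness either $\vdash_{S^{\bot C}} p$, in which case $\vdash_{S'} p$ holds in every extension and $\Vdash^L_{S^{\bot C}} p^i$ iff $\Vdash^L_{S'} p^i$ trivially (both true); or $p \vdash_{S^{\bot C}} \bot$, in which case by consistency no extension $S'$ can satisfy $\vdash_{S'} p$ (composing the derivations would give $\vdash_{S'}\bot$), so both sides are false. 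For $p^c$ with $p \in \Atb$: by Corollary~\ref{cor:class-iff-not-int} and Lemma~\ref{lemma:neg}, $\Vdash^L_{S''} p^c$ iff $p \nvdash_{S''}\bot$; in the first $\bot$-complete case $\vdash_{S^{\bot C}} p$, so $p \nvdash_{S'}\bot$ in every extension by consistency (hence both sides true), and in the second case $p \vdash_{S^{\bot C}}\bot$ persists upward (hence both sides false). So at the atomic level the two $\bot$-complete cases exactly partition the atoms into the "always valid" and "never valid" kinds, and this stability at atoms is what drives everything.

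For the inductive step I would go through the clauses of Definition~\ref{def:wvalidity}. Conjunction $(A\land B)^i$ is immediate from the induction hypothesis applied to $A$ and $B$. Implication $(A\to B)^i$: by Clause~5 this is $A \Vdash^G_{S^{\bot C}} B$; here I would invoke Lemma~\ref{lemma:sameextensionsproperty}, whose hypothesis is exactly the induction hypothesis (namely $\Vdash^L$ of $A$ and of $B$ is stable across all extensions of $S^{\bot C}$), to rewrite both $A \Vdash^G_{S^{\bot C}} B$ and $A \Vdash^G_{S'} B$ as "$\nVdash^L_{\cdot} A$ or $\Vdash^L_{\cdot} B$", which are equal by the induction hypothesis. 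Disjunction $(A\lor B)^i$: by Clause~6, validity at $S^{\bot C}$ quantifies over all extensions $S''$ and all atoms; using the induction hypothesis that $A$, $B$ are stable and the atomic analysis above (each atom is either provable everywhere or nowhere above $S^{\bot C}$), the disjunction clause collapses to a condition depending only on the stable facts, hence holds at $S^{\bot C}$ iff at any $S'$. Finally the classical connective case: a non-atomic $A^c$ is, by Clause~3, $A^i \nVdash^L_{S^{\bot C}} \bot$, which by Lemma~\ref{lemma:p^c implies existence of extension} is equivalent to the existence of some extension of $S^{\bot C}$ validating $A^i$; by the induction hypothesis $\Vdash^L A^i$ is constant across all extensions of $S^{\bot C}$, so such an extension exists iff $\Vdash^L_{S^{\bot C}} A^i$ iff $\Vdash^L_{S'} A^i$ for every $S'$, and applying Lemma~\ref{lemma:p^c implies existence of extension} again over $S'$ gives $\Vdash^L_{S'} A^c$.

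The main obstacle I anticipate is the classical case, because classical atoms are genuinely non-monotonic (Theorem~\ref{thm:monotonicity}), so one cannot simply quote monotonicity; the whole point of passing to a $\bot$-complete extension is that it has already "made up its mind" about every atom, killing the non-monotonicity there and above. I would need to be careful that Lemma~\ref{lemma:p^c implies existence of extension} is being applied with the right base (once with base $S^{\bot C}$, once with base $S'$) and that the induction hypothesis is stated strongly enough — stability of $\Vdash^L$ across \emph{all} extensions of $S^{\bot C}$, not merely the equality $\Vdash^L_{S^{\bot C}} = \Vdash^L_{S'}$ for the fixed $S'$ — so that Lemmas~\ref{lemma:sameextensionsproperty} and~\ref{lemma:p^c implies existence of extension} apply verbatim. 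A secondary subtlety is the complexity measure: since $A^c$ has complexity one greater than $A^i$ (Definition~\ref{def:complexity}), the appeal to the induction hypothesis on $A^i$ inside the $A^c$ case is legitimate, and likewise the subformulas in the $\land,\lor,\to$ clauses have strictly smaller complexity, so the induction is well-founded.
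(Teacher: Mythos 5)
Your proposal is correct and follows essentially the same route as the paper's proof: induction on formula complexity, with the $\bot$-completeness dichotomy settling the atomic cases, Lemma~\ref{lemma:sameextensionsproperty} handling implication (and the transfer step in the disjunction case), and Lemma~\ref{lemma:p^c implies existence of extension} handling the non-atomic classical case. The only cosmetic difference is that the paper closes the classical case by invoking Theorem~\ref{lemma:intimpliesclas}, whereas you re-apply Lemma~\ref{lemma:p^c implies existence of extension} over $S'$; both are valid.
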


\begin{proof}
    We show the result by induction on the complexity of formulas.

    \begin{enumerate}
   
      \item $A = p^{i}$.  

      \medskip

      ($\Rightarrow$) If $\Vdash^{L}_{S^{\bot C}} p^{i}$ then $\vdash_{S^{\bot C}} p$, and since the deduction showing this is also a deduction in the arbitrary $S'$ we have $\vdash_{S'} p$ and $\Vdash^{L}_{S'} p^{i}$. 

      \medskip
      
      ($\Leftarrow$) Assume $\Vdash^{L}_{S'} p^{i}$. Then $\vdash_{S'} p$. If we had a deduction showing $p \vdash_{S^{\bot C}} \bot$ then it would also be a deduction showing $p \vdash_{S'} \bot$, which would allow us to compose the deductions to show $\vdash_{S'} \bot$ and obtain a contradiction, hence $p \nvdash_{S^{\bot C}} \bot$. But by the definition of $\bot$-complete extensions we have that $p \nvdash_{S^{\bot C}} \bot$ implies $\vdash_{S^{\bot C}} p$, so we conclude $\Vdash^{L}_{S^{\bot C}} p^{i}$.

        \medskip
        
       \item $A = p^{c}$.  

       \medskip
       
    ($\Rightarrow$) If $\Vdash^{L}_{S^{\bot C}} p^{c}$ then $p \nvdash^{L}_{S^{\bot C}} \bot$, hence definition of $\bot$-complete extensions $\vdash_{S^{\bot C}} p$, so $\Vdash^{L}_{S^{'}} p^{i}$ for all $S^{\bot C} \subseteq S'$ and by Theorem \ref{lemma:intimpliesclasatom} and Clause 7 of Definition \ref{def:wvalidity} also $\Vdash^{L}_{S'} p^{c}$. 

    \medskip
       
      ($\Leftarrow$) Assume $\Vdash^{L}_{S'} p^{c}$ for arbitrary $S^{\bot C} \subseteq S'$. Then $p \nvdash_{S'} \bot$. If $p \vdash_{S^{\bot C}} \bot$ then since $S^{\bot C} \subseteq S'$ we would have $p \vdash_{S'} \bot$ and this would lead to a contradiction, so $p \nvdash_{S^{\bot C}} \bot$ and by definition of $\bot$-complete extension $\vdash_{S^{\bot C}} p$ and so $\Vdash^{L}_{S^{\bot C}} p^{i}$, which by Theorem \ref{lemma:intimpliesclasatom} and Clause 7 of Definition \ref{def:wvalidity} yields $\Vdash^{L}_{S^{\bot C}} p^{c}$.

        \medskip


       \item $A = B^{C}$.  

       \medskip
       
     ($\Rightarrow$)  If $\Vdash^{L}_{S^{\bot C}} B^{c}$ then $B \nVdash^{L}_{S^{\bot C}} \bot$. By Lemma \ref{lemma:p^c implies existence of extension} there is a $S^{\bot C} \subseteq S''$ with $\Vdash^{L}_{S''} B^{i}$. Induction hypothesis:  $\Vdash^{L}_{S^{\bot C}} B^{i}$ and also $\Vdash^{L}_{S'} B^{i}$  for every $S^{\bot C} \subseteq S'$. Then by Theorem \ref{lemma:intimpliesclas} also $\Vdash^{L}_{S'} B^{c}$ for all $S^{\bot C} \subseteq S'$.

       \medskip
       
      ($\Leftarrow$) Let $\Vdash^{L}_{S'} B^{c}$ for arbitrary $S^{\bot C} \subseteq S'$. Then $B^{i} \nVdash^{L}_{S^{'}}  \bot$, hence by Lemma \ref{lemma:p^c implies existence of extension} there is a $S' \subseteq S''$ with $\Vdash^{L}_{S''} B^{i}$. Induction hypothesis: $\Vdash^{L}_{S^{\bot C}} B^{i}$. Then by Theorem \ref{lemma:intimpliesclas} we have $\Vdash^{L}_{S^{\bot C}} B^{c}$.

        \medskip

       \item $A = B \land C$. 

       \medskip

       ($\Rightarrow$) If $\Vdash^{L}_{S^{\bot C}} B \land C$ then $\Vdash^{L}_{S^{\bot C}} B$ and $\Vdash^{L}_{S^{\bot C}}C$. Induction hypothesis: $\Vdash^{L}_{S'} B$ and $\Vdash^{L}_{S'} C$ for all $S^{\bot C} \subseteq S'$. Then $\Vdash^{L}_{S'} B \land C$ for arbitrary $S^{\bot C} \subseteq S'$. 

       \medskip
       
      ($\Leftarrow$) Let $\Vdash^{L}_{S'} B \land C$ for arbitrary $S^{\bot C} \subseteq S'$. Then $\Vdash^{L}_{S'} B$ and $\Vdash^{L}_{S'} C$. Induction hypothesis: $\Vdash^{L}_{S^{\bot C}} B$ and $\Vdash^{L}_{S^{\bot C}} C$. Then $\Vdash^{L}_{S^{\bot C}} B \land C$.

\medskip
       
       \item $A = B \to C$. 
       
       \medskip
       
       ($\Rightarrow$) If $\Vdash^{L}_{S^{\bot C}} B \to C$ then $B \Vdash^{G}_{S^{\bot C}} C$. Induction hypothesis: for all $S^{\bot C} \subseteq S'$, $\Vdash^{L}_{S^{\bot C}} B$ iff $\Vdash^{L}_{S'} B$ and $\Vdash^{L}_{S^{\bot C}} C$ iff $\Vdash^{L}_{S'} C$. Then Lemma \ref{lemma:sameextensionsproperty} applies to $S^{\bot C}$ and all its extensions with respect to $B$ and $C$, so since $B \Vdash^{G}_{S^{\bot C}} C$ we conclude that either $\nVdash^{L}_{S^{\bot C}} B$ or $\Vdash^{L}_{S^{\bot C}} C$. The induction hypothesis then yields $\nVdash^{L}_{S'} B$ or $\Vdash^{L}_{S'} C$ for our chosen $S'$, so by Lemma \ref{lemma:sameextensionsproperty} we have $B \Vdash^{G}_{S'} C$ and thus $\Vdash^{L}_{S'} B \to C$. 

       \medskip
       
      ($\Leftarrow$) Assume $\Vdash^{L}_{S'} B \to C$. Then $B \Vdash^{G}_{S'} C$. Induction hypothesis: for all $S^{\bot C} \subseteq S'$, $\Vdash^{L}_{S^{\bot C}} B$ iff $\Vdash^{L}_{S'} B$ and $\Vdash^{L}_{S^{\bot C}} C$ iff $\Vdash^{L}_{S'} C$. Then again by applying Lemma \ref{lemma:sameextensionsproperty} from $B \Vdash^{G}_{S'} C$ we conclude $\nVdash^{L}_{S'} B$ or $\Vdash^{L}_{S'} C$, from the induction hypothesis we conclude $\nVdash^{L}_{S^{\bot C}} B$ or $\Vdash^{L}_{S^{\bot C}} C$ and by Lemma \ref{lemma:sameextensionsproperty} we conclude $B \Vdash^{G}_{S^{\bot C}} C$, sowe conclude $\Vdash^{L}_{S^{\bot C}} B \to C$.


       \medskip

       \item $A = B \lor C$. 

       \medskip
       
      ($\Rightarrow$) If $\Vdash^{L}_{S^{\bot C}} B \lor C$ then, for all $S^{\bot C} \subseteq S'$ and all $p \in \Atb$, if $A \Vdash^{L}_{S'} p^{i}$ and $B \Vdash^{L}_{S'} p^{i}$ then  $\Vdash^{L}_{S'} p^{i}$. Now pick any $S^{\bot C} \subseteq S'$, pick any $p$ and let $S''$ be any extension of $S'$ with $A \Vdash^{L}_{S''} p^{i}$ and $B \Vdash^{L}_{S''} p^{i}$. Then since $S' \subseteq S''$ implies $S^{\bot C} \subseteq S''$ we conclude $\Vdash^{L}_{S'} p^{i}$, so by arbitrariness of $S''$ already $\Vdash^{L}_{S'} B \lor C$. 

      \medskip
       
      ($\Leftarrow$) Assume $\Vdash^{L}_{S'} B \lor C$. Then, for all $p \in \Atb$, if $A \Vdash^{L}_{S'} p^{i}$ and $B \Vdash^{L}_{S'} p^{i}$ then $\Vdash^{L}_{S'} p^{i}$ (which is a special case of the semantic condition for disjunction). Induction hypothesis: for all $S^{\bot C} \subseteq S''$, $\Vdash^{L}_{S^{\bot C}} B$ iff $\Vdash^{L}_{S''} B$, $\Vdash^{L}_{S^{\bot C}} C$ iff $\Vdash^{L}_{S''} C$ and $\Vdash^{L}_{S^{\bot C}} p^{i}$ iff $\Vdash^{L}_{S''} p^{i}$ for all $p \in \Atb$. Now pick any $S^{\bot C} \subseteq S^{''}$ such that $A \Vdash^{L}_{S''} p^{i}$ and $B \Vdash^{L}_{S''} p^{i}$ for a particular $p \in \Atb$, if any. Then, by the induction hypothesis, since both $S'$ and $S''$ are extensions of $S^{\bot C}$ we have $\Vdash^{L}_{S''} B$ iff $\Vdash^{L}_{S'} B$, $\Vdash^{L}_{S''} C$ iff $\Vdash^{L}_{S'} C$ and $\Vdash^{L}_{S''} p^{i}$ iff $\Vdash^{L}_{S'} p^{i}$, so since $A \Vdash^{L}_{S''} p^{i}$ and $B \Vdash^{L}_{S''} p^{i}$ we can apply Lemma \ref{lemma:sameextensionsproperty} to conclude based only on the equivalnce of $B$, $C$ and $p^{i}$ in $S''$ and $S'$ that $A \Vdash^{L}_{S'} p^{i}$ and $B \Vdash^{L}_{S'} p^{i}$ also hold, hence since  $\Vdash^{L}_{S'} B \lor C$ we conclude $\Vdash^{L}_{S'} p^{i}$. Since $\Vdash^{L}_{S'} p^{i}$ by the induction hypothesis we conclude $\Vdash^{L}_{S''} p^{i}$, hence for any $S^{\bot C} \subseteq S''$ we have that if  $A \Vdash^{L}_{S''} p^{i}$ and $B \Vdash^{L}_{S''} p^{i}$ for an arbitrary $p \in \Atb$ then $\Vdash^{L}_{S''} p^{i}$, which yields $\Vdash^{L}_{S^{\bot C}} B \lor C$.

    \end{enumerate}

\end{proof}

\subsection{Weak ecumenical behaviour}\label{sec:eb}

This section will be devoted to show some interesting behaviours when monotonicity does not hold for ecumenical formulas. Notice that, due to Corollary~\ref{cor:class-iff-not-int}, classical atoms $p^c$ and classical non-atomic formulas $A^c$ may be treated uniformly in some cases.

\begin{theorem}\label{thm:twoglobalequivalences}
     $A^c \Vdash \neg \neg A^i$ and $\neg \neg A^i \Vdash A^c$.
\end{theorem}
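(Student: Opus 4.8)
The plan is to reduce everything to local validity together with the extension pre-order. Recall that $\Gamma\Vdash B$ abbreviates ``$\Gamma\Vdash^{G}_{S}B$ for every $S$'', that $\neg\neg A^{i}$ is $((A^{i}\to\bot)^{i}\to\bot)^{i}$, and that $\Vdash^{L}_{S}\bot$ holds for no $S$ (Lemma~\ref{lemma:bot}). Because of the last fact, the clause for intuitionistic implication (Clause~5 of Definition~\ref{def:wvalidity}) degenerates: ``$X\Vdash^{G}_{S}\bot$'' says precisely that every $S_{1}\supseteq S$ admits some $S_{2}\supseteq S_{1}$ with $\nVdash^{L}_{S_{2}}X$. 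Applying this twice to $\neg\neg A^{i}$, and using that the implication $\neg A^{i}=(A^{i}\to\bot)^{i}$ is monotonic (the implicational case in the proof of Theorem~\ref{thm:monotonicity}), one gets the working characterisation: $\Vdash^{L}_{S}\neg\neg A^{i}$ holds iff for every $S_{1}\supseteq S$ there is an extension $S_{3}\supseteq S_{1}$ such that $\Vdash^{L}_{S_{4}}A^{i}$ for every $S_{4}\supseteq S_{3}$. I also record the uniform reading of classical formulas: by Clause~3 (for $A\notin\Atb$) and Corollary~\ref{cor:class-iff-not-int} (for $A\in\Atb$), together with Lemma~\ref{lemma:p^c implies existence of extension}, $\Vdash^{L}_{S}A^{c}$ holds iff $A^{i}\nVdash^{L}_{S}\bot$ iff some extension of $S$ validates $A^{i}$; contrapositively, $\nVdash^{L}_{S}A^{c}$ iff $A^{i}$ fails in every extension of $S$.

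For $\neg\neg A^{i}\Vdash A^{c}$ (the easy direction) I would fix $S$ and $S'\supseteq S$, assume $\Vdash^{L}_{S''}\neg\neg A^{i}$ for all $S''\supseteq S'$, and take any $S''\supseteq S'$. If $\nVdash^{L}_{S''}A^{c}$, then $A^{i}$ fails in every extension of $S''$, so Lemma~\ref{lemma:newsimplification} gives $\Vdash^{L}_{S^{*}}\neg A^{i}$ for every $S^{*}\supseteq S''$; but then, by monotonicity of $\neg A^{i}$ and the degenerate reading of ``$\neg A^{i}\Vdash^{G}_{S''}\bot$'', we obtain $\nVdash^{L}_{S''}\neg\neg A^{i}$, contradicting the hypothesis. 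Hence $\Vdash^{L}_{S''}A^{c}$ for every $S''\supseteq S'$, i.e.\ $\neg\neg A^{i}\Vdash^{G}_{S}A^{c}$.

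For $A^{c}\Vdash\neg\neg A^{i}$ (the harder direction) I would fix $S$ and $S'\supseteq S$, assume $\Vdash^{L}_{S''}A^{c}$ for all $S''\supseteq S'$, take $S''\supseteq S'$, and invoke the characterisation: given an arbitrary $S_{1}\supseteq S''$ it suffices to produce an extension $S_{3}\supseteq S_{1}$ with $A^{i}$ valid in $S_{3}$ and in all of its extensions. Take $S_{3}$ to be a $\bot$-complete extension of $S_{1}$, which exists since $S_{1}$ is consistent (Lemma~\ref{lemma:maximalconsistencyexistence}). As $S_{3}\supseteq S'$, the hypothesis gives $\Vdash^{L}_{S_{3}}A^{c}$, hence $A^{i}\nVdash^{L}_{S_{3}}\bot$, hence by Lemma~\ref{lemma:p^c implies existence of extension} some $S_{5}\supseteq S_{3}$ validates $A^{i}$; but the persistency of validity at $\bot$-complete bases (Lemma~\ref{lemma:maximalconsistentpersistency}) then forces $\Vdash^{L}_{S_{3}}A^{i}$, and indeed $\Vdash^{L}_{S_{4}}A^{i}$ for every $S_{4}\supseteq S_{3}$, which is exactly what is required.

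The step I expect to be the main obstacle is precisely the harder direction, and the difficulty is the asymmetry caused by non-monotonicity: a formula $A^{i}$ with top label $i$ may still carry classical proper subformulas, so a base witnessing $\Vdash^{L}A^{i}$ need not keep doing so in extensions, and a naive attempt to show $A^{c}\Vdash\neg\neg A^{i}$ founders on exactly this point. The remedy is to route everything through $\bot$-complete extensions, whose defining persistency property (Lemma~\ref{lemma:maximalconsistentpersistency}) stabilises validity of \emph{arbitrary} formulas. Finally, the degenerate case $A=\bot$ requires no separate treatment: neither $\bot^{c}$ nor $\neg\neg\bot$ is ever locally valid, so in the easy direction the hypothesis is unsatisfiable and in the hard direction $\Vdash^{L}_{S_{3}}A^{c}$ already fails, so both entailments hold vacuously.
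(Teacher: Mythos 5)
Your proof is correct and follows essentially the same route as the paper's: both directions rest on Lemma~\ref{lemma:newsimplification} for the easy half, and the hard half is routed through a $\bot$-complete extension (Lemmas~\ref{lemma:maximalconsistencyexistence}, \ref{lemma:p^c implies existence of extension} and \ref{lemma:maximalconsistentpersistency}) in order to stabilise the validity of $A^{i}$, exactly as in the paper. The only difference is presentational --- you argue the hard direction directly via an explicit unfolding of $\Vdash^{L}_{S}\neg\neg A^{i}$, while the paper derives a contradiction from a hypothetical extension validating $\neg A^{i}$ --- but the underlying mechanism is identical.
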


\begin{proof} Let's first prove  that $A^c \Vdash_{S}^{G} \neg \neg A^i$ holds for arbitrary $S$.

Let $S$ be an arbitrary atomic system. Let $S'$ be any extension of $S$ such that, for all $S' \subseteq S'' $, we have $ \Vdash_{S''}^{L} A^c$. Then for every $S' \subseteq S''$ we have $A^i \nVdash_{S''}^{L} \bot$.  Suppose, for the sake of contradiction, that there is a $S' \subseteq S''$ such that  $\Vdash_{S''}^{L} \neg A^i$. Then $A^{i} \Vdash^{G}_{S''} \bot$. Now let $S^{\bot C}$ be a $\bot$-complete extension of $S''$. Since $S' \subseteq S^{\bot C}$ we have $A^i \nVdash_{S^{\bot C}}^{L} \bot$, hence by Lemma \ref{lemma:p^c implies existence of extension} there must be a $S^{\bot C} \subseteq S'''$ with $\Vdash^{L}_{S'''} A$. So by Lemma \ref{lemma:maximalconsistentpersistency} we conclude $\Vdash^{L}_{S^{\bot C}} A$ and also $\Vdash^{L}_{S'''} A$ for arbitrary extensions $S'''$ of $S^{\bot C}$. But since $S'' \subseteq S^{\bot C}$, $A \Vdash^{G}_{S''} \bot$ and $\Vdash^{L}_{S'''} A$ for every $S^{\bot C} \subseteq S'''$ we conclude  $\Vdash^{L}_{S'''} \bot$ and $\vdash_{S'''} \bot$ for all $S^{\bot C} \subseteq S'''$, which violates the consistency requirement. Therefore, for all $S' \subseteq S''$ we have  $\nVdash_{S''}^{L} \neg A^i$ and so by Lemma \ref{lemma:newsimplification} also $\Vdash^{L}_{S''} \neg \neg  A^{i}$, hence by arbitrariness of $S'$ we have $A^c \Vdash_{S}^{G} \neg \neg A^i$.

Now, let's prove $\neg \neg A^i \Vdash A^c$, which amounts to proving $\neg \neg A^i \Vdash_{S}^{G} A^c$ for arbitrary $S$.
Let $S$ be an arbitrary atomic system. Let $S'$ be any extension of $S$ such that, for all $S''$ for which $S' \subseteq S'' $, we have that $\Vdash_{S''}^{L} \neg \neg A^i$ holds. In particular, $\neg A^{i} \Vdash^{G}_{S'} \bot$ holds. Now assume for the sake of contradiction that, for some $S' \subseteq S''$, we have $ A^i \Vdash_{S''}^{L}  \bot$. Assume there is a $S'' \subseteq S'''$ such that $\Vdash^{L}_{S'''} A^{i}$. Then since $ A^i \Vdash_{S''}^{L}  \bot$ and $S'' \subseteq S'''$ we conclude  $\Vdash_{S'''}^{L}  \bot$ and obtain a contradiction, so for all $S'' \subseteq S'''$ we have $\nVdash^{L}_{S'''} A^{i}$. Since $\nVdash^{L}_{S'''} A^{i}$ for all $S'' \subseteq S'''$ we have $\Vdash^{L}_{S'''} \neg A^{i}$ for all $S'' \subseteq S'''$ by Lemma \ref{lemma:newsimplification}. But since $\neg A^{i} \Vdash^{G}_{S'} \bot$, $S' \subseteq S''$ and $\Vdash^{L}_{S'''} \neg A^{i}$ holds for all $S'' \subseteq S'''$ we conclude $\Vdash^{L}_{S'''} \bot$ for all such $S'''$, leading to a contradiction. Hence we conclude that there can be no extension $S''$ of $S'$ with $ A^i \Vdash_{S''}^{L}  \bot$, so for all $S' \subseteq S''$ we have $ A^i \nVdash_{S''}^{L}  \bot$ and thus $\Vdash_{S''}^{L} A^{c}$, hence since $S'$ was arbitrary we conclude $\neg \neg A^i \Vdash_{S}^{G} A^c$.

\end{proof}

The next two results are interesting, showing that global validity can be preserved locally but that this is not always the case.

\begin{theorem}\label{thm:Acai}
     $A^c \Vdash_{S}^{L} \neg \neg A^i$ does not hold for arbitrary $S$.
\end{theorem}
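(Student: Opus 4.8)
The plan is to exhibit a single atomic system $S$ and a formula $A$ for which $A^c \Vdash_{S}^{L} \neg \neg A^i$ fails. The natural candidate is to take $A$ to be a classical atom, say $A = p$ with $p \in \At$, and $S = \emptyset$ (or any system in which $p$ does not occur). The point is that $\Vdash_{\emptyset}^{L} p^c$ holds (since $p \nvdash_{\emptyset} \bot$, by Clause 2), but local validity, being evaluated only at $S$ and its extensions via Clause 7, is too weak to ``see'' that $p^c$ can be destroyed by a bad extension; in particular we want to find an extension $S'$ of $S$ with $\Vdash_{S'}^{L} p^c$ but $\nVdash_{S'}^{L} \neg \neg p^i$, so that the second part of Clause 7 fails.

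**Key steps.** First I would recall from Theorem~\ref{thm:monotonicity} (and its proof) that $p^c$ is not monotonic: with $S = \emptyset$ we have $\Vdash_{S}^{L} p^c$, and letting $S'$ be the system whose only rule concludes $\bot$ from $p$, we get $p \vdash_{S'} \bot$, hence $\nVdash_{S'}^{L} p^c$ by Clause 2 together with Lemma~\ref{lemma:neg}. Second, I need to find the extension witnessing failure of Clause 7 for $p^c \Vdash_{\emptyset}^{L} \neg \neg p^i$: this is just $S' = S = \emptyset$ itself. We have $\Vdash_{\emptyset}^{L} p^c$ trivially, so by Clause 7 it would have to follow that $\Vdash_{\emptyset}^{L} \neg \neg p^i$. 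But $\Vdash_{\emptyset}^{L} \neg \neg p^i$ means $\neg p^i \Vdash_{\emptyset}^{G} \bot$, which by Lemma~\ref{lemma:newsimplification}-style reasoning would require $\nVdash_{S''}^{L} \neg p^i$ for all $\emptyset \subseteq S''$; equivalently (via Lemma~\ref{lemma:neg}) $p \nvdash_{S''} \bot$ for all $S''$. This is false, as witnessed precisely by the $S'$ above with $p \vdash_{S'} \bot$, which gives $\Vdash_{S'}^{L} \neg p^i$. Hence $\neg p^i \Vdash_{\emptyset}^{G} \bot$ fails, so $\nVdash_{\emptyset}^{L} \neg \neg p^i$, contradicting what Clause 7 would force. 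Therefore $p^c \nVdash_{\emptyset}^{L} \neg \neg p^i$.

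**Main obstacle.** The computation is essentially routine once the right counterexample is chosen; the only delicate point is keeping the quantifier structure of local versus global validity straight. Concretely, one must be careful that $\neg \neg p^i$ is itself an intuitionistic (double-implication) formula, so that $\Vdash_{S}^{L} \neg \neg p^i$ unfolds as $\neg p^i \Vdash_{S}^{G} \bot$, and then that $\Vdash_{S''}^{L} \neg p^i$ unfolds as $p^i \Vdash_{S''}^{G} \bot$, which by the monotonicity of $p^i$ (Theorem~\ref{thm:monotonicity}, Lemma~\ref{lemma:neg}) collapses to $p \vdash_{S''} \bot$. The contrast with Theorem~\ref{thm:twoglobalequivalences} is exactly that global validity $A^c \Vdash \neg \neg A^i$ quantifies over all further extensions and so is immune to this phenomenon, whereas local validity at $S$ is not; making that contrast precise in the write-up is the one place where care is needed.
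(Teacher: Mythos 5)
Your proposal is correct and follows essentially the same route as the paper: the counterexample is $S=\emptyset$ with $A$ a classical atom $p^c$, the witnessing extension is the consistent system whose only rule derives $\bot$ from $p$, and the failure of $\Vdash_{\emptyset}^{L}\neg\neg p^i$ is traced to that extension validating $\neg p^i$ while $\bot$ is never valid. The only difference is presentational -- the paper argues by contradiction, transporting the assumed validity of $\neg\neg p^i$ to the extension via monotonicity and invoking Lemma~\ref{lemma:mon}, whereas you unfold $\neg p^i \Vdash_{\emptyset}^{G}\bot$ directly at $\emptyset$ -- but the counterexample and key lemmas are identical.
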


\begin{proof}
We prove the result for atoms. Consider the atomic system $\emptyset$, which contains no rules. Clearly, since $p \nvdash_{\emptyset} \bot$, we have $\Vdash_{\emptyset}^{L} p^c$. Suppose that $\Vdash_{\emptyset}^{L} \neg \neg p^i$. Consider now an extension $S$ of $\emptyset$ containing a rule which concludes $\bot$ from the premise $p$. Hence $p \vdash_S \bot$ and, due to Lemma~\ref{lemma:neg}, $\Vdash_{S}^{L} \neg p^i$ holds. Since $\neg \neg p^i$ is intuitionistic, it is monotonic, and thus $\Vdash_{\emptyset}^{L} \neg \neg p^i$ implies $\Vdash_{S}^{L} \neg \neg p^i$. By the semantic clause for implication we then have $\neg p^i \Vdash_{S}^{G} \bot$ and, since $\Vdash_{S}^{L} \neg p^i$,  by Lemma~\ref{lemma:mon}  we have $\Vdash_{S}^{L} \bot$, and thus $\vdash_S \bot$. Contradiction. Thus $\nVdash_{\emptyset}^{L} \neg \neg p^i$ and, since the empty set is an extension of itself, $p^c \nVdash_{\emptyset}^{L} \neg \neg p^i$. 
  
 \end{proof}

\begin{theorem} \label{thm:ic}
     $\neg \neg A^i \Vdash_{S}^{L} A^c$ holds for arbitrary $S$.
\end{theorem}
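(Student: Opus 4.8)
The plan is to obtain this local entailment almost immediately from the global equivalence $\neg\neg A^i \Vdash A^c$ already established in Theorem~\ref{thm:twoglobalequivalences}, using the fact that $\neg\neg A^i$ is an intuitionistic formula and hence monotonic by Theorem~\ref{thm:monotonicity}. The point is that monotonicity of the antecedent is exactly what collapses the extra quantifier-over-further-extensions built into global validity, so that the global statement descends to the local one.

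Concretely, I would first fix an arbitrary base $S$ and, by Clause~7 of Definition~\ref{def:wvalidity}, reduce the goal to showing: for every $S'$ with $S \subseteq S'$ and $\Vdash_{S'}^{L} \neg\neg A^i$, we have $\Vdash_{S'}^{L} A^c$. So fix such an $S'$. Next I would record two facts about $S'$: (i) $\neg\neg A^i$ is $S'$-monotonic, since it contains only intuitionistic subformulas (Theorem~\ref{thm:monotonicity}); and (ii) $\neg\neg A^i \Vdash_{S'}^{G} A^c$, since Theorem~\ref{thm:twoglobalequivalences} asserts this global entailment over \emph{all} bases, $S'$ included. Then Lemma~\ref{lemma:mon}, applied with $\neg\neg A^i$ in the role of $A$, with $A^c$ in the role of $B$, and with $S'$ as the base, immediately yields $\Vdash_{S'}^{L} A^c$ (it also yields the global version, which we do not need here). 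Since $S'$ was an arbitrary extension of $S$ satisfying the antecedent, Clause~7 gives $\neg\neg A^i \Vdash_{S}^{L} A^c$, as desired.

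I do not expect any genuine obstacle: the only mild subtlety is that the global equivalence must be invoked at the extension $S'$ rather than at $S$ itself, which is harmless because Theorem~\ref{thm:twoglobalequivalences} is stated for arbitrary bases. If one prefers to avoid citing Lemma~\ref{lemma:mon}, the same argument can be spelled out by hand: from $\Vdash_{S'}^{L} \neg\neg A^i$ and monotonicity one gets $\Vdash_{S''}^{L} \neg\neg A^i$ for every $S' \subseteq S''$, which verifies the antecedent of Clause~8 of the global statement instantiated at $S'$, delivering $\Vdash_{S''}^{L} A^c$ for all such $S''$ and in particular for $S'' = S'$. Either route is short; the conceptual content is simply that passing from $\Vdash^{G}$ to $\Vdash^{L}$ costs nothing when the hypothesis is monotonic, in contrast to Theorem~\ref{thm:Acai}, where the hypothesis $A^c$ is not.
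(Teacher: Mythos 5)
Your argument is correct in structure but takes a genuinely different route from the paper's. The paper proves Theorem~\ref{thm:ic} directly and self-containedly: it fixes $S'$ with $\Vdash^{L}_{S'}\neg\neg A^i$, unfolds the implication clause to get $\neg A^i \Vdash^{G}_{S'}\bot$, assumes $A^i\Vdash^{L}_{S'}\bot$ for contradiction, uses Lemma~\ref{lemma:newsimplification} to obtain $\Vdash^{L}_{S''}\neg A^i$ at every extension $S''$ of $S'$, and derives $\Vdash^{L}_{S''}\bot$, a contradiction. You instead descend from the already-proved global entailment $\neg\neg A^i \Vdash A^c$ of Theorem~\ref{thm:twoglobalequivalences} via monotonicity of the antecedent and Lemma~\ref{lemma:mon}. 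This is shorter and makes the conceptual point very cleanly (passing from $\Vdash^{G}$ to $\Vdash^{L}$ is free when the hypothesis is monotonic, in contrast with Theorem~\ref{thm:Acai}), but it imports the much heavier machinery behind Theorem~\ref{thm:twoglobalequivalences} ($\bot$-complete extensions, Lemmas~\ref{lemma:maximalconsistencyexistence} and~\ref{lemma:maximalconsistentpersistency}), whereas the paper's argument needs only Lemma~\ref{lemma:newsimplification}. One imprecision you should repair: you justify monotonicity of $\neg\neg A^i$ by saying it ``contains only intuitionistic subformulas'', which is the hypothesis of Theorem~\ref{thm:monotonicity}; but $A$ is arbitrary in Theorem~\ref{thm:ic}, so $\neg\neg A^i$ may contain classical subformulas (e.g.\ $A = p^{c}\wedge q^{c}$), and the theorem as stated does not cover it. The fact you need is nonetheless true: any formula whose outermost connective is an intuitionistic implication is monotonic directly from Clause~8 of Definition~\ref{def:wvalidity}, because every extension of $S'$ is an extension of $S$ --- the paper makes exactly this remark inside the proof of Theorem~\ref{thm:monotonicity} --- so you should appeal to that observation rather than to the subformula condition.
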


\begin{proof}

Let $S$ be any system. Consider any $S \subseteq S'$ such that $\Vdash_{S'}^{L} \neg \neg A^i$. By the clause for implication, $\neg A^i \Vdash_{S'}^{G} \bot$. Assume $A^i \Vdash_{S'}^{L} \bot$ for the sake of contradiction. Then clearly $\nVdash^{L}_{S''} A^{i}$ for all $S' \subseteq S''$, so $\Vdash^{L}_{S''} \neg A^{i}$ for all $S' \subseteq S''$ by Lemma \ref{lemma:newsimplification}. Since  $\neg A^i \Vdash_{S'}^{G} \bot$ and $\Vdash^{L}_{S''} \neg A^{i}$ for all $S' \subseteq S''$ we conclude $\Vdash^{L}_{S''} \bot$ for all $S' \subseteq S''$, which is a contradiction. Hence $A^i \nVdash_{S'}^{L} \bot$, so $\Vdash_{S'}^{L} A^{c}$, therefore by arbitrariness of $S'$ we have $\neg \neg A^{i} \Vdash^{L}_{S} A^{c}$.


\end{proof}
\begin{remark}
Put together, these results show that classical proof of $A$ is strictly weaker than an intuitionistic proof of $\neg \neg A$, and justify the motto presented in the Introduction.
\end{remark}

The following results present ecumenical versions of the excluded middle and Peirce's law.

\begin{theorem}\label{thm:mp}
     $\Vdash_{S}^{L} A^c \lor \neg A^i$ holds for arbitrary $S$.
\end{theorem}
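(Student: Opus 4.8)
The plan is to prove $\Vdash_{S}^{L} (A^c \lor \neg A^i)$ by unfolding the semantic clause for intuitionistic disjunction (Clause 6 of Definition~\ref{def:wvalidity}). So I fix an arbitrary extension $S \subseteq S'$ and an arbitrary $p \in \Atb$, assume both $A^c \Vdash^{L}_{S'} p^i$ and $\neg A^i \Vdash^{L}_{S'} p^i$, and aim to conclude $\Vdash^{L}_{S'} p^i$. The strategy is a case split on whether the disjuncts are validated somewhere above $S'$: either there is some extension validating $A^c$, or there is not.

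In the first case, suppose some $S' \subseteq S''$ has $\Vdash^{L}_{S''} A^c$. I would then want to feed this into the hypothesis $A^c \Vdash^{L}_{S'} p^i$; but that hypothesis is stated as \emph{local} consequence over extensions of $S'$, so I must be careful about which base I evaluate at. Since $A^c \Vdash^L_{S'} p^i$ means for every $S' \subseteq S''$, $\Vdash^L_{S''} A^c$ implies $\Vdash^L_{S''} p^i$, I directly get $\Vdash^L_{S''} p^i$. But I need $\Vdash^L_{S'} p^i$, so I would instead want to show that \emph{some} extension of $S'$ validates $A^c$ only helps if I can pull it back down. The cleaner route: in the second case, suppose \emph{no} extension of $S'$ validates $A^c$, i.e. $\nVdash^L_{S''} A^c$ for all $S' \subseteq S''$; by Corollary~\ref{cor:class-iff-not-int} this gives $A^i \Vdash^L_{S''} \bot$ for all such $S''$ (at least when $A$ is non-atomic; for atoms use Lemma~\ref{lemma:neg}), so in particular $\nVdash^L_{S''} A^i$ for all $S' \subseteq S''$, whence by Lemma~\ref{lemma:newsimplification} $\Vdash^L_{S'} \neg A^i$. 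Then the hypothesis $\neg A^i \Vdash^L_{S'} p^i$ applied at $S'$ itself (using $S' \subseteq S'$) yields $\Vdash^L_{S'} p^i$. So the genuinely remaining case is when $A^c$ \emph{is} validated in some extension of $S'$ but I still need validity at $S'$ — and here I would argue that in fact $\Vdash^L_{S'} A^c$ must already hold: if $\nVdash^L_{S'} A^c$, then by Corollary~\ref{cor:class-iff-not-int} $A^i \Vdash^L_{S'} \bot$, and by Lemma~\ref{lemma:p^c implies existence of extension} $A^c$ being validated in an extension means $A^i \nVdash^L_{S'}\bot$ only relative to that extension's perspective — I would need to reconcile these via a $\bot$-complete extension argument as in Theorem~\ref{thm:twoglobalequivalences}.

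Concretely, I expect the proof to go: fix $S \subseteq S'$, $p \in \Atb$, and the two entailment hypotheses. Either $\Vdash^L_{S'} A^c$ holds, in which case $A^c \Vdash^L_{S'} p^i$ with $S' \subseteq S'$ gives $\Vdash^L_{S'} p^i$ immediately; or $\nVdash^L_{S'} A^c$, in which case by Corollary~\ref{cor:class-iff-not-int} (or Lemma~\ref{lemma:neg} for atoms) we get $A^i \Vdash^L_{S'} \bot$, hence $\nVdash^L_{S'} A^i$; I then must upgrade this to $\nVdash^L_{S''} A^i$ for \emph{all} $S' \subseteq S''$ — this needs the stability direction, which for atoms follows from Lemma~\ref{lemma:neg} ($p\vdash_{S'}\bot$ persists upward) and for general $A$ would require showing $A^i\Vdash^G_{S'}\bot$, i.e. that $\nVdash^L_{S'}A^c$ propagates, perhaps via a $\bot$-complete extension as in the proof of Theorem~\ref{thm:twoglobalequivalences}. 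Once I have $\nVdash^L_{S''} A^i$ for all $S' \subseteq S''$, Lemma~\ref{lemma:newsimplification} gives $\Vdash^L_{S'} \neg A^i$, and then $\neg A^i \Vdash^L_{S'} p^i$ at $S'$ yields $\Vdash^L_{S'} p^i$. In both cases $\Vdash^L_{S'} p^i$, so by arbitrariness of $S'$ and $p$, Clause 6 gives $\Vdash^L_{S} (A^c \lor \neg A^i)$.

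The main obstacle is the middle step: passing from $\nVdash^L_{S'} A^c$ (equivalently $A^i \Vdash^L_{S'}\bot$) to the statement that $A^i$ fails in \emph{every} extension of $S'$, not just at $S'$. For classical atoms this is essentially Lemma~\ref{lemma:neg} (underivability of $\bot$ from $p$ persists, and $p^c$ failing means $p\vdash\bot$). For compound $A$ one needs that $A^i \Vdash^L_{S'} \bot$ actually holds — but note $A^i\Vdash^L_{S'}\bot$ as a local consequence already quantifies over all $S'\subseteq S''$ requiring $\Vdash^L_{S''}A^i \Rightarrow \Vdash^L_{S''}\bot$, and since $\bot$ is never validated (Lemma~\ref{lemma:bot}) this forces $\nVdash^L_{S''}A^i$ for every $S'\subseteq S''$ directly, with no further work. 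So once the equivalence $\nVdash^L_{S'}A^c \iff A^i\Vdash^L_{S'}\bot$ is invoked (Corollary~\ref{cor:class-iff-not-int} for compounds, Lemma~\ref{lemma:neg} for atoms), the upward propagation is automatic, and the proof closes cleanly.
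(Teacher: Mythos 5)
Your proof is correct and is essentially the paper's own argument: both proceed by the metalinguistic case split on whether $A^i \Vdash^{L}_{S'} \bot$ (equivalently, whether $\Vdash^{L}_{S'} A^c$), feeding each branch into the corresponding entailment hypothesis at $S'$ itself. The only cosmetic difference is that in the negative branch the paper obtains $\Vdash^{L}_{S'} \neg A^i$ via Lemma~\ref{lemma:localimpliesglobal} and the implication clause, whereas you unfold local consequence against Lemma~\ref{lemma:bot} and invoke Lemma~\ref{lemma:newsimplification} --- as your final paragraph correctly observes, the ``upgrade'' you worried about is automatic, so no $\bot$-complete extension machinery is needed.
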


\begin{proof} 

Let $S$ be any system. Let $S'$ be any extension of $S$ in which $A^c \Vdash_{S'}^{L} p^i$ and $\neg A^{i} \Vdash_{S'}^{L} p^i$for some $p \in \Atb$. If $A^{i} \Vdash^{L}_{S'} \bot$ then by Lemma \ref{lemma:localimpliesglobal} we have $A^{i} \Vdash^{G}_{S'} \bot$ and so $\Vdash^{L}_{S'} \neg A^{i}$, hence since $\neg A^{i} \Vdash_{S'}^{L} p^i$ we conclude $\Vdash^{L}_{S'} p^{i}$. If $A \nVdash^{L}_{S'} \bot$ then $\Vdash^{L}_{S'} A^{c}$, so since $A^c \Vdash_{S'}^{L} p^i$ we conclude $\Vdash^{L}_{S'} p^{i}$. Since either $A \Vdash^{L}_{S'} \bot$ or $A \nVdash^{L}_{S'} \bot$ holds this cover all possible cases, hence for all $S \subseteq S'$ we have that $A^c \Vdash_{S'}^{L} p^i$ and $\neg A^{i} \Vdash_{S'}^{L} p^i$ implies $\Vdash^{L}_{S'} p^{i}$ for any $p \in \Atb$, so we conclude $\Vdash^{L}_{S} A^{c} \lor \neg A^{i}$.

\end{proof}

This proof is particularly interesting because it combines our notion of classical proof with the weak definition of disjunction (in terms of local validity) to provide a simple proof of the excluded middle in the language through an application of the excluded middle in the metalanguage. This suggests that the weak disjunction here proposed becomes predominantly classical if combined with our notion of classical proof, and this makes it so that other classical results could also be proved via metalinguistical applications of the excluded middle. This would not be the case if we were to define disjunction through global validity; as will be seen in the strong ecumenical semantics, this strenghtened disjunction has much more of a intuitionistic flavour.

\begin{theorem}\label{thm:pl}
     $\Vdash_{S}^{L} ((A^i \to B) \to A^i) \to A^c$ holds for arbitrary $S$.
\end{theorem}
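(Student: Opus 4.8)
The plan is to strip off the outermost (by convention intuitionistic) implication using Clause~5 of Definition~\ref{def:wvalidity}, reducing the goal to the global entailment
\[
((A^i \to B) \to A^i) \Vdash_{S}^{G} A^c .
\]
Unfolding Clause~8 for this single-premise case, I fix an arbitrary extension $S'$ of $S$ such that $\Vdash_{S''}^{L}((A^i \to B) \to A^i)$ holds for every $S'' \supseteq S'$, and I must establish $\Vdash_{S''}^{L} A^c$ for every such $S''$. Note that any $S'' \supseteq S'$ has all of \emph{its} extensions above $S'$ too, so the standing hypothesis gives in particular $\Vdash_{S''}^{L}((A^i \to B) \to A^i)$, i.e. by Clause~5 $(A^i \to B) \Vdash_{S''}^{G} A^i$, for each $S'' \supseteq S'$.

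Before the main argument I would record the useful reformulation of the target: for any base $T$,
\[
\Vdash_{T}^{L} A^c \quad\text{iff}\quad \Vdash_{T'}^{L} A^i \text{ for some } T' \supseteq T .
\]
For $A \notin \Atb$ this is Clause~3 together with Lemma~\ref{lemma:p^c implies existence of extension}; for $A = p \in \At$ it is Corollary~\ref{cor:class-iff-not-int} together with Lemma~\ref{lemma:p^c implies existence of extension}; and for $A = \bot$ both sides are false by Lemma~\ref{lemma:bot}, so it holds trivially.

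The heart of the proof is a case split on whether $A$ ever gets proved above $S''$. \emph{Case (i):} there is some $S''' \supseteq S''$ with $\Vdash_{S'''}^{L} A^i$. Then by the reformulation $\Vdash_{S''}^{L} A^c$ immediately, and we are done. \emph{Case (ii):} no $S''' \supseteq S''$ satisfies $\Vdash_{S'''}^{L} A^i$. Then I claim $\Vdash_{S'''}^{L}(A^i \to B)$ for every $S''' \supseteq S''$: by Clause~5 this is $A^i \Vdash_{S'''}^{G} B$, and unfolding Clause~8 the relevant implication has, for every $T \supseteq S'''$, a false antecedent (indeed $T \supseteq S''$ gives $\nVdash_{T}^{L} A^i$ and $\Vdash_{T}^{L}\bot$ fails by Lemma~\ref{lemma:bot}), so it is vacuously true. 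Now apply $(A^i \to B) \Vdash_{S''}^{G} A^i$ via Clause~8 taking the inner base to be $S''$ itself: the antecedent ``$\Vdash_{V}^{L}(A^i \to B)$ for all $V \supseteq S''$'' is exactly what was just shown, so the consequent yields $\Vdash_{V}^{L} A^i$ for all $V \supseteq S''$; instantiating $V = S''$ contradicts the Case~(ii) assumption. Hence Case~(ii) is impossible, so Case~(i) holds and $\Vdash_{S''}^{L} A^c$; since $S''$ was an arbitrary extension of $S'$ meeting the antecedent of Clause~8, we conclude $((A^i \to B) \to A^i) \Vdash_{S}^{G} A^c$, that is, $\Vdash_{S}^{L} ((A^i \to B) \to A^i) \to A^c$.

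The main obstacle is the bookkeeping of the alternating universal quantifiers over base extensions produced by the two nested global validities — the outer one for the Peirce implication and the inner one hidden inside $A^i \to B$; care is needed so that the inner base in the application of Clause~8 is chosen as $S''$ and the antecedent matches what the vacuity argument delivers. The conceptual key that makes everything collapse is the observation that ``$A$ is never provable above $S''$'' makes $A^i \to B$ \emph{vacuously} globally valid everywhere above $S''$, which through the Peirce hypothesis forces $A^i$ to be valid at $S''$ — a contradiction. As with Theorem~\ref{thm:mp}, the argument is essentially a case split in the metalanguage (``either some extension of $S''$ proves $A$, or none does''), so it is the classical, consistency-based reading of $A^c$ that is doing the real work.
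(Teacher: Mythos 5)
Your proof is correct and follows essentially the same route as the paper's: both reduce the claim to the global entailment, observe that if no extension of $S''$ validates $A^i$ then $A^i \to B$ is vacuously valid everywhere above $S''$, and then use the Peirce hypothesis $(A^i \to B) \Vdash^{G}_{S''} A^i$ to force $\Vdash^{L}_{S''} A^i$ and reach a contradiction. The only differences are cosmetic --- you phrase the argument as a metalinguistic case split rather than a reductio from $\nVdash^{L}_{S''} A^c$, and you are somewhat more careful in treating the atomic, non-atomic and $\bot$ cases of $A^c$ uniformly via Corollary~\ref{cor:class-iff-not-int} and Lemma~\ref{lemma:p^c implies existence of extension}.
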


\begin{proof}

Let $S$ be a system. Let $S'$ be an extension of $S$ with $\Vdash_{S''}^{L} (A^i \to B) \to A^i$ for all $S' \subseteq S''$. Then, by definition, for all those $S''$ we have $(A^i \to B) \Vdash_{S''}^{G} A^i$. Assume, for the sake of contradiction, that for some of those $S''$ we have $\nVdash_{S''}^{L} A^c$. Then we have $A^i \Vdash_{S''}^{L} \bot$ by Clause 3 of Definition \ref{def:wvalidity}, and since in any $S'' \subseteq S'''$ with $\Vdash^{L}_{S'''} A^{i}$ we could obtain  $\Vdash^{L}_{S'''} \bot$ and thus a contradiction we clearly have $\nVdash_{S'''}^{L} A^{i}$ for all $S'' \subseteq S'''$. But notice that, for any such $S'''$, $S''' \subseteq S''''$ implies $S'' \subseteq S''''$ and so $\nVdash_{S''''}^{L} A^{i}$, hence we have that $A^{i} \Vdash^{G}_{S'''} B$ is vacuously satisfied in all such $S'''$, so we conclude that for all $S'' \subseteq S'''$ we have $\Vdash^{L}_{S'''} A^{i} \to B$. Since $(A^i \to B) \Vdash_{S''}^{G} A^i$ and for all $S'' \subseteq S'''$ it holds that $\Vdash^{L}_{S'''} A^{i} \to B$ we conclude that for all $S'' \subseteq S'''$ we have $\Vdash^{L}_{S'''} A^{i}$ and, in particular, $\Vdash^{L}_{S''} A^{i}$. But we had previously concluded from our assumption for contradiction that $A^i \Vdash_{S''}^{L} \bot$, so since $\Vdash^{L}_{S''} A^{i}$ we have $\Vdash^{L}_{S''} \bot$, which is indeed a contradiction. Hence we conclude that for no $S' \subseteq S''$ we have $\nVdash_{S''}^{L} A^c$, so $\Vdash_{S''}^{L} A^c$ holds for all $S' \subseteq S''$. Since $S'$ is an arbitrary extension of $S$ with $\Vdash_{S''}^{L} (A^i \to B) \to A^i$ holding for all $S' \subseteq S''$ and we have shown that $\Vdash^{L}_{S''} A^{c}$ also holds for all such $S''$ we conclude $(A^i \to B) \to A^i \Vdash_{S}^{G} A^{c}$, and thus $\Vdash_{S}^{L} ((A^i \to B) \to A^i) \to A^{c}$.


\end{proof}

There are, however, some drawbacks to our definitions, which are mainly due to the interaction between the clause for disjunction and the definition of local validity. 
For instance, we lose validities such as the following.
\begin{proposition}
    $ (A \lor B), (A \to C), (B \to C) \Vdash C$ does not hold in general.
\end{proposition}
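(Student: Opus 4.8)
The plan is to refute the disjunctive-syllogism-style entailment by exhibiting a single atomic system $S$ and an instantiation of $A$, $B$, $C$ witnessing the failure. Recall that $(A \lor B), (A \to C), (B \to C) \Vdash C$ means: for every $S$ and every extension $S'$ of $S$ in which all three premises are locally valid, $C$ is locally valid as well. So it suffices to find one consistent $S$ in which $\Vdash^{L}_{S} (A \lor B)$, $\Vdash^{L}_{S} (A \to C)$ and $\Vdash^{L}_{S} (B \to C)$, yet $\nVdash^{L}_{S} C$. The natural candidate, given the preceding theorems, is to use a classical disjunct: take $A = p^{c}$, $B = \neg p^{i}$, and $C$ to be some atom $q^{i}$ with $q \neq p$. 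By Theorem~\ref{thm:mp} we already have $\Vdash^{L}_{S} (A^{c} \lor \neg A^{i})$ for every $S$, so the disjunctive premise is free; the work is to arrange $S$ so that both implications hold while $q$ remains underivable.

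First I would fix $S = \emptyset$ (or any base not mentioning $p$ or $q$). Then $p \nvdash_{S} \bot$, so $\Vdash^{L}_{S} p^{c}$, and $\Vdash^{L}_{S} (p^{c} \lor \neg p^{i})$ by Theorem~\ref{thm:mp}. Next I need $\Vdash^{L}_{S} (p^{c} \to q^{i})$, i.e. $p^{c} \Vdash^{G}_{S} q^{i}$, and $\Vdash^{L}_{S} (\neg p^{i} \to q^{i})$, i.e. $\neg p^{i} \Vdash^{G}_{S} q^{i}$. For the second: by Lemma~\ref{lemma:neg}, $\Vdash^{L}_{S'} \neg p^{i}$ holds exactly when $p \vdash_{S'} \bot$; but the extensions in which $p \vdash_{S'} \bot$ are precisely those with no extension validating $p$ — and we want these extensions to also validate $q$, which is false in general (e.g. the extension of $S$ by the single rule $p \Rightarrow \bot$ validates $\neg p^{i}$ but not $q^{i}$). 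So the second implication already fails at $S=\emptyset$, and this is the obstruction to the naive choice — I must instead pick $C$ to be something globally weak enough that $\neg p^{i} \Vdash^{G}_{S} C$ and $p^{c} \Vdash^{G}_{S} C$ both hold vacuously or trivially, yet $\nVdash^{L}_{S} C$.

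The cleaner route, therefore, is to let $C$ itself be a non-monotonic classical formula and exploit the gap between global and local validity. Concretely, take $C = r^{c}$ for a fresh atom $r$, and choose $S$ so that (i) $p^{c} \Vdash^{G}_{S} r^{c}$ and (ii) $\neg p^{i} \Vdash^{G}_{S} r^{c}$ both hold, while (iii) $\nVdash^{L}_{S} r^{c}$, i.e. $r \vdash_{S} \bot$. For (iii) put the rule $r \Rightarrow \bot$ into $S$; then $S$ is still consistent and $\nVdash^{L}_{S} r^{c}$. But now $r \vdash_{S'} \bot$ for every $S \subseteq S'$, so $\nVdash^{L}_{S'} r^{c}$ for all extensions $S'$, whence $p^{c} \Vdash^{G}_{S} r^{c}$ and $\neg p^{i} \Vdash^{G}_{S} r^{c}$ each hold only if in every relevant extension the antecedent is globally unsatisfiable — which is not the case for $p^{c}$ (there is an extension validating $p^{c}$). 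So I would instead engineer $S$ so that $p$ is already incompatible with $S$, say $S = \{p \Rightarrow \bot,\ r \Rightarrow \bot\}$: then $\nVdash^{L}_{S'} p^{c}$ for every $S \subseteq S'$ (since $p \vdash_{S'} \bot$), so $p^{c} \Vdash^{G}_{S} r^{c}$ holds vacuously; and $\Vdash^{L}_{S'} \neg p^{i}$ holds in every such $S'$ by Lemma~\ref{lemma:neg}, so for (ii) I would want $\neg p^{i} \Vdash^{G}_{S} r^{c}$ to hold, which now requires $\Vdash^{L}_{S'} r^{c}$ in all extensions — false. This shows the genuine subtlety: the disjunctive premise forces one disjunct to be "live," and live classical disjuncts carry real global content. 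The resolution is to take $C$ to be a \emph{classical} formula whose global validity reduces, via Theorem~\ref{thm:twoglobalequivalences}, to intuitionistic double negation, and pick $A = p^{c}$, $B = \neg p^{i}$, $C = \neg\neg p^{i}$: then $p^{c} \Vdash^{G}_{S} \neg\neg p^{i}$ by Theorem~\ref{thm:twoglobalequivalences} and $\neg p^{i} \Vdash^{G}_{S} \neg\neg p^{i}$ trivially (indeed $\neg p^i \Vdash^G_S \bot \Vdash^G_S \neg\neg p^i$ fails, but $\neg p^{i} \Vdash^{G}_{S} \neg \neg p^{i}$ holds since validating $\neg p^i$ in an extension makes $\neg\neg p^i$ vacuously hold there too by monotonicity), so both implications $\Vdash^{L}_{S}(p^{c} \to \neg\neg p^{i})$ and $\Vdash^{L}_{S}(\neg p^{i} \to \neg\neg p^{i})$ hold for every $S$; the disjunction $\Vdash^{L}_{S}(p^{c} \lor \neg p^{i})$ holds by Theorem~\ref{thm:mp}; yet at $S = \emptyset$ we have $\nVdash^{L}_{\emptyset} \neg\neg p^{i}$ would need checking — and in fact $\Vdash^{L}_{\emptyset}\neg\neg p^i$ \emph{does} hold, so this too must be refined. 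The main obstacle, then, is precisely locating an instantiation where the two conditional premises hold \emph{globally} for free while the conclusion fails \emph{locally} at the base; I expect the working counterexample to use $C = p^{c}$ with $A = p^{c}$, $B = \neg p^{i}$ over a base $S$ in which $p^{c}$ is not locally valid, checking $p^c \to p^c$ trivially and $\neg p^i \to p^c$ via the observation that an extension validating $\neg p^i$ has no sub-extension validating $p^i$, so cannot force $p^c$ — meaning that implication \emph{also} fails, confirming that the failure of the entailment is robust and that the cleanest proof simply displays $S$ directly and verifies the three clauses by the lemmas above, with the disjunction handled by Theorem~\ref{thm:mp} and the two implications handled by the clause for $\to$ together with Lemma~\ref{lemma:neg} and Theorem~\ref{thm:twoglobalequivalences}.
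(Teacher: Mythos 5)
There is a genuine gap: your proposal never arrives at a working counterexample. Each candidate you try either fails (as you note) or is dismissed on the basis of claims that are themselves incorrect, and the final ``expected'' instantiation ($A=p^{c}$, $B=\neg p^{i}$, $C=p^{c}$) is accompanied by reasoning showing that the premise $\neg p^{i}\to p^{c}$ is \emph{not} valid --- which destroys the counterexample rather than ``confirming that the failure is robust'': to refute $\Gamma\Vdash C$ you need a base where \emph{all} premises hold and $C$ fails, so a failing premise is fatal, not helpful. Two of your intermediate claims are also false. First, $\Vdash^{L}_{\emptyset}\neg\neg p^{i}$ does \emph{not} hold; this is exactly what Theorem~\ref{thm:Acai} proves. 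Second, $\neg p^{i}\Vdash^{G}_{S}\neg\neg p^{i}$ does not hold in general: if $\Vdash^{L}_{S''}\neg p^{i}$ then $p\vdash_{S''}\bot$ persists to all extensions, so $\neg p^{i}$ holds in every extension of $S''$ and hence $\neg p^{i}\Vdash^{G}_{S''}\bot$ fails non-vacuously, giving $\nVdash^{L}_{S''}\neg\neg p^{i}$; validating $\neg p^{i}$ makes $\neg\neg p^{i}$ fail, not hold vacuously. This means your third candidate ($C=\neg\neg p^{i}$) breaks on the premise $B\to C$, not on the conclusion as you suggest.

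The idea you are missing is to take the conclusion to be an intuitionistic \emph{disjunction}. The paper's counterexample is $A=\neg p^{i}$, $B=p^{c}$, $C=\neg p^{i}\lor\neg\neg p^{i}$. Then $A\lor B$ is the valid ecumenical excluded middle (Theorem~\ref{thm:mp}); $\neg p^{i}\to(\neg p^{i}\lor\neg\neg p^{i})$ holds because $\neg p^{i}$ is monotonic and introduces the left disjunct; $p^{c}\to(\neg p^{i}\lor\neg\neg p^{i})$ holds because global validity of $p^{c}$ forces $\nVdash^{L}\neg p^{i}$ throughout, whence $\Vdash^{L}\neg\neg p^{i}$ by Lemma~\ref{lemma:newsimplification} and the right disjunct applies. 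But $C$ is a purely intuitionistic formula --- the weak excluded middle --- which is not an intuitionistic theorem and therefore fails at $S=\emptyset$. The whole point of the proposition is that the classical content of the valid disjunction $\neg p^{i}\lor p^{c}$ cannot be transferred by $\lor$-elimination into its intuitionistic shadow $\neg p^{i}\lor\neg\neg p^{i}$; none of your candidates exhibits this phenomenon.
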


\begin{proof}

Let $A = (\neg p^{i})$, $B = (p^c)$ and $C = (\neg p^i \lor \neg \neg p^i)$. Theorem \ref{thm:mp} has already shown that $(A \lor B) = \neg p^{i} \lor p^{c}$ is valid in all $S$, so now we show that this is also the case for $(A \to C) = \neg p \to (\neg p^i \lor \neg \neg p^i)$ and for $(B \to C) = p^{c} \to (\neg p^i \lor \neg \neg p^i)$. 

Let $S$ be any system and $S'$ any extension of it with $\Vdash^{L}_{S'}  \neg p^{i}$. Since $\neg p^{i}$ only containes intuitionistic subformulas Theorem \ref{thm:monotonicity} shows that it is monotonic, so by Theorem \ref{thm:monotoniccollapse} for all $S' \subseteq S''$ we have $\Vdash^{L}_{S''}  \neg p^{i}$. Then for any $S' \subseteq S''$ and any $q \in \Atb$ if both $ \neg p^{i} \Vdash^{L}_{S''} q^{i}$ and $\neg \neg p^{i} \Vdash^{L}_{S''} q^{i}$ we can combine $\neg p^{i} \Vdash^{L}_{S''} q^{i}$ with $\Vdash^{L}_{S''} \neg p^{i}$ to obtain $\Vdash^{L}_{S''} q^{i}$, so $\Vdash^{L}_{S'} \neg p^{i} \lor \neg \neg p^{i}$. Since $S'$ in arbitrary extension of $S$ with $\Vdash^{L}_{S'} \neg p^{i}$ we conclude $\neg p^{i} \Vdash^{L}_{S}  \neg p^{i} \lor \neg \neg p^{i}$ and thus also $\neg p^{i} \Vdash^{G}_{S}  \neg p^{i} \lor \neg \neg p^{i}$ by Lemma \ref{lemma:localimpliesglobal}, so we conclude $\Vdash^{L}_{S}  \neg p^{i} \to (\neg p^{i} \lor \neg \neg p^{i})$.

Now let $S$ be any system and $S'$ any extension of it with $\Vdash^{L}_{S''}  p^{c}$ for every $S' \subseteq S''$. Then for every such $S''$ we have $p \nvdash_{S''} \bot$. Assume that for some $S''$ we have $\Vdash^{L}_{S''} \neg p^{i}$. Then by Theorem \ref{lemma:neg} we have $p \vdash_{S''} \bot$, which yields a contradiction. Hence, for all such $S''$ we have $\nVdash^{L}_{S''} \neg p^{i}$, so by Lemma \ref{lemma:newsimplification} we get $\Vdash^{L}_{S''} \neg \neg p^{i}$. Now pick any such $S''$ and consider a $S'' \subseteq S'''$ with $\neg p^{i} \Vdash^{L}_{S'''} q^{i}$ and $\neg \neg p^{i} \Vdash^{L}_{S'''} q^{i}$ for some $q \in \Atb$. Since $\neg \neg p^{i}$ only contains intuitionistic subformulas we conclude by Theorem \ref{thm:monotonicity} that it is monotonic. Since $S'' \subseteq S'''$ and $\Vdash^{L}_{S''} \neg \neg p^{i}$, by Theorem \ref{thm:monotoniccollapse}  we have $\Vdash^{L}_{S'''} \neg \neg p^{i}$, hence $\neg \neg p^{i} \Vdash^{L}_{S'''} q^{i}$ and so $\Vdash^{L}_{S'''} q^{i}$. From this we conclude that $\Vdash^{L}_{S''} \neg p^{i} \lor \neg \neg p^{i}$. But notice that $S''$ is an arbitrary extension of $S'$, so we conclude that if $\Vdash^{L}_{S''}  p^{c}$ for every $S' \subseteq S''$ then $\Vdash^{L}_{S''} \neg p^{i} \lor \neg \neg p^{i}$ for all $S' \subseteq S''$, and by arbitrariness of $S'$ we conclude $p^{c} \Vdash^{G}_{S} \neg p^{i} \lor \neg \neg p^{i}$, and thus $\Vdash^{L}_{S} p^{c} \to ( \neg p^{i} \lor \neg \neg p^{i})$.

We have shown that $\Vdash^{L}_{S} A \lor B$, $\Vdash^{L}_{S} A \to C$ and $\Vdash^{L}_{S} B \to C$ hold in any $S$ for our choice of $A$, $B$ and $C$. Therefore, it suffices to show for some particular $S$ that $\nVdash^{L}_{S} \neg p^{i} \lor \neg \neg p^{i}$ (that is, $\nVdash^{L}_{S} C$) to prove the desired result.  But since both $\neg p^{i}$ and $\neg \neg p^{i}$ are purely intuitionistic formulas their semantics is identical to intuitionistic $\Bes$, so we can simply point out that $\neg p^{i} \lor \neg \neg p^{i}$ is not an intuitionistic theorem to conclude the desired result. In particular, $\nVdash^{L}_{\emptyset} \neg p^{i} \lor \neg \neg p^{i}$ (as only intuitionistic theorems hold in the empty system), so we have $ (A \lor B), (A \to C), (B \to C) \nVdash^{G}_{\emptyset} C$ and also $ (A \lor B), (A \to C), (B \to C) \nVdash C$.

\end{proof}

This issue seems to be caused by some interactions between the definition of implication (which uses global validity) and disjunction (which uses local validity). We could, of course, provide a weaker definition of implication that uses local validity, but since logical validity is defined by recourse to global validity by doing we would be giving away the deduction theorem (e.g. $p^{c} \Vdash^{L}_{\emptyset} \neg \neg p^{i}$ does not hold by Theorem \ref{thm:Acai} but $p^{c} \Vdash^{G}_{S} \neg \neg p^{i}$ holds for all $S$ by Theorem \ref{thm:twoglobalequivalences}, so we would have $\nVdash^{L}_{S} p^{c} \to \neg \neg p^{i}$ and so $\nVdash p^{c} \to \neg \neg p^{i}$ even though $p^{c} \Vdash \neg \neg p^{i}$). 

As seen in the proof of Theorem \ref{thm:mp}, for all $S$ and $A$ either we have $A^{i} \Vdash^{L}_{S} \bot$ and so $A^{i} \Vdash^{G}_{S} \bot$ by Lemma \ref{lemma:localimpliesglobal} thus also $\Vdash^{L}_{S} \neg A^{i}$ or $A^{i} \nVdash^{L}_{S} \bot$ and so $\Vdash^{L}_{S} \neg A^{c}$. In other words, the metalinguistic excluded middle is ``locally valid" in every $S$ due to our definition of classical proofs. On the other hand, even though from $A^{i} \Vdash^{G}_{S} \bot$ we can conclude $\Vdash^{L}_{S} \neg A^{i}$ it is not the case that from $A^{i} \nVdash^{G}_{S} \bot$ we can conclude $A^{i} \nVdash^{L}_{S} \bot$, which would be necessary for us to conclude $\Vdash^{L}_{S} \neg A^{c}$ (remember that Lemma \ref{lemma:globaltheoremimplieslocaltheorem} fails for non-empty contexts), so the metalinguistic excluded middle is not ``globally valid". This creates a certain tension between local and global definitions, as local definitions are able to draw on the local excluded middle to validate classical behavior but global definitions are not.

Although the semantic tension and the independent coexistence of classical and intuitionistic features are certainly desirable in the context of ecumenical semantics, the main issue with the definitions we have presented is that, since the usual rule for disjunction elimination is no longer sound, the weak ecumenical semantics is not easily captured in simple syntactic systems. This makes it so that the main motive for studying it lies in the clarification of the ways in which the global and local notions of validity relate to intuitionistic and classical concepts of proof. There might, of course, also be other combinations of local and global definitions which lead to interesting new ecumenical versions of $\Bes$, but we leave the study of any such combinations to future works.

The clarifications provided by the weak semantics on how the notion of classical proof behaves in $\Bes$ allow us to formulate a new kind of ecumenical semantics which fixes some of its issues. As such, we propose next an an ecumenical $\Bes$ with some stronger definitions and very different semantic properties.


\section{Strong ecumenical semantics}\label{sec:ses}

In the weak semantics we define that a formula has a classical proof in $S$ if and only if it is consistent in $S$. As a result, classical proofs are not monotonic, so we need to differentiate between local and global validity notions. But there is another possibility: we can define that a formula has a classical proof in $S$ if and only if it is consistent in $S$ \textit{and all its extensions}. This is still faithful to Hilbert's ideas concerning classical proofs and truth, and since we only consider extensions of $S$ it is also faithful to the proposal of the original $\Bes$.

\begin{definition}[Strong Validity]\label{def:Validity} Strong $S$-validity and strong validity are defined as follows.
    
\begin{enumerate}
    \item $\vDash_{S} p^i$ iff $\vdash_S p$, for  $p \in\Atb$;

    \medskip    
    
    \item $\vDash_{S} p^c$ iff $\forall S' (S \subseteq S'): p \nvdash_{S'} \bot$, for  $p \in \Atb$;

\medskip    
    
    \item $\vDash_{S} A^c$ iff $\forall S' (S \subseteq S'): (A)^i \nvDash_{S'} \bot$, for $A \notin \Atb$;

 \medskip     
        
    \item $\vDash_{S} (A \land B)^i$ iff $\vDash_{S} A$ and $\vDash_{S} B$;

\medskip      

     \item $\vDash_{S} (A \to B)^i$ iff $A \vDash_{S}  B$; 

\medskip       

      \item $\vDash_{S} (A \lor B)^i$ iff  $ \forall S' (S \subseteq S') $ and all $p \in \Atb$: $A \vDash_{S'} p^i$ and $B \vDash_{S'} p^i$ implies $\vDash_{S'}  p^i$;
      
\medskip  

\item For non-empty $\Gamma$, we have that $\Gamma \vDash_{S} A$ iff for all $S'$ such that $S \subseteq S'$ it holds that, if $\vDash_{S'} B$ for all $B \in \Gamma$, then $\vDash_{S'} A$;

\medskip

      \item $\Gamma \vDash A $ iff $\Gamma \vDash_{S} A$ for all $S$.
\end{enumerate}
\end{definition}

Weak validity uses a non-monotonic notion, whereas in strong validity classical validities are monotonic by definition. Since by Theorem \ref{thm:monotoniccollapse} $S$-monotonicity induces a collapse between $\Vdash_{S}^{L}$ and $\Vdash_{S}^{G}$ and all formulas of the strong ecumenical semantics are monotonic,  local and global validities are non-distinguishable.

\section{An ecumenical proof system for strong ecumenical validity} \label{sec:ND}
In this section we will prove soundness and completeness of the natural deduction ecumenical system $\NE$ presented in Fig.~\ref{fig:NEB} (which is a version of the system $CIE$ presented in \cite{VictorEcumenical} with a restriction on iterations of the ``classicality'' operator)
w.r.t. the 
strong ecumenical $\Bes$. 

We say that $\Gamma \vdash_{\NE} A$ holds if and only if there is a deduction of $A$ from $\Gamma$ using the rules of $\NE$.

\begin{figure}[ht]
\begin{tabular}{lc@{\quad}l}
\end{tabular}
\begin{prooftree}
\AxiomC{$\Gamma\;[A]$}
\noLine
\UnaryInfC{$\Pi$}
\noLine
\UnaryInfC{$B$}
\RightLabel{\scriptsize{$\to$-int}}
\UnaryInfC{$A \to B$}
\DisplayProof
\qquad
\AxiomC{$\Gamma_1$}
\noLine
\UnaryInfC{$\Pi_1$}
\noLine
\UnaryInfC{$A \to B$}
\AxiomC{$\Gamma_2$}
\noLine
\UnaryInfC{$\Pi_2$}
\noLine
\UnaryInfC{$A$}
\RightLabel{\scriptsize{$\to$-elim}}
\BinaryInfC{$B$}
\end{prooftree}

\begin{prooftree}
\AxiomC{$\Gamma$}
\noLine
\UnaryInfC{$\Pi$}
\noLine
\UnaryInfC{$A_j$}
\RightLabel{\scriptsize{$\vee_j$-int}}
\UnaryInfC{$A_1\vee A_2$}
\DisplayProof
\qquad
\AxiomC{$\Gamma_1$}
\noLine
\UnaryInfC{$\Pi_1$}
\noLine
\UnaryInfC{$A \vee B$}
\AxiomC{$\Gamma_2\;[A]$}
\noLine
\UnaryInfC{$\Pi_2$}
\noLine
\UnaryInfC{$C$}
\AxiomC{$\Gamma_3\;[B]$}
\noLine
\UnaryInfC{$\Pi_3$}
\noLine
\UnaryInfC{$C$}
\RightLabel{\scriptsize{$\vee$-elim}}
\TrinaryInfC{$C$}
\end{prooftree}

\begin{prooftree}
\AxiomC{$\Gamma_1$}
\noLine
\UnaryInfC{$\Pi_1$}
\noLine
\UnaryInfC{$A$}
\AxiomC{$\Gamma_2$}
\noLine
\UnaryInfC{$\Pi_2$}
\noLine
\UnaryInfC{$B$}
\RightLabel{\scriptsize{$\wedge$-int}}
\BinaryInfC{$A \wedge B$}
\DisplayProof
\qquad
\AxiomC{$\Gamma$}
\noLine
\UnaryInfC{$\Pi$}
\noLine
\UnaryInfC{$A_1\wedge A_2$}
\RightLabel{\scriptsize{$\wedge_j$-elim}}
\UnaryInfC{$A_j$}
\end{prooftree}

\begin{prooftree}
\AxiomC{$\Gamma$}
\noLine
\UnaryInfC{$\Pi$}
\noLine
\UnaryInfC{$\bot$}
\RightLabel{\scriptsize{$\bot$-elim}}
\UnaryInfC{$A$}
\DisplayProof
\qquad
\AxiomC{$[\neg A^i]$}
\noLine
\UnaryInfC{$\Pi$}
\noLine
\UnaryInfC{$\bot$}
\RightLabel{\scriptsize{$A^c$-int}}
\UnaryInfC{$A^c$}
\DisplayProof
\qquad
\AxiomC{$\Gamma_1$}
\noLine
\UnaryInfC{$\Pi_1$}
\noLine
\UnaryInfC{$A^c$}
\AxiomC{$\Gamma_2$}
\noLine
\UnaryInfC{$\Pi_2$}
\noLine
\UnaryInfC{$\neg A^i$}
\RightLabel{\scriptsize{$A^c$ - elim}}
\BinaryInfC{$\bot$}
\end{prooftree}

\caption{Ecumenical natural deduction system $\NE$.}\label{fig:NEB}
\end{figure}

\subsection{Soundness}

Contrary to what happens with completeness, the proof of soundness follows easily from the proof in \cite{Sandqvist2015IL}.

\begin{lemma}
    $\vDash_{S} A^c$ iff $\vDash_{S} \neg \neg A^i$.
\end{lemma}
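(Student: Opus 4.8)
The plan is to prove both directions by unfolding the strong validity clauses and reducing everything to the characterization of classical validity in terms of consistency of the intuitionistic version, together with the clause for negation (implication into $\bot$). The key fact to exploit is that in the strong semantics, by Definition~\ref{def:Validity}, we have $\vDash_S A^c$ iff $(A)^i \nvDash_{S'} \bot$ for all $S \subseteq S'$ (the clause for non-atomic $A$, and the analogous atomic clause), while $\vDash_S \neg\neg A^i$ means $\vDash_S ((A \to \bot)^i \to \bot)^i$, which unfolds via Clause 5 to $\neg A^i \vDash_S \bot$, and then again via Clause 5 and Clause 7. So the whole statement is really a bookkeeping exercise about when $\neg A^i \vDash_S \bot$ holds versus when $A^i \nvDash_{S'} \bot$ holds for all extensions.

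First I would treat the case $A \notin \Atb$ and the case $A \in \Atb$ in parallel, since by Clauses 2 and 3 the definition of $\vDash_S A^c$ has essentially the same shape (consistency of $p$ respectively of $A^i$ in all extensions). For the direction ($\vDash_S A^c \Rightarrow \vDash_S \neg\neg A^i$): assume $A^i \nvDash_{S'} \bot$ for all $S \subseteq S'$. I need $\neg A^i \vDash_S \bot$, i.e. for every $S \subseteq S'$ with $\vDash_{S'} \neg A^i$ we must have $\vDash_{S'} \bot$; but $\vDash_{S'}\neg A^i$ means $A^i \vDash_{S'} \bot$, directly contradicting the hypothesis, so there is no such $S'$ and the implication holds vacuously — hence $\neg A^i \vDash_S \bot$, which by Clause 5 is exactly $\vDash_S \neg\neg A^i$. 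For the converse ($\vDash_S \neg\neg A^i \Rightarrow \vDash_S A^c$): assume $\neg A^i \vDash_S \bot$ and suppose for contradiction that $A^i \vDash_{S'} \bot$ for some $S \subseteq S'$. Then for every $S' \subseteq S''$ we have $\nvDash_{S''} A^i$ (else compose to get $\vDash_{S''}\bot$, contradicting Lemma~\ref{lemma:bot} — or rather its strong-semantics analogue, consistency), so $A^i \vDash_{S''} \bot$ holds vacuously at every such $S''$, giving $\vDash_{S''} \neg A^i$ for all $S' \subseteq S''$; but $\neg A^i \vDash_S \bot$ applied at the extension $S'$ then forces $\vDash_{S'} \bot$, contradiction. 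Hence $A^i \nvDash_{S'} \bot$ for all $S \subseteq S'$, which is Clause 3 (or Clause 2 via Corollary~\ref{cor:class-iff-not-int}-style reasoning adapted to the strong setting) giving $\vDash_S A^c$.

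I would need to be a little careful that the strong semantics has the analogue of Lemma~\ref{lemma:bot}, namely $\nvDash_S \bot$ for all $S$ — this follows immediately from the consistency requirement exactly as in the weak case, and I would either cite the lemma or note the one-line argument. I would also use the strong analogue of Lemma~\ref{lemma:newsimplification} (if $\nvDash_{S''} A$ for all $S' \subseteq S''$ then $\vDash_{S''} \neg A^i$ for all such $S''$), which holds by the same vacuous-satisfaction argument since in the strong semantics local and global validity coincide (monotonicity), so implication simplifies accordingly.

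The main obstacle, and the only subtle point, is making sure the atomic case $A \in \Atb$ is handled correctly: when $A = p \in \At$, the clause for $\vDash_S p^c$ speaks about $p \nvdash_{S'} \bot$ (syntactic deducibility) rather than $p^i \nvDash_{S'} \bot$ (semantic), so I need the equivalence between $p \vdash_{S'} \bot$ and $p^i \vDash_{S'} \bot$ — this is precisely Lemma~\ref{lemma:neg} in the weak setting and its proof transfers verbatim to the strong setting (it only used the consistency requirement and monotonicity of intuitionistic atoms, both of which still hold). With that equivalence in hand the atomic and non-atomic cases collapse into the same argument. The case $A = \bot$ itself is degenerate and can be dispatched separately using Lemma~\ref{lemma:bot}'s analogue. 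I expect the whole proof to be under half a page once these citations are lined up.
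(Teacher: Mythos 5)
Your proof is correct and follows essentially the same route as the paper's: both directions are the vacuous-satisfaction argument, unfolding $\vDash_{S'}\neg A^i$ to $A^i \vDash_{S'}\bot$ via the implication clause and using the consistency requirement to rule out $\vDash_{S'}\bot$. The extra care you take with the atomic case (bridging the syntactic $p \nvdash_{S'} \bot$ in Clause 2 with the semantic $p^i \nvDash_{S'} \bot$ via the strong analogue of Lemma~\ref{lemma:neg}) is a detail the paper's own proof silently elides, but it does not change the argument.
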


\begin{proof}
    ($\Rightarrow$) Suppose $\vDash_{S} A^c$. Then, for all $S \subseteq S'$, we have $A^i \nvDash_{S} \bot$. Suppose that, for some of those $S'$, $\vDash_{S'} \neg A^i$ Then we have $A^i \vDash_{S'} \bot$, which yields a contradiction. Thus for all such $S'$ we have $\nvDash_{S'} \neg A^i$, hence $\neg A^{i} \vDash_{S} \bot$ is vacuously satisfied and $\vDash_{S} \neg \neg A^i$ holds.

  ($\Leftarrow$) Suppose $\vDash_{S} \neg \neg A^i$. Then, for all $S \subseteq S'$, $\neg A^i \vDash_{S'} \bot$. Suppose that for some of those $S'$ it holds that $A^i \vDash_{S'} \bot$. Then we have $\vDash_{S'} \neg A^i$, and thus $\vDash_{S'} \bot$ and $\vdash_{S'} \bot$. Contradiction. Hence $A^i \nvDash_{S'} \bot$ for all $S \subseteq S^n$, and thus $\vDash_{S} A^c$.  
\end{proof}

\begin{theorem}[Soundness]\label{thm:soundness}
If $\Gamma\vdash_{\NE} A$ then $\Gamma \vDash A$.
\end{theorem}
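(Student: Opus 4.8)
The plan is to prove soundness by induction on the structure of the derivation $\Pi$ witnessing $\Gamma \vdash_{\NE} A$, showing that each rule of $\NE$ preserves strong validity, i.e. that if the premises are (strongly) valid relative to every base then so is the conclusion. Since strong validity collapses local and global notions (all formulas being monotonic, by Theorem~\ref{thm:monotoniccollapse}), one does not have to track the two separately, which simplifies the bookkeeping. The purely intuitionistic rules ($\to$-int, $\to$-elim, $\wedge$-int, $\wedge$-elim, $\vee_j$-int, $\vee$-elim, $\bot$-elim) are handled exactly as in Sandqvist~\cite{Sandqvist2015IL}: the clauses for $\wedge$, $\to$ and $\vee$ in Definition~\ref{def:Validity} match Sandqvist's clauses, and $\bot$-elim is sound because, by Lemma~\ref{lemma:bot} (in its strong-semantics incarnation), $\bot$ is never valid in any consistent base, so a base validating all of $\Gamma$ together with the premise $\bot$ simply cannot exist and the implication holds vacuously.

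The genuinely new work concerns the two rules governing the classicality operator, $A^c$-int and $A^c$-elim. For $A^c$-int: assuming inductively that $\neg A^i \vDash_S \bot$ holds whenever the (discharged) context is satisfied, I would show $\vDash_S A^c$, i.e. $A^i \nvDash_{S'} \bot$ for every $S \subseteq S'$. Using the auxiliary lemma just proved that $\vDash_S A^c$ iff $\vDash_S \neg\neg A^i$, it suffices to derive $\vDash_S \neg\neg A^i$, which unwinds (via the $\to$-clause applied twice and Definition~\ref{def:Validity}) to: for all $S'\supseteq S$, if $\vDash_{S'}\neg A^i$ then $\vDash_{S'}\bot$ — and this is exactly the induction hypothesis. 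For $A^c$-elim: from $\vDash_S A^c$ and $\vDash_S \neg A^i$ I must conclude $\vDash_S \bot$; but $\vDash_S A^c$ gives $A^i \nvDash_S \bot$ while $\vDash_S \neg A^i$ gives, by the $\to$-clause, $A^i \vDash_S \bot$, and these two are contradictory unless there is no base validating the relevant context — so again the conclusion $\vDash_S\bot$ holds vacuously, as there is no base in which both premises are genuinely satisfied. (Equivalently: assuming $\vDash_S A^c$ and $\vDash_S \neg A^i$ simultaneously leads, via the two clauses, to $A^i\vDash_S\bot$ and $A^i\nvDash_S\bot$, a contradiction, so the hypothetical situation never arises.)

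The main subtlety I anticipate is the correct handling of \emph{discharged hypotheses} in $A^c$-int and in the intuitionistic rules with assumption discharge ($\to$-int, $\vee$-elim): the induction hypothesis must be stated for the \emph{hypothetical} judgment $\Gamma, [\text{discharged}] \vdash_{\NE} A$ and fed through the $S$-consequence clause (Clause~7) uniformly over all extensions $S'$, so one must be careful that the quantifier alternation in the semantic clauses lines up with the scoping of discharged labels. A second point to keep straight is that the $A^c$-int rule discharges $\neg A^i$ but keeps an ambient context $\Gamma$ (in Fig.~\ref{fig:NEB} the premise sits under $[\neg A^i]$ only, but in general there may be side-hypotheses), so the statement proved is a conditional one over bases satisfying $\Gamma$. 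Beyond that, every case is a routine unfolding of Definition~\ref{def:Validity} together with the monotonicity collapse and the consistency requirement on bases, so I expect no further obstacle; the heart of the argument is the observation that the two classicality rules are sound precisely because bases are required to be consistent and $\bot$ is therefore never valid.
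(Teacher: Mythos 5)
Your proposal is correct and follows essentially the same route as the paper: both hinge on the lemma that $\vDash_{S} A^{c}$ iff $\vDash_{S} \neg\neg A^{i}$, reduce the intuitionistic rules to Sandqvist's soundness argument (using the monotonicity collapse of local and global validity), and handle $\bot$-elim by the vacuity argument forced by the consistency requirement. The only cosmetic difference is that you treat $A^{c}$-elim by a direct clash between $A^{i}\vDash_{S}\bot$ and $A^{i}\nvDash_{S}\bot$, whereas the paper folds it into the translation $A^{c}\mapsto\neg\neg A^{i}$, under which it becomes an instance of $\to$-elim.
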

\begin{proof}
Due to the collapse between local and global consequence in strong semantics, if we eliminate all clauses for classical formulas and define $A^c = \neg \neg A^i$ we get an equivalent definition. Then, since all the remaining semantic clauses are just Sandqvist's clauses for intuitionistic logic, our proof of soundness follows from his (provided $A^c$ is treated as $\neg \neg A^i$ on induction steps). The only important difference is in the treatment of $\bot - elim$, which is slightly different due to the consistency requirement. The induction hypothesis gives us $\Gamma \vDash \bot$ and thus $\Gamma \vDash_S \bot$ for arbitrary $S$, from which we conclude that for no $S \subseteq S'$ we have $\vDash_S B$ for all $B \in \Gamma$. But then $\Gamma \vDash_{S'} A$ holds vacuously for all such $S'$, which shows $\Gamma \vDash_S A$ for arbitrary $A$ and arbitrary $S$ and thus $\Gamma \vDash A$.
\end{proof}


\subsection{Completeness}

We will now prove completeness for strong ecumenical semantics with respect to the natural deduction system shown in Fig. 1.  We use an adaptation of Sandqvist's proof \cite{Sandqvist2015IL}; changes are made only to deal with classical formulas and the consistency constraint.

\begin{lemma}\label{lemma:syntacticextensionexistence}
    $p \nvdash_S \bot$ iff the system $S'$ obtained by adding a rule which concludes $p$ from empty premises to a consistent $S$ is also consistent.
\end{lemma}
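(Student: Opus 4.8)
The plan is to prove the biconditional by establishing the contrapositive in the forward direction and a direct composition argument in the backward direction; in fact both directions are essentially the argument already carried out inside the proof of Lemma~\ref{lemma:neg}, isolated here as a standalone fact. Let $S'$ denote the system obtained from a consistent $S$ by adding a single rule $\alpha$ concluding $p$ from no premises (an atomic axiom for $p$).

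For the direction ``$p\nvdash_S\bot$ implies $S'$ consistent'', I would argue by contradiction: suppose $\vdash_{S'}\bot$ via some deduction $\Pi$. Since $S'=S\cup\{\alpha\}$, either $\Pi$ uses no application of $\alpha$, in which case $\Pi$ is already a deduction in $S$ and witnesses $\vdash_S\bot$, contradicting the consistency of $S$; or $\Pi$ uses $\alpha$, in which case I replace every application of $\alpha$ (which has empty premise set and conclusion $p$) by a leaf assumption $p$. The resulting figure uses only rules of $S$ and has conclusion $\bot$ with all undischarged assumptions among $\{p\}$, hence witnesses $p\vdash_S\bot$, contradicting the hypothesis. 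Either way we reach a contradiction, so $S'$ is consistent.

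For the converse, ``$S'$ consistent implies $p\nvdash_S\bot$'', I would again argue contrapositively: if $p\vdash_S\bot$, take a deduction of $\bot$ in $S$ whose undischarged assumptions are among $\{p\}$; since all rules of $S$ belong to $S'$, this is also a deduction in $S'$, and appending the axiom $\alpha$ (which proves $p$ from nothing in $S'$) to discharge/replace the assumption $p$ yields a deduction of $\bot$ in $S'$ from no undischarged assumptions, i.e.\ $\vdash_{S'}\bot$, so $S'$ is inconsistent.

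I do not expect any real obstacle here — the lemma is a bookkeeping fact about deductions, and the only point requiring a modicum of care is the manipulation of discharged-assumption sets when splicing the axiom $\alpha$ in and out of deductions (making sure that replacing $\alpha$-applications by assumptions $p$ genuinely produces a deduction witnessing $p\vdash_S\bot$ in the sense of Definition~\ref{def:deduc}, and conversely that grafting $\alpha$ on top genuinely eliminates $p$ from the undischarged assumptions). This is exactly the ``two cases'' reasoning already used in the $(\Leftarrow)$ part of Lemma~\ref{lemma:neg}, so the proof should be short.
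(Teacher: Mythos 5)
Your proposal is correct and follows essentially the same route as the paper's own proof: the forward direction via the two-case analysis on whether the deduction of $\bot$ uses the new axiom $\alpha$ (replacing its applications by assumptions $p$ to witness $p\vdash_S\bot$), and the converse by composing a deduction of $p\vdash_S\bot$ with the axiom inside $S'$. The only difference is cosmetic: you spell out the splicing of $\alpha$ slightly more explicitly than the paper does.
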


\begin{proof} This is strengthened syntactic counterpart of Lemma \ref{lemma:p^c implies existence of extension}.

      Assume $p \nvdash_{S} \bot$. Let $S'$ be the system obtained by adding a rule concluding $p$ from empty premises to $S'$. Suppose that $\vdash_{S'} \bot$. Then there is a deduction $\Pi$ in $S'$ showing $\bot$. If it does not use the new rule added to $S'$, $\Pi$ is also a deduction in $S$, so $S$ would violate the consistency requirement. If it does use the new rule, by replacing every occurrence of it by an assumption with shape $p$ we get a deduction showing $p \vdash_S \bot$, which contradicts our initial hypothesis. Since a contradiction is obtained in both cases, we conclude $\nvdash_{S'} \bot$

      For the other direction, assume the system $S'$ obtained by adding a rule which concludes $p$ from empty premises to $S$ is consistent. Assume $p \vdash_S \bot$. Since $S \subseteq S'$ we have $\vdash_{S'} \bot$, violating the consistency requirement. Thus, $p \nvdash_S \bot$.  
\end{proof}


    Let $\Gamma^{Sub}$ be the set of all subformulas of formulas contained in a set $\Gamma$. Let $\Delta^c_\Gamma = \{ \neg A^i | A^c \in \Gamma \}$. Now, let $\Gamma^{\star} = ((\Gamma \cup \{ A\})^{Sub}) \cup (\Delta^c_{(\Gamma \cup \{A\})^{Sub})}))$
    
    We start by producing some mapping $\alpha$ which assigns to each formula $A$ in $\Gamma^{\star}$ a unique $p^A$ such that:

    \begin{enumerate}
        \item $p^{A}  = q$, if $A = q^i$ (for $q \in \Atb$);

        \medskip
        \item Else, $p^A \in \At$ and $(p^A)^i \notin \Gamma^{\star}$.
    \end{enumerate}

 Notice that, since the assigned atoms are unique, $p^A = p^B$ iff $A = B$.

    Consider now any semantic consequence $\Gamma \vDash A$. Fix any mapping $\alpha$ for $\Gamma^{\star}$. Following Sandqvist's strategy, we start by using the mapping $\alpha$ to build an atomic system $\mathcal{N}$ which is finely tailored for our proof.

    We start by defining atomic correspondents of the natural deduction rules (for $i \in \{1,2\}$):

    \begin{tabular}{lc@{\quad}l}
\end{tabular}
\begin{prooftree}
\AxiomC{$\Gamma\;[p^{A}]$}
\noLine
\UnaryInfC{$\Pi$}
\noLine
\UnaryInfC{$p^{B}$}
\RightLabel{\scriptsize{$p^{A \to B} - int$}}
\UnaryInfC{$p^{A \to B}$}
\DisplayProof
\qquad
\AxiomC{$\Gamma_1$}
\noLine
\UnaryInfC{$\Pi_1$}
\noLine
\UnaryInfC{$p^{A\to B}$}
\AxiomC{$\Gamma_2$}
\noLine
\UnaryInfC{$\Pi_2$}
\noLine
\UnaryInfC{$p^{A}$}
\RightLabel{\scriptsize{$p^{A \to B} - elim$}}
\BinaryInfC{$p^{B}$}
\end{prooftree}

\begin{prooftree}
\AxiomC{$\Gamma$}
\noLine
\UnaryInfC{$\Pi$}
\noLine
\UnaryInfC{$p^{A_i}$}
\RightLabel{\scriptsize{$p^{A_1 \vee A_2} - int$}}
\UnaryInfC{$p^{A_1\vee A_2}$}
\DisplayProof
\quad
\AxiomC{$\Gamma_1$}
\noLine
\UnaryInfC{$\Pi_1$}
\noLine
\UnaryInfC{$p^{A\vee B}$}
\AxiomC{$\Gamma_2\;[p^{A}]$}
\noLine
\UnaryInfC{$\Pi_2$}
\noLine
\UnaryInfC{$q$}
\AxiomC{$\Gamma_3\;[p^{B}]$}
\noLine
\UnaryInfC{$\Pi_3$}
\noLine
\UnaryInfC{$q$}
\RightLabel{\scriptsize{$p^{A \vee B}, q - elim$}}
\TrinaryInfC{$q$}
\end{prooftree}

\begin{prooftree}
\AxiomC{$\Gamma_1$}
\noLine
\UnaryInfC{$\Pi_1$}
\noLine
\UnaryInfC{$p^{A}$}
\AxiomC{$\Gamma_2$}
\noLine
\UnaryInfC{$\Pi_2$}
\noLine
\UnaryInfC{$p^{B}$}
\RightLabel{\scriptsize{$p^{A \wedge B} - int$}}
\BinaryInfC{$p^{A \wedge B}$}
\DisplayProof
\qquad
\AxiomC{$\Gamma$}
\noLine
\UnaryInfC{$\Pi$}
\noLine
\UnaryInfC{$p^{A_1\wedge A_2}$}
\RightLabel{\scriptsize{$p^{A_1 \wedge_i A_2} - elim$}}
\UnaryInfC{$p^{A_i}$}
\end{prooftree}

\begin{prooftree}
\AxiomC{$[p^{\neg A^i}]$}
\noLine
\UnaryInfC{$\Pi$}
\noLine
\UnaryInfC{$\bot$}
\RightLabel{\scriptsize{$p^{A^c} - int$}}
\UnaryInfC{$p^{A^c}$}
\DisplayProof
\qquad
\AxiomC{$\Gamma_1$}
\noLine
\UnaryInfC{$\Pi_1$}
\noLine
\UnaryInfC{$p^{A^c}$}
\AxiomC{$\Gamma_2$}
\noLine
\UnaryInfC{$\Pi_2$}
\noLine
\UnaryInfC{$p^{\neg A^i}$}
\RightLabel{\scriptsize{$p^{A^c}- elim$}}
\BinaryInfC{$\bot$}
\DisplayProof
\qquad
\AxiomC{$\Gamma$}
\noLine
\UnaryInfC{$\Pi$}
\noLine
\UnaryInfC{$\bot$}
\RightLabel{\scriptsize{$\bot$, $q - elim$}}
\UnaryInfC{$q$}
\end{prooftree}

We now add atomic rules to $\mathcal{N}$ for all formulas $D \in \Gamma^{\star}$, according to the following criteria.

\begin{enumerate}
    \item For every formula $D$ with shape $A \to B$ or $A \wedge B$, we add ${p^D}-int$ and ${p^D}-elim$ to $\mathcal{N}$.

    \medskip

    \item For every formula $D$ with shape $A \lor B$ we add the rules ${p^D}-int$ to $\mathcal{N}$, and for every $D$ with shape $A \lor B$ and every $q \in \Atb$ we add ${p^D, p}- elim$ to $\mathcal{N}$.

    \medskip


    \item For every formula $D$ with shape $A^c$ we add ${p^D}- int$ and ${p^D}- elim$  to $\mathcal{N}$. Notice that, by the definition of $\Delta^c_{\Gamma}$ and $\Gamma^{\star}$, if $ A^c \in (\Gamma \cup \{ A\})^{Sub}$ then $\neg A^i \in \Gamma^*$;

    \medskip

    \item  For every $q \in \Atb$ we add $\bot , q - elim$ to $\mathcal{N}$;

    \medskip

    \item  We also stipulate that $\mathcal{N}$ contains no rules other than those added by this procedure.
    
\end{enumerate}

Since all atomic systems are now required to be consistent, before using $\mathcal{N}$ in the completeness proof we must prove that it is consistent. One interesting way to do this is by proving \textit{atomic normalization results} for $\mathcal{N}$.

We start by providing some definitions required for the normalization proof.

\begin{definition}
    Rules with shape ${p^{A \land B}}-int$, $p^{A_{1} \lor A_{2}} -int$, $p^{A \to B} - int$ and $p^{A^{c}} - int$ are introduction rules of $\mathcal{N}$. Rules with shape ${p^{A_{1} \land A_{2}}}- elim$, $p^{A \lor B}, q - elim$, $p^{A \to B} - elim$, $p^{A^{c}} - elim$ and $\bot - elim$ are elimination rules of $\mathcal{N}$.
\end{definition}

\begin{definition}
   For any rule $p^{D} - elim$ or $p^{D}, q - elim$ of $\mathcal{N}$, we say that the atom with shape $p^{D}$ occurring above it (or the leftmost occurrence if there is more than one) is the rule's major premise. The major premise of $\bot, q - elim$ is $\bot$. All other premises are the rule's minor premises.
\end{definition}

For the sake of simplicity, in some contexts we talk simply of $\land - int$, $\lor - int$, $\to - int$, $A^{c} - int$,  $\land - elim$, $\lor - elim$, $\to - elim$, $A^{c} - elim$ and $\bot-elim$ rules when referring to the rules of $\mathcal{N}$.

\begin{definition}
    The length of a derivation is the number of formula occurrences in it.
\end{definition}

\begin{definition}
The degree of a atomic formula $p^{A}$ relative to a previously fixed mapping $\alpha$ of formulas into atoms, denoted by $d[p^{A}]$, is recursively defined as:
\begin{enumerate} 
\item $d[p^{A}] = 0$, if $A = q^{i}$ for $q \in \Atb$;

\medskip

\item $d[p^{A^{c}}] = d[p^{A}] + 2$;

\medskip

\item $d[p^{\neg A}] = d[p^{A}] + 1$;

\medskip

\item $d[p^{A \wedge B}] = d[p^{A}] + d[p^{B}] + 1$;

\medskip

\item $d[p^{A \vee B}] = d[p^{A}] + d[p^{B}] + 1$;

\medskip

\item $d[p^{A \to B}] = d[p^{A}] + d[p^{B}] + 1$;
\end{enumerate}
\end{definition}

Notice that this is slightly different from Definition \ref{def:complexity} because now the degree of classical formulas needs to be the degree of its intuitionistic version plus $2$.

\begin{definition}
A formula occurrence in a derivation $\Pi$ in $\mathcal{N}$ that is at the same time the conclusion of an application of an introduction rule and the major premise of an elimination rule is said to be a \textit{maximum formula} in $\Pi$. 
\end{definition}

\begin{example}
The following are examples of maximum formula occurrences:
\begin{prooftree}
\AxiomC{$\Gamma_{1}$}
\noLine
\UnaryInfC{$\Pi_{1}$}
\noLine
\UnaryInfC{$p^{A}$}
\AxiomC{$[p^{A}]^{n}$}
\AxiomC{$\Gamma_{2}$}
\noLine
\BinaryInfC{$\Pi_{2}$}
\noLine
\UnaryInfC{$p^{B}$}
\RightLabel{\scriptsize{$p^{A \to B} -int, n$}}
\UnaryInfC{$p^{A \to B}$}
\RightLabel{\scriptsize{$p^{A \to B} -elim$}}
\BinaryInfC{$p^{B}$}
\end{prooftree}

\begin{prooftree}
\AxiomC{$[p^{\neg A^{i}}]^{n}$}
\noLine
\UnaryInfC{$\Pi_{1}$}
\noLine
\UnaryInfC{$\bot$}
\RightLabel{\scriptsize{$p^{A^{c}} - int, n$}}
\UnaryInfC{$p^{A^{c}}$}
\AxiomC{$\Pi_{2}$}
\noLine
\UnaryInfC{$p^{\neg A^{i}}$}
\RightLabel{\scriptsize{$p^{A^{c}} - elim$}}
\BinaryInfC{$\bot$}
\end{prooftree}
\end{example}

\begin{definition}
    A sequence $A_{1}, A_{2}, ... , A_{n}$ of formula occurrences in a deduction is a \textit{thread} if $A_{1}$ is a (possibly discharged) assumption, $A_{m}$ stands immediately below $A_{m-1}$ for every $1 < m \leq n$ and $A_{n}$ is the conclusion of the deduction.
\end{definition}

\begin{definition}
A sequence $A_{1}, A_{2}, ... , A_{n}$ of consecutive formula occurrences in a thread   is a \textit{segment} if and only if it satisfies the following conditions:
\begin{enumerate}
\item $A_{1}$ is not the consequence of an application of $\vee$-elimination;

\medskip

\item Each $A_{m}$, for $m < n$, is a minor premiss of an application of $\vee$-elimination;

\medskip

\item $A_{n}$ is not the minor premiss of an application of $\vee$-elimination.
\end{enumerate}
\end{definition}

\begin{definition}
The last formula of a segment is called the \textit{vertex}  of the segment.
\end{definition}

\begin{definition}
A segment that begins with an application of an introduction rule or the $\bot - elim$ rule and ends with the major premise of an elimination rule is said to be a \textit{maximum segment}.
\end{definition}

Example:
\begin{prooftree}
\AxiomC{$\Gamma_{1}$}
\noLine
\UnaryInfC{$\Pi_{1}$}
\noLine
\UnaryInfC{$p^{A \vee B}$}
\AxiomC{$[p^{A}]^{n}$}
\AxiomC{$\Gamma_{2}$}
\noLine
\BinaryInfC{$\Pi_{2}$}
\noLine
\UnaryInfC{$p^{A_{1}}$}
\AxiomC{$[p^{A}]^n$}
\AxiomC{$\Gamma_{3}$}
\noLine
\BinaryInfC{$\Pi_{3}$}
\noLine
\UnaryInfC{$p^{A_{2}}$}
\RightLabel{\scriptsize{$p^{A_{1} \land A_{2}} - int$}}
\BinaryInfC{$p^{ A_{1} \wedge A_{2}}$}
\AxiomC{$[p^{B}]^{m}$}
\AxiomC{$\Gamma_{4}$}
\noLine
\BinaryInfC{$\Pi_{4}$}
\noLine
\UnaryInfC{$p^{A_{1} \land A_{2}}$}
\RightLabel{\scriptsize{$p^{A \lor B}, p^{A_{1} \land A_{2}}- elim, n, m$}}
\TrinaryInfC{$p^{A_{1} \land A_{2}}$}
\RightLabel{\scriptsize{$p^{A_{1} \land A_{2}} - elim$}}
\UnaryInfC{$p^{A_{1}}$}
\end{prooftree}

\medskip

The sequence $p^{A_{1} \wedge A_{2}}$, $p^{A_{1} \wedge A_{2}}$ starting with the application of $\wedge$-Introduction and ending with an application of $\wedge$-elimination is a \textit{maximum segment}\footnote{Maximum formulas are special cases of maximum segments.}.

\begin{definition}
The degree of a derivation $\Pi$ in $\mathcal{N}$, $d[\Pi ]$, is defined as $d[\Pi ] = max\{ d[p^{C}]: p^{C}$ is a maximum formula or the vertex of a maximum segment in $\Pi \}$.
\end{definition}

We adapt Prawitz's usual proper and permutative reductions for intuitionistic logic \cite{prawitz1965} to the system $\mathcal{N}$. Besides the usual reductions for the operators $\wedge$, $\to$, $\vee$ and $\neg$, we have a new reduction for maximum formulas of the form $A^{c}$:

\begin{prooftree}
\AxiomC{$[p^{\neg A^{i}}]^{n}$}
\noLine
\UnaryInfC{$\Pi_{1}$}
\noLine
\UnaryInfC{$\bot$}
\RightLabel{\scriptsize{$p^{A^{c}} - int, n$}}
\UnaryInfC{$p^{A^{c}}$}
\AxiomC{$\Pi_{2}$}
\noLine
\UnaryInfC{$p^{\neg A}$}
\RightLabel{\scriptsize{$p^{A^c}-elim$}}
\BinaryInfC{$\bot$}
\noLine
\UnaryInfC{$\Pi_{3}$}
\DisplayProof
\qquad
\AxiomC{reduces to}
\DisplayProof
\qquad
\AxiomC{$\Pi_{2}$}
\noLine
\UnaryInfC{$[p^{\neg A^{i}}]$}
\noLine
\UnaryInfC{$\Pi_{1}$}
\noLine
\UnaryInfC{$\bot$}
\noLine
\UnaryInfC{$\Pi_{3}$}
\end{prooftree}

\medskip

\begin{definition}
A derivation $\Pi$ of $\Gamma \vdash_{\mathcal{N}} A$ is called \textit{critical} if and only if:
\begin{enumerate}
\item $\Pi$ ends with an application $\beta$ of an elimination rule.

\medskip

\item The major premiss $p^{B}$ of $\beta$ is a maximum formula or the vertex of a maximum segment.

\medskip

\item $d[\Pi] = d[p^{B}]$

\medskip

\item Every other maximum formula or maximum segment  in $\Pi$ has a degree smaller than $d[p^{B}]$.
\end{enumerate}
\end{definition}

\begin{lemma}
Let $\Pi$ be a critical derivation of $A$ from $\Gamma$. Then $\Pi$ reduces to a derivation $\Pi'$ of $A$ from $\Gamma$ such that $d[\Pi' ] < d[\Pi ]$
\end{lemma}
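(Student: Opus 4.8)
The plan is to prove the lemma by examining the final elimination $\beta$ of the critical derivation $\Pi$ together with the maximum formula or maximum segment $\sigma = C_{1}, \dots , C_{n}$ whose vertex $C_{n} = p^{B}$ is the major premise of $\beta$; by the definition of critical derivation $d[p^{B}] = d[\Pi]$ and every other maximum formula or segment of $\Pi$ has strictly smaller degree, and by the definition of maximum segment the top formula $C_{1}$ is the conclusion either of an introduction rule or of a $\bot$-elimination. Here ``reduces to'' is read as the transitive closure of the one-step reduction relation, so $\Pi'$ may be produced from $\Pi$ by a finite chain of reductions.

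The base case is $n = 1$. If $p^{B}$ is the conclusion of an introduction rule, it is a maximum formula and I apply the proper reduction matching the shape of $B$ --- the $\wedge$, $\vee$ and $\to$ reductions of Prawitz, the $\to$ one also subsuming $p^{\neg A^{i}} = p^{A^{i}\to\bot}$, or, when $B = A^{c}$, the new $p^{A^{c}}$ reduction displayed above. The resulting $\Pi'$ still derives $A$ and discharges nothing outside $\Gamma$; the only subtlety is that substituting a subderivation for several occurrences of a discharged assumption duplicates it, which changes no degree. For the degree bound, the maximum formula $p^{B}$ of degree $d[\Pi]$ is destroyed, and any maximum formula or segment the reduction can create is located at an immediate subformula atom of $p^{B}$ --- one of $p^{A_{j}}$, $p^{A}$, $p^{B'}$, or $p^{\neg A^{i}}$ in the $A^{c}$ case --- whose degree is strictly below $d[p^{B}] = d[\Pi]$ by the degree clauses; it is precisely here that the clause $d[p^{A^{c}}] = d[p^{A}] + 2$ is needed, since the $p^{A^{c}}$ reduction exposes the atom $p^{\neg A^{i}}$ of degree $d[p^{A}] + 1$. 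Every other maximum formula or segment of $\Pi'$ already occurred in a subderivation of $\Pi$ and so has degree $< d[\Pi]$ by the last clause of the definition of critical derivation. Hence $d[\Pi'] < d[\Pi]$. If instead $p^{B}$ is the conclusion of a $\bot$-elimination, I permute that $\bot$-elimination past $\beta$: a $\bot$-elimination concluding $p^{B}$ immediately followed by $\beta$ concluding $A$ is replaced by a single $\bot$-elimination concluding $A$, with the minor-premise subderivations of $\beta$ discarded. Again the conclusion and open assumptions are unchanged, the degree-$d[\Pi]$ segment is destroyed, and since $\beta$ was the last rule nothing remains below it to form a new redex, so $d[\Pi'] < d[\Pi]$.

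For $n \geq 2$, each $C_{m}$ with $m < n$ is a minor premise of a $\vee$-elimination, and these $\vee$-eliminations $\gamma_{n-1}, \dots , \gamma_{1}$ lie between $C_{1}$ and $\beta$, with $\gamma_{n-1}$ immediately above $\beta$. I permute $\beta$ (together with copies of its minor-premise subderivations) upwards past $\gamma_{n-1}$, then past $\gamma_{n-2}$, and so on: each such permutative reduction strictly shortens $\sigma$, keeps the conclusion and open assumptions fixed, and, because the conclusion of each elimination rule of $\mathcal{N}$ other than $\bot$-elimination has degree strictly smaller than that of its major premise (the disjunction-elimination conclusion being an atom of degree $0$), it does not create any maximum formula or segment of degree $\geq d[\Pi]$ --- uniqueness of the top-degree maximum segment, which is guaranteed by criticality, ensures that the copies of $\beta$ that land in the ``off-segment'' minor branches of the $\gamma_{k}$ also produce no new top-degree redex. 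After $n - 1$ such steps $\sigma$ has length $1$, so $\beta$ now sits immediately below $C_{1}$, which is the conclusion of an introduction rule or of a $\bot$-elimination; applying the relevant base case then gives $\Pi'$ with $d[\Pi'] < d[\Pi]$.

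The bulk of the work --- and the main obstacle --- is the bookkeeping for $n \geq 2$: verifying that the chain of permutative reductions terminates, which it does since each step decreases by one the length of the unique top-degree maximum segment, and verifying that no step ever raises the degree or spawns a fresh top-degree maximum formula or segment, which rests on the degree behaviour of the elimination rules of $\mathcal{N}$ and on the uniqueness of the top-degree maximum segment in a critical derivation. Everything else is a routine transcription of Prawitz's normalization argument to the atomic system $\mathcal{N}$.
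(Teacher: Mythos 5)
Your proposal follows the same Prawitz-style strategy as the paper's own proof: the identical proper reductions for maximum formulas (including the new $p^{A^{c}}$ reduction, where the clause $d[p^{A^{c}}]=d[p^{A}]+2$ is indeed exactly what makes the degree drop past the exposed $p^{\neg A^{i}}$) and the identical permutative reduction for maximum segments. The organizational difference lies in the segment case: the paper performs a single permutation and then invokes its induction hypothesis over the \emph{length} of the derivation on the two minor-premise subderivations, assumed without loss of generality to be critical, whereas you unroll this into an explicit iteration that shortens the unique top-degree segment by one at each step and lands in the $n=1$ base case. Both are sound; yours makes the termination measure and the invariants explicit, while the paper delegates that bookkeeping to the induction hypothesis. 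One supporting claim of yours is false, however: it is not true that the conclusion of every elimination rule of $\mathcal{N}$ other than $\bot$-elimination has degree strictly smaller than its major premise. The conclusion $q$ of an atomic $\vee$-elimination is an \emph{arbitrary} basic sentence, which under the mapping $\alpha$ may be some $p^{C}$ of any degree whatsoever --- in particular it need not have degree $0$, and in the very segment you are reducing it is an occurrence of the vertex atom of degree $d[\Pi]$. The fact you actually need --- that the permutation spawns no new redex of degree $\geq d[\Pi]$ --- still holds, but for positional rather than degree-theoretic reasons: $\beta$ is the last rule of $\Pi$, so after permuting, the relocated $\vee$-elimination concludes the entire derivation and its conclusion is never the major premise of any rule, while the copies of $\beta$ landing in the off-segment branches are harmless by the uniqueness of the top-degree maximum segment, exactly as you argue. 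With that justification repaired, the proof goes through and is essentially the paper's argument in a different induction dress.
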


\begin{proof}
    Induction over the length of $\Pi$. There are two cases to be examined depending on whether $d[\Pi ]$ is determined by a maximum formula or by  the vertex of a maximum segment.\\

\noindent Case 1: $d[\Pi ]$ is determined by a maximum formula. The result follows directly from the application of a reduction to this maximum formula. 
\begin{enumerate}

\medskip

\item The critical derivation $\Pi$ is:

\begin{prooftree}
\AxiomC{$\Pi_{1}$}
\noLine
\UnaryInfC{$p^{A_{1}}$}
\AxiomC{$\Pi_{2}$}
\noLine
\UnaryInfC{$p^{A_{2}}$}
\RightLabel{\scriptsize{$p^{A_{1} \land A_{2}} - int$}}
\BinaryInfC{$p^{A_{1} \wedge A_{2}}$}
\RightLabel{\scriptsize{$p^{A_{1} \land A_{2}} - elim$}}
\UnaryInfC{$p^{A_{i}}$}
\end{prooftree}

We know that $d(\Pi ) = d[p^{A_{1} \wedge A_{2}}] > d[\Pi_{i} ]$ (for $i \in \{1,2\}$). $\Pi$ reduces to

\begin{prooftree}
\AxiomC{$\Pi_{i}$}
\noLine
\UnaryInfC{$p^{A_{i}}$}
\end{prooftree} 
And the degree of this derivation is equal to $d[\Pi_{i} ]$ which is smaller than $d[\Pi ]$.

\medskip
\item The critical derivation $\Pi$ is:
\begin{prooftree}
\AxiomC{$\Gamma_{1}$}
\noLine
\UnaryInfC{$\Pi_{1}$}
\noLine
\UnaryInfC{$p^{A}$}
\AxiomC{$[p^{A}]^{n}$}
\AxiomC{$\Gamma_{2}$}
\noLine
\BinaryInfC{$\Pi_{2}$}
\noLine
\UnaryInfC{$p^{B}$}
\RightLabel{\scriptsize{$p^{A \to B} - int, n$}}
\UnaryInfC{$p^{A \to B}$}
\RightLabel{\scriptsize{$p^{A \to B} - elim$}}
\BinaryInfC{$p^{B}$}
\end{prooftree}

\medskip

$\Pi$ reduces to the following derivation $\Pi'$:

\begin{prooftree}
\AxiomC{$\Gamma_{1}$}
\noLine
\UnaryInfC{$\Pi_{1}$}
\noLine
\UnaryInfC{$[p^{A}]$}
\AxiomC{$\Gamma_{2}$}
\noLine
\BinaryInfC{$\Pi_{2}$}
\noLine
\UnaryInfC{$p^{B}$}
\end{prooftree}

\medskip

We can easily see that $d[\Pi' ] \leq max\{ d[\Pi_{1} ] , d[\Pi_{2} ], d[p^{A}] \} < d[\Pi ] = d[p^{A \to B}]$.

\medskip

\item The critical derivation $\Pi$ is:

\begin{prooftree}
\AxiomC{$\Gamma$}
\noLine
\UnaryInfC{$\Pi$}
\noLine
\UnaryInfC{$p^{A_{i}}$}
\RightLabel{\scriptsize{$p^{A_{1} \lor A_{2}} - int$}}
\UnaryInfC{$p^{A_{1} \lor A_{2}}$}
\AxiomC{$[p^{A_{1}}]^{n}$}
\AxiomC{$\Gamma_{1}$}
\noLine
\BinaryInfC{$\Pi_{1}$}
\noLine
\UnaryInfC{$q$}
\AxiomC{$[p^{A_{2}}]^{m}$}
\AxiomC{$\Gamma_{2}$}
\noLine
\BinaryInfC{$\Pi_{2}$}
\noLine
\UnaryInfC{$q$}
\RightLabel{\scriptsize{$p^{A_{1} \lor A_{2}}, q - elim, n, m$}}
\TrinaryInfC{$q$}
\end{prooftree}

\medskip

$\Pi$ reduces to the following derivation $\Pi'$:

\begin{prooftree}
\AxiomC{$\Gamma$}
\noLine
\UnaryInfC{$\Pi$}
\noLine
\UnaryInfC{$[p^{A_{i}}]$}
\AxiomC{$\Gamma_{i}$}
\noLine
\BinaryInfC{$\Pi_{i}$}
\noLine
\UnaryInfC{$q$}
\end{prooftree}

\medskip

We can easily see that $d[\Pi' ] \leq max\{ d[\Pi_{1} ] , d[\Pi_{2} ], d[\Pi], d[p^{A_{i}}] \} < d[\Pi ] = d[p^{A_{1} \lor A_{2}}]$.

\medskip

\item The critical derivation $\Pi$ is:

\begin{prooftree}
\AxiomC{$[p^{\neg A^{i}}]^{n}$}
\noLine
\UnaryInfC{$\Pi_{1}$}
\noLine
\UnaryInfC{$\bot$}
\RightLabel{\scriptsize{$p^{A^{c}}-int, n$}}
\UnaryInfC{$p^{A^{c}}$}
\AxiomC{$\Pi_{2}$}
\noLine
\UnaryInfC{$p^{\neg A^{i}}$}
\RightLabel{\scriptsize{$p^{A^{c}}-elim$}}
\BinaryInfC{$\bot$}
\end{prooftree}

\medskip

$\Pi$ reduces to the following derivation $\Pi'$:

\begin{prooftree}
\AxiomC{$\Pi_{2}$}
\noLine
\UnaryInfC{$[p^{\neg A^{i}}]$}
\noLine
\UnaryInfC{$\Pi_{1}$}
\noLine
\UnaryInfC{$\bot$}
\end{prooftree}

\medskip

We can easily see that $d[\Pi' ] \leq max\{ d[\Pi_{1} ] , d[\Pi_{2} ], d[p^{\neg A}] \} < d[\Pi ] = d[p^{A^{c}}]$\footnote{This is the step in which we cannot use Definition \ref{def:complexity}.}.

\medskip

\item The critical derivation was obtained through an application of $\bot - elim$. Then:

\begin{prooftree}
\AxiomC{$\Pi_{1}$}
\noLine
\UnaryInfC{$\bot$}
\RightLabel{\scriptsize{$\bot, p^{A_{1} \land A_{2}} - elim$}}
\UnaryInfC{$p^{A_{1} \land A_{2}}$}
\RightLabel{\scriptsize{$p^{A_{1} \land A_{2}}- elim$}}
\UnaryInfC{$p^{A_{i}}$}
\DisplayProof
\qquad
reduces to
\qquad
\AxiomC{$\Pi_{1}$}
\noLine
\UnaryInfC{$\bot$}
\RightLabel{\scriptsize{$\bot, p^{A_{i}} - elim$}}
\UnaryInfC{$p^{A_{i}}$}
\end{prooftree}

\bigskip

\begin{bprooftree}\hspace{-95pt}
\AxiomC{$\Pi_{1}$}
\noLine
\UnaryInfC{$\bot$}
\RightLabel{\scriptsize{$\bot, p^{A \lor B} - elim$}}
\UnaryInfC{$p^{A \lor B}$}
\AxiomC{$[p^{A}]^{n}$}
\noLine
\UnaryInfC{$\Pi_{2}$}
\noLine
\UnaryInfC{$q$}
\AxiomC{$[p^{B}]^{m}$}
\noLine
\UnaryInfC{$\Pi_{3}$}
\noLine
\UnaryInfC{$q$}
\RightLabel{\scriptsize{$p^{A \lor B}, q - elim$}}
\TrinaryInfC{$q$}
\qquad
\end{bprooftree}
reduces to
\begin{bprooftree}
\qquad
\AxiomC{$\Pi_{1}$}
\noLine
\UnaryInfC{$\bot$}
\RightLabel{\scriptsize{$\bot, q- elim$}}
\UnaryInfC{$q$}
\end{bprooftree}

\bigskip

\begin{bprooftree}
\hspace{-50pt}
\AxiomC{$\Pi_{1}$}
\noLine
\UnaryInfC{$\bot$}
\RightLabel{\scriptsize{$\bot, p^{A \to B} - elim$}}
\UnaryInfC{$p^{A \to B}$}
\AxiomC{$\Pi_{2}$}
\noLine
\UnaryInfC{$p^{A}$}
\RightLabel{\scriptsize{$p^{A \to B }- elim$}}
\BinaryInfC{$p^{B}$}
\end{bprooftree}
\qquad
reduces to
\qquad
\begin{bprooftree}
\AxiomC{$\Pi_{1}$}
\noLine
\UnaryInfC{$\bot$}
\RightLabel{\scriptsize{$\bot, p^{B} - elim$}}
\UnaryInfC{$p^{B}$}
\end{bprooftree}

\bigskip

\begin{bprooftree}
\hspace{-30pt}
\AxiomC{$\Pi_{1}$}
\noLine
\UnaryInfC{$\bot$}
\RightLabel{\scriptsize{$\bot, p^{A^{c}} - elim$}}
\UnaryInfC{$p^{A^{c}}$}
\AxiomC{$\Pi_{2}$}
\noLine
\UnaryInfC{$p^{\neg A^{i}}$}
\RightLabel{\scriptsize{$p^{A^{c}}- elim$}}
\BinaryInfC{$\bot$}
\end{bprooftree}
\qquad
reduces to
\qquad
\begin{bprooftree}
\AxiomC{$\Pi_{1}$}
\noLine
\UnaryInfC{$\bot$}
\end{bprooftree}

\bigskip

In all cases it is straightforward to check that the degree of the derivation is reduced.

\end{enumerate}


Case 2: $d[\Pi ]$ is determined by the vertex of a maximal segment. $\Pi$ is:

\begin{prooftree}
\AxiomC{$\Gamma_{1}$}
\noLine
\UnaryInfC{$\Pi_{1}$}
\noLine
\UnaryInfC{$p^{A \vee B}$}
\AxiomC{$[p^{A}]^{n}$}
\AxiomC{$\Gamma_{2}$}
\noLine
\BinaryInfC{$\Pi_{2}$}
\noLine
\UnaryInfC{$q$}
\AxiomC{$[p^{B}]^{m}$}
\AxiomC{$\Gamma_{3}$}
\noLine
\BinaryInfC{$\Pi_{3}$}
\noLine
\UnaryInfC{$q$}
\RightLabel{\scriptsize{$p^{A \lor B}, q - elim, n, m$}}
\TrinaryInfC{$q$}
\AxiomC{$\Sigma_{1}$}
\noLine
\UnaryInfC{$\Delta_{1}$}
\noLine
\UnaryInfC{$r_{1}$}
\AxiomC{$...$}
\AxiomC{$\Sigma_{k}$}
\noLine
\UnaryInfC{$\Delta_{k}$}
\noLine
\UnaryInfC{$r_{k}$}
\QuaternaryInfC{$r$}

\end{prooftree}

\medskip

By means of a permutative reduction $\Pi$ reduces to the following derivation $\Pi^{*}$:

\begin{prooftree}
\AxiomC{$\Gamma_{1}$}
\noLine
\UnaryInfC{$\Pi_{1}$}
\noLine
\UnaryInfC{$p^{A \vee B}$}
\AxiomC{$[p^{A}]^{n}$}
\AxiomC{$\Gamma_{2}$}
\noLine
\BinaryInfC{$\Pi_{2}$}
\noLine
\UnaryInfC{$q$}
\AxiomC{$\Sigma_{1}$}
\noLine
\UnaryInfC{$\Delta_{1}$}
\noLine
\UnaryInfC{$r_{1}$}
\AxiomC{$...$}
\AxiomC{$\Sigma_{k}$}
\noLine
\UnaryInfC{$\Delta_{k}$}
\noLine
\UnaryInfC{$r_{k}$}
\QuaternaryInfC{$r$}
\AxiomC{$[p^{B}]^{m}$}
\AxiomC{$\Gamma_{3}$}
\noLine
\BinaryInfC{$\Pi_{3}$}
\noLine
\UnaryInfC{$q$}
\AxiomC{$\Sigma_{1}$}
\noLine
\UnaryInfC{$\Delta_{1}$}
\noLine
\UnaryInfC{$r_{1}$}
\AxiomC{$...$}
\AxiomC{$\Sigma_{k}$}
\noLine
\UnaryInfC{$\Delta_{k}$}
\noLine
\UnaryInfC{$r_{k}$}
\QuaternaryInfC{$r$}
\RightLabel{\scriptsize{$p^{A \lor B}, r - elim, n, m$}}
\TrinaryInfC{$r$}
\end{prooftree}

Without loss of generality, we can assume that the two derivations of the minor premises of the application of $\vee$-elimination are critical. By the induction hypothesis, they reduce to derivations

\begin{prooftree}
\AxiomC{$p^{A}$}
\AxiomC{$\Gamma_{2}'$}
\noLine\
\BinaryInfC{$\Pi_{2}'$}
\noLine
\UnaryInfC{$r$}
\DisplayProof
\qquad
and
\qquad
\AxiomC{$p^{B}$}
\AxiomC{$\Gamma_{3}'$}
\noLine\
\BinaryInfC{$\Pi_{3}'$}
\noLine
\UnaryInfC{$r$}
\end{prooftree}

\medskip

such that $d[\Pi_{2}'] < d[\Pi ]$ and $d[\Pi_{3}'] < d[\Pi ]$. We can then take $\Pi'$ to be:

\begin{prooftree}
\AxiomC{$\Gamma_{1}$}
\noLine
\UnaryInfC{$\Pi_{1}$}
\noLine
\UnaryInfC{$p^{A \vee B}$}
\AxiomC{$[p^{A}]^{n}$}
\AxiomC{$\Gamma_{2}'$}
\noLine\
\BinaryInfC{$\Pi_{2}'$}
\noLine
\UnaryInfC{$r$}
\AxiomC{$[p^{B}]^{m}$}
\AxiomC{$\Gamma_{3}'$}
\noLine\
\BinaryInfC{$\Pi_{3}'$}
\noLine
\UnaryInfC{$r$}
\RightLabel{\scriptsize{$p^{A \lor B}, r - elim, n, m$}}
\TrinaryInfC{$r$}
\end{prooftree}  

\medskip

\end{proof}

\begin{lemma}\label{lemma:mainlemmanormalization}
Let $\Pi$ be a derivation of $A$ from $\Gamma$ in $\mathcal{N}$ such that $d[\Pi ] > 0$. Then $\Pi$ reduces to a derivation $\Pi'$ of $A$ from $\Gamma$ in $\mathcal{N}$ such that $d[\Pi' ] < d[\Pi ]$
\end{lemma}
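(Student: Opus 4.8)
The plan is to derive this global degree-reduction lemma from the preceding lemma about critical derivations by an induction on the length of $\Pi$. The key observation is that if $d[\Pi] > 0$ then $\Pi$ contains at least one maximum formula or maximum segment whose degree equals $d[\Pi]$; I want to locate one that is "topmost" in a suitable sense so that the subderivation ending at it is critical, apply the critical-derivation lemma there, and then argue that the local replacement does not create new maximum formulas/segments of degree $\geq d[\Pi]$ elsewhere.

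More concretely, first I would pick an occurrence of a maximum formula (or the vertex of a maximum segment) $p^B$ in $\Pi$ with $d[p^B] = d[\Pi]$ such that the subderivation $\Sigma$ of $\Pi$ whose end-formula is the conclusion of the elimination rule applied to $p^B$ contains no other maximum formula or maximum segment of degree $d[\Pi]$ — this can be arranged by choosing $p^B$ so that $\Sigma$ has minimal length among all such candidates, or equivalently by taking a "highest" such $p^B$. Then $\Sigma$ is a critical derivation (it ends with an elimination whose major premise is a maximum formula/segment vertex of degree $d[\Sigma] = d[\Pi]$, and all other maxima in it have strictly smaller degree), so by the previous lemma $\Sigma$ reduces to some $\Sigma'$ with $d[\Sigma'] < d[\Sigma] = d[\Pi]$. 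Replacing $\Sigma$ by $\Sigma'$ inside $\Pi$ yields a derivation $\Pi'$ of $A$ from $\Gamma$ (the reductions used preserve the end-formula and do not increase the set of open assumptions, as checked case-by-case in the previous lemma).

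The main obstacle — and the step that needs genuine care — is verifying that $d[\Pi'] < d[\Pi]$, i.e. that splicing in $\Sigma'$ does not introduce a fresh maximum formula or maximum segment of degree $\geq d[\Pi]$ at the seam where $\Sigma'$ meets the rest of $\Pi$. The end-formula of $\Sigma'$ is the same formula occurrence $r$ that was the end-formula of $\Sigma$, so the only danger is that $r$, which previously was not (the vertex of) a maximum segment, becomes one: this happens precisely when $r$ is now the conclusion of an introduction rule (or $\bot$-elim) in $\Sigma'$ and was already the major premise of an elimination in the ambient derivation. One handles this by noting that $d[r] < d[\Pi]$ in the problematic permutative and $\bot$-elim cases (indeed $r$ is a strict subformula-image, so its degree is bounded by $d[\Sigma'] < d[\Pi]$, or more precisely by the degrees tracked in the previous lemma's case analysis), so even if a new maximum of this shape appears its degree is still below $d[\Pi]$; hence $d[\Pi'] < d[\Pi]$. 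I would also remark that permutative reductions may need to be iterated (as the previous lemma's Case 2 already does via its own induction), but since the ambient induction is on the length of $\Pi$ and each such step is absorbed into the critical-derivation lemma, the bookkeeping stays finite. This completes the induction.
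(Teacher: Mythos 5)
Your proposal diverges from the paper's argument and, as written, has a genuine gap. The paper proceeds by induction on the length of $\Pi$ in the straightforward way: if the last rule is an introduction, apply the induction hypothesis to the immediate subderivations; if it is an elimination, apply the induction hypothesis to each immediate subderivation $\Pi_i$ to get $\Pi_i'$ with $d[\Pi_i'] < d[\Pi_i]$, form $\Pi^*$, and observe that either $d[\Pi^*] < d[\Pi]$ already or else the only remaining maximum of degree $d[\Pi]$ sits at the root, so $\Pi^*$ is \emph{critical} and the preceding lemma applies. Crucially, the critical-derivation lemma is only ever invoked on the \emph{entire} derivation, never on a proper subderivation sitting inside a context, so no splicing analysis is needed, and the induction hypothesis has already disposed of every high-degree maximum strictly above the root before criticality is invoked.

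Your version instead selects a single topmost maximum of degree $d[\Pi]$, reduces the critical subderivation $\Sigma$ ending at it, and splices $\Sigma'$ back in. The gap is that this removes only \emph{one} maximum of maximal degree: if $\Pi$ contains several maximum formulas or segments of degree $d[\Pi]$ (for instance, two incomparable ones in parallel branches, or one below the seam), then $d[\Pi'] = d[\Pi]$ still holds after your replacement, and the lemma's conclusion fails. You announce an induction on the length of $\Pi$ but never invoke the induction hypothesis, so the remaining maximal-degree maxima are never treated; note also that you cannot simply iterate under that measure, since the reductions (permutative ones in particular) can increase length. A secondary weakness is the seam analysis: your claim that the end-formula $r$ of $\Sigma$ satisfies $d[r] < d[\Pi]$ because it is ``a strict subformula-image'' is fine for $\wedge$- and $\to$-eliminations, but for $p^{A\lor B}, q$-elimination the conclusion $q$ is an arbitrary basic sentence whose degree under $\alpha$ is not bounded by $d[p^{A\lor B}]$, so a new maximum formula of large degree could in principle appear at the seam; this case needs a separate argument (or is avoided entirely by the paper's root-only strategy). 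To repair your proof you would either need to switch to the paper's decomposition, or induct on a measure such as the number of maximum formulas/segments of degree $d[\Pi]$ and prove carefully that each splice strictly decreases it without creating new ones.
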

Proof: By induction over the length of $\Pi$. We examine two cases depending on the form of the last rule applied in $\Pi$.
\begin{enumerate}
\item The last rule applied in $\Pi$ is and introduction rule. The result follows directly from the induction hypothesis.
\item The last rule applied in $\Pi$ is an elimination rule. $\Pi$ has the following general form:

\begin{prooftree}
\AxiomC{$\Pi_{1}$}
\noLine
\UnaryInfC{$p_{1}$}
\AxiomC{$...$}
\AxiomC{$\Pi_{n}$}
\noLine
\UnaryInfC{$p_{n}$}
\TrinaryInfC{$p$}
\end{prooftree}

By the induction hypothesis, each derivation 
\begin{prooftree}
\AxiomC{$\Pi_{i}$}
\noLine
\UnaryInfC{$p_{i}$}
\end{prooftree}
$(1 \leq i \leq n)$ reduces to a derivation
\begin{prooftree}
\AxiomC{$\Pi'_{i}$}
\noLine
\UnaryInfC{$p_{i}$}
\end{prooftree}
such that $d[\Pi'_{i} ] < d[\Pi_{i} ]$.  Let $\Pi^{*}$ be:

\begin{prooftree}
\AxiomC{$\Pi'_{1}$}
\noLine
\UnaryInfC{$p_{1}$}
\AxiomC{$...$}
\AxiomC{$\Pi'_{n}$}
\noLine
\UnaryInfC{$p_{n}$}
\TrinaryInfC{$p$}
\end{prooftree}

If $d[\Pi^{*} ] < d[\Pi ]$, we can take $\Pi' = \Pi^{*}$. If $d[\Pi^{*} ] = d[\Pi ]$, then $\Pi^{*}$ is a critical derivation and, by Lemma \ref{lemma:mainlemmanormalization}, it reduces to a derivation $\Pi'$ such that $d[\Pi' ] < d[\Pi ]$.  

\end{enumerate}

\begin{theorem}\label{thm:normalization}
Let $\Pi$ be a derivation of $A$ from $\Gamma$ in $\mathcal{N}$. Then $\Pi$ reduces to a normal derivation $\Pi'$ of $A$ from $\Gamma$ in $\mathcal{N}$.
\end{theorem}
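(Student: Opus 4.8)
The plan is to obtain Theorem~\ref{thm:normalization} from the two preceding lemmas by an induction on the degree $d[\Pi]$ of the derivation. Equivalently one can phrase the argument as: repeatedly apply the degree-lowering step of Lemma~\ref{lemma:mainlemmanormalization}; since $d[\Pi]$ is a natural number that strictly decreases at each stage, after finitely many steps we reach a derivation of degree $0$, which we will show is normal. So the whole proof is just this bookkeeping layered on top of Lemma~\ref{lemma:mainlemmanormalization} (which, in turn, rests on the critical-derivation lemma and on the proper and permutative reductions, including the new $A^c$-reduction).

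For the base case, suppose $d[\Pi]=0$; I would argue that $\Pi$ is already normal. A maximum formula is in particular the conclusion of an introduction rule, hence has one of the shapes $p^{A\wedge B}$, $p^{A_1\vee A_2}$, $p^{A\to B}$, $p^{A^c}$, and by the clauses defining $d[\,\cdot\,]$ each such atom has degree at least $1$ (at least $2$ in the classical case); so a degree-$0$ derivation has no maximum formulas. Moreover all formula occurrences in a segment coincide (the minor premisses of a $\vee$-elimination agree with its conclusion), so the vertex of a maximum segment has the same degree as the first formula of the segment, which is either the conclusion of an introduction rule or the major premiss of an elimination rule — and every such formula other than $\bot$ again has positive degree. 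Hence a derivation of degree $0$ contains no maximum formula and no maximum segment, and we may take $\Pi'=\Pi$.

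For the inductive step, let $d[\Pi]=n+1>0$. By Lemma~\ref{lemma:mainlemmanormalization}, $\Pi$ reduces to a derivation $\Pi_1$ of $A$ from $\Gamma$ in $\mathcal{N}$ with $d[\Pi_1]<d[\Pi]$, \ie\ $d[\Pi_1]\le n$. The induction hypothesis applied to $\Pi_1$ yields a normal derivation $\Pi'$ of $A$ from $\Gamma$ to which $\Pi_1$ reduces, and since the reduction relation is transitive, $\Pi$ reduces to $\Pi'$, as desired.

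Since the substantive case analysis has been discharged in the earlier lemmas, I do not expect a genuine obstacle here; the one place I would be careful is the base case, specifically the claim that ``$d[\Pi]=0$'' really coincides with ``$\Pi$ is normal''. The potentially delicate point is the interplay with $\bot$: as $\bot\in\Atb$, a $\bot,q$-elim step can conclude $\bot$ itself, so a trivial $\bot$-to-$\bot$ detour would be a maximum segment of degree $0$; I would check that such detours are ruled out (or erased by a length-decreasing $\bot$-reduction) before invoking the base case. Apart from that minor verification, the induction on degree is routine.
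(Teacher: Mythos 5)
Your proof takes exactly the paper's route: the paper establishes Theorem~\ref{thm:normalization} in one line, ``directly from Lemma~\ref{lemma:mainlemmanormalization} by induction on $d[\Pi]$'', which is precisely your induction on the degree with the main lemma supplying the degree-lowering step. The one point where you go beyond the paper --- observing that a $\bot$-to-$\bot$ detour is a maximum segment of degree $0$, so that ``$d[\Pi]=0$'' does not immediately coincide with ``$\Pi$ is normal'' --- is a genuine subtlety the paper glosses over, and your suggested remedy (erasing such detours by a separate length-decreasing $\bot$-reduction before invoking the base case) is the right way to close it.
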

\begin{proof}
     Directly from Lemma \ref{lemma:mainlemmanormalization} by induction on $d[\Pi ]$.
\end{proof}

Our choice of normalization strategy is purely incidental; we could as well have used Prawitz's original strategy or any other. What matters is that we have shown that $\mathcal{N}$ satisfies normalization, which we can now use to prove its consistency.

\begin{definition}
    Let $\Pi$ be a derivation in $\mathcal{N}$ and $A$ any formula occurrence in $\Pi$. The derivation $\Pi'$ obtained by removing from $\Pi$ all formula occurrences except $A$ and those above $A$ is called a subderivation of $\Pi$.
\end{definition}

\begin{lemma}\label{lemma:subderivationofnormalisnormal}
    If $\Pi$ is a normal derivation in $\mathcal{N}$ then all its subderivations are also normal.
\end{lemma}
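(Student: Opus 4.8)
If $\Pi$ is a normal derivation in $\mathcal{N}$ then all its subderivations are also normal.

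The plan is to argue directly from the definitions, observing that taking a subderivation can only delete formula occurrences, never create new configurations. First I would recall that a derivation is normal precisely when it contains no maximum formula and no maximum segment; equivalently, $d[\Pi]=0$, or rather, the absence of any occurrence which is simultaneously the conclusion of an introduction rule and the major premise of an elimination rule, and the absence of any maximum segment (a segment beginning with an introduction rule or $\bot$-elim and ending as the major premise of an elimination rule). Let $\Pi$ be normal, let $A$ be a formula occurrence in $\Pi$, and let $\Pi'$ be the subderivation obtained by discarding everything except $A$ and the occurrences above it.

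The key observation is that $\Pi'$ is, as a tree of rule applications, an initial (upward-closed) subtree of $\Pi$: every rule application appearing in $\Pi'$ already appears in $\Pi$ with exactly the same premises and conclusion, and every formula occurrence of $\Pi'$ other than $A$ itself retains in $\Pi'$ the very same rule application immediately below it that it had in $\Pi$. Hence, if some occurrence $B$ in $\Pi'$ were a maximum formula of $\Pi'$, then $B$ is the conclusion of an introduction rule in $\Pi'$ — so also in $\Pi$ — and $B$ is the major premise of an elimination rule in $\Pi'$; but the only occurrence of $\Pi'$ whose ``rule immediately below'' differs from that in $\Pi$ is the root $A$, and $A$ is the conclusion of $\Pi'$, hence not a premise of anything in $\Pi'$, so $B \neq A$ and thus $B$ is also the major premise of that same elimination rule in $\Pi$, making $B$ a maximum formula of $\Pi$ — contradiction. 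The same reasoning applies verbatim to maximum segments: a segment of $\Pi'$ lies within a thread of $\Pi'$, every non-final member of it is a minor premise of a $\vee$-elimination (so the segment does not end at the root $A$ unless its vertex is $A$, in which case the vertex is not a major premise of anything in $\Pi'$ and the segment is not maximum), and its vertex being the major premise of an elimination rule in $\Pi'$ forces the vertex to be distinct from $A$ and hence to be the major premise of the same elimination in $\Pi$; likewise the first formula of the segment being the conclusion of an introduction rule or of $\bot$-elim is inherited directly from $\Pi$. So any maximum segment of $\Pi'$ would be a maximum segment of $\Pi$, again a contradiction.

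I do not expect a genuine obstacle here — the result is essentially bookkeeping — but the one point requiring a little care is the boundary case where the offending maximum formula or the vertex of the offending maximum segment coincides with the root occurrence $A$ of $\Pi'$: since $A$ is the conclusion of $\Pi'$ it cannot be the major (or any) premise of an elimination rule \emph{within} $\Pi'$, so this case simply cannot arise, and every maximum formula/segment of $\Pi'$ is genuinely ``internal'' and therefore transfers back to $\Pi$. Concluding, no maximum formula and no maximum segment exists in $\Pi'$, so $\Pi'$ is normal, as desired.
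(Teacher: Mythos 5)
Your proof is correct and follows the same approach as the paper: the paper simply asserts that any maximal formula or segment of a subderivation is also maximal in the original derivation, and you spell out this observation in detail (including the harmless boundary case of the root occurrence). Nothing further is needed.
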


\begin{proof}
    Let $\Pi$ be a normal derivation and $\Pi'$ any of its subderivations. It is straightforward to see that if $\Pi'$ contains a maximal formula or segment then that formula or segment is also maximal in $\Pi$, contradicting the assumption that $\Pi$ was normal. Therefore, no subderivation $\Pi'$ of $\Pi$ contains a maximal formula or segment, so every such $\Pi'$ is normal.   
\end{proof}

\begin{lemma}\label{lemma:alwaysundischarged}
If $\Pi$ is a normal derivation in $\mathcal{N}$ that does not end with an application of a introduction rule, then $\Pi$ contains at least one undischarged assumption.
\end{lemma}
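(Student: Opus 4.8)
The plan is to prove the statement by induction on the length of $\Pi$, exploiting the ``main branch'' structure of a normal derivation. In the base case $\Pi$ has length one: it then consists of a single top formula, which is an assumption occurrence, and it is trivially undischarged since there is no rule application below it. For the inductive step, a derivation of length greater than one ends with \emph{some} rule application; since by hypothesis this is not an introduction rule, it is an elimination rule $\rho$. Let $B$ be the major premise of $\rho$ and let $\Pi_0$ be the subderivation of $\Pi$ whose conclusion is that occurrence of $B$; by Lemma~\ref{lemma:subderivationofnormalisnormal}, $\Pi_0$ is normal, so the induction hypothesis applies to it.

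The crucial observation is that \emph{$\Pi_0$ cannot end with an introduction rule, nor with $\bot$-elim}. Suppose otherwise. Then the occurrence $B$ is the conclusion of an introduction rule or of $\bot$-elim, so in particular it is not the conclusion of a $\vee$-elimination; hence the segment of $\Pi$ having $B$ as its last formula cannot be prolonged upwards and so reduces to $B$ alone. But this one-formula segment begins with an application of an introduction rule or of $\bot$-elim and ends with the major premise of the elimination rule $\rho$, so it is a maximum segment (a maximum formula in the first case), contradicting the normality of $\Pi$. Consequently $\Pi_0$ either has length one (and is then a single assumption occurrence $B$) or ends with one of $\wedge$-elim, $\vee$-elim, $\to$-elim, $A^c$-elim; in the latter case it does not end with an introduction rule, so the induction hypothesis yields an undischarged assumption occurrence of $\Pi_0$.

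In either case we obtain an assumption occurrence $C$ that lies in the subderivation $\Pi_0$ of the major premise of $\rho$ and that is undischarged \emph{in $\Pi_0$}. It remains to see that $C$ is still undischarged in $\Pi$, i.e.\ that the final step $\rho$ does not discharge it. The only rules of $\mathcal{N}$ that discharge hypotheses are $\to$-int, $A^c$-int and $\vee$-elim: the first two are introduction rules and hence distinct from $\rho$, while an application of $\vee$-elim discharges hypotheses only inside its two \emph{minor}-premise subderivations, whereas $C$ occurs in the \emph{major}-premise subderivation of $\rho$. So $\rho$ discharges nothing occurring above $B$, and $C$ witnesses the claim, closing the induction.

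The main obstacle is the careful handling of segments: one must check that, when $\Pi_0$ ends with an introduction rule or with $\bot$-elim, the segment ending in $B$ really does collapse to a single formula, so that a genuine maximum formula or maximum segment is produced and normality is actually violated. The only other delicate point is the discharge bookkeeping of the last paragraph, namely the observation that $\vee$-elim --- the sole \emph{elimination} rule that discharges anything --- can never discharge a hypothesis located in its major-premise subderivation.
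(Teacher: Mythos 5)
Your proof is correct and follows essentially the same route as the paper's: induction on the length of the derivation, passing to the (normal, by Lemma~\ref{lemma:subderivationofnormalisnormal}) subderivation ending in the major premise of the final elimination, using normality to rule out that this subderivation ends in an introduction, and observing that no elimination rule discharges assumptions above its major premise. Your additional exclusion of $\bot$-elim as the last rule of the subderivation is sound but not needed, since the induction hypothesis only requires that the subderivation not end in an \emph{introduction} rule.
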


\begin{proof}
    We prove the result by induction on the length of derivations.

\begin{enumerate}
    \item Base case: $\Pi$ has length $1$. Then the derivation is just a single occurrence of an assumption $p$ and shows $p \vdash_{\mathcal{N}} p$, so it depends on the undischarged assumption $p$.

    \medskip

\item $\Pi$ has length greater than $1$ and ends with an application of a elimination rule. Then consider the subderivation $\Pi'$ of $\Pi$ which has as its conclusion the major premise of the last rule applied in $\Pi$. Since $\Pi'$ is a subderivation of $\Pi$ and $\Pi$ is normal, by Lemma \ref{lemma:subderivationofnormalisnormal} we have that $\Pi'$ is normal. Notice that, if $\Pi'$ ended with an application of a introduction rule, since its conclusion is the major premise of an elimination rule there would be a maximum formula in $\Pi$, so since $\Pi$ is normal $\Pi'$ cannot end with an introduction rule. But then $\Pi'$ is a deduction with length smaller than that of $\Pi$ that does not end with an introduction rule, hence by the induction hypothesis it has at least one undischarged assumption. Therefore, since no elimination rule is capable of discharging assumptions occurring above its major premise, we conclude that the open assumption of $\Pi'$ are not discharged by the last rule application and so are also open assumptions of $\Pi$. 
\end{enumerate}
\end{proof}


\begin{theorem}\label{thm:atomicconsistencyproof}
     $\mathcal{N}$ is consistent.
\end{theorem}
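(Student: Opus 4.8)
The plan is to argue by contradiction, leveraging the normalization theorem (Theorem~\ref{thm:normalization}) together with Lemma~\ref{lemma:alwaysundischarged}. Suppose $\vdash_{\mathcal{N}} \bot$; then there is a derivation $\Pi$ of $\bot$ in $\mathcal{N}$ whose set of undischarged assumptions is empty. First I would invoke Theorem~\ref{thm:normalization} to replace $\Pi$ by a \emph{normal} derivation $\Pi'$ of $\bot$; since none of the reductions introduced for $\mathcal{N}$ adds open assumptions, $\Pi'$ still depends on nothing.

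Next I would observe that $\bot$ cannot be the conclusion of any introduction rule of $\mathcal{N}$. Indeed, the introduction rules of $\mathcal{N}$ conclude atoms of the shape $p^{A \wedge B}$, $p^{A \vee B}$, $p^{A \to B}$ or $p^{A^c}$, and by the construction of the mapping $\alpha$ — which assigns a fresh atom of $\At$ to every formula that is not of the form $q^i$ with $q \in \Atb$ — all these atoms belong to $\At$ and are therefore distinct from $\bot$. Hence $\Pi'$ does not end with an application of an introduction rule, so Lemma~\ref{lemma:alwaysundischarged} applies and tells us that $\Pi'$ contains at least one undischarged assumption, contradicting the fact that $\Pi'$ derives $\bot$ from the empty set. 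Therefore $\nvdash_{\mathcal{N}} \bot$, which is exactly the claim that $\mathcal{N}$ is consistent.

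The only routine bookkeeping is confirming that the normalization procedure preserves the (empty) set of open assumptions and handling the degenerate case in which $\Pi'$ consists of a single formula occurrence — but then $\Pi'$ is a bare assumption $\bot$, which is again undischarged, so the argument goes through uniformly (this is in fact the base case of Lemma~\ref{lemma:alwaysundischarged}). The genuinely substantive point, and the one I expect to be the crux, is the claim that $\bot$ is never the conclusion of an introduction rule of $\mathcal{N}$: this is precisely where the design of the mapping $\alpha$ (sending all compound and all classical formulas to atoms drawn from $\At$, never to $\bot$) does the real work, and it deserves to be spelled out explicitly, since without it the whole strategy would collapse.
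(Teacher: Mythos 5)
Your proof is correct and follows essentially the same route as the paper's: normalize the hypothetical derivation of $\bot$, note that no introduction rule of $\mathcal{N}$ can conclude $\bot$, and invoke Lemma~\ref{lemma:alwaysundischarged} to derive a contradiction. Your explicit justification via the mapping $\alpha$ (that every introduction rule concludes an atom of $\At$, hence not $\bot$) merely spells out what the paper dismisses as ``a quick inspection''.
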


\begin{proof}
Assume, for the sake of contradiction, that there is a derivation $\Pi$ showing $\vdash_{\mathcal{N}} \bot$. By Theorem \ref{thm:normalization}, $\Pi$ reduces to a normal derivation $\Pi'$ showing  $\vdash_{\mathcal{N}} \bot$. A quick inspection of the shape of introduction rules reveals that no introduction rule can have $p^{\bot^{i}} (\equiv \bot)$ as its conclusion, hence the last rule of $\Pi'$ cannot be an introduction rule. But then from Lemma \ref{lemma:alwaysundischarged} it follows that $\Pi'$ must have at least one undischarged assumption, hence it cannot be a derivation showing $\vdash_{\mathcal{N}} \bot$. Contradiction. Therefore, $\nvdash_{\mathcal{N}} \bot$.    
\end{proof}

Now we have assured ourselves that $\mathcal{N}$ is indeed a valid atomic system, so we can proceed with the completeness proof.


\begin{lemma}\label{lemma:prepcomplete}
 For all $A \in \Gamma^{\star}$ and all $\mathcal{N} \subseteq S$ it holds that $\vDash_{S} A$ iff $\vdash_S p^{A}$.   
\end{lemma}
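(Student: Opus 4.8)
The plan is to prove the biconditional by induction on the degree $d[p^A]$ of the atom associated with $A$, following the pattern of Sandqvist's completeness argument for the intuitionistic fragment and adding the cases for classical formulas. We work with the degree rather than with complexity because the classical case $A=B^c$ will be reduced to the induction hypothesis for $\neg B^i$, and $d[p^{\neg B^i}]<d[p^{B^c}]$ holds precisely because of the $+2$ in the clause $d[p^{A^c}]=d[p^A]+2$ (cf.\ the footnote after the definition of degree); throughout we use that $\Gamma^{\star}$ is closed under subformulas and contains $\neg B^i$ whenever it contains a classical $B^c$, so every formula to which the induction hypothesis is applied lies in $\Gamma^{\star}$ and has an associated atom. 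The base case $A=q^i$ with $q\in\Atb$ is immediate from Clause~1, since then $p^A=q$. For conjunction one passes through $\vDash_S B$ and $\vDash_S C$ via $p^{B\wedge C}-int$ and $p^{B\wedge C}-elim$. For implication in the direction $\vDash_S B\to C\Rightarrow\vdash_S p^{B\to C}$, one distinguishes whether $p^B\vdash_S\bot$: if so, $p^B\vdash_S p^C$ follows by $\bot$-elimination in $\mathcal{N}$; if not, Lemma~\ref{lemma:syntacticextensionexistence} makes $S^{+}:=S\cup\{\text{axiom concluding }p^B\}$ a consistent extension, the induction hypothesis gives $\vDash_{S^{+}}C$ hence $\vdash_{S^{+}}p^C$, and rewriting the axiom-leaves as open assumptions $p^B$ yields $p^B\vdash_S p^C$; in either case $p^{B\to C}-int$ gives $\vdash_S p^{B\to C}$. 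The converse uses $p^{B\to C}-elim$ with the induction hypothesis on $B$ and $C$.

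The disjunction case $A=B\lor C$ is the first delicate point. For $\vdash_S p^{B\lor C}\Rightarrow\vDash_S B\lor C$: given $S'\supseteq S$ and $q\in\Atb$ with $B\vDash_{S'}q^i$ and $C\vDash_{S'}q^i$, one shows $p^B\vdash_{S'}q$ and $p^C\vdash_{S'}q$ (again splitting on whether $p^B\vdash_{S'}\bot$, resp.\ $p^C\vdash_{S'}\bot$, and otherwise passing through the consistent extension and the induction hypothesis as above) and then concludes $\vdash_{S'}q$ by the rule $p^{B\lor C},q-elim$ of $\mathcal{N}$. For the converse $\vDash_S B\lor C\Rightarrow\vdash_S p^{B\lor C}$, the trick is to instantiate the disjunction clause at the base $S$ itself with $q:=p^{B\lor C}\in\At$: using the induction hypothesis on $B$ and $C$ together with the two $p^{B\lor C}-int$ rules of $\mathcal{N}$ one checks that $B\vDash_S(p^{B\lor C})^i$ and $C\vDash_S(p^{B\lor C})^i$, so Clause~6 delivers $\vDash_S(p^{B\lor C})^i$, that is, $\vdash_S p^{B\lor C}$.

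The remaining, and principal, case is the classical one, $A=B^c$, treated uniformly for $B\in\Atb$ and $B\notin\Atb$. First one establishes, using Clauses~2/3, the fact that no consistent base validates $\bot$, and the induction hypothesis applied to $\neg B^i$, the characterization that $\vDash_S B^c$ holds iff $\nvdash_{S'}p^{\neg B^i}$ for every $S\subseteq S'$; the chain is $\vDash_S B^c\iff(\forall S'\supseteq S)(\exists S''\supseteq S')\,\vDash_{S''}B^i\iff(\forall S'\supseteq S)\,\nvDash_{S'}\neg B^i\iff(\forall S'\supseteq S)\,\nvdash_{S'}p^{\neg B^i}$, the middle step being the purely semantic observation that $\vDash_{S'}\neg B^i$ holds exactly when no extension of $S'$ validates $B^i$. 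From this characterization both directions follow. If $\nvdash_{S'}p^{\neg B^i}$ for all $S'\supseteq S$, then $p^{\neg B^i}\vdash_S\bot$ must hold --- otherwise Lemma~\ref{lemma:syntacticextensionexistence} would give a consistent extension validating $p^{\neg B^i}$, contradicting the hypothesis --- and then $p^{B^c}-int$ yields $\vdash_S p^{B^c}$. Conversely, if $\vdash_S p^{B^c}$ and some $S'\supseteq S$ had $\vdash_{S'}p^{\neg B^i}$, then $p^{B^c}-elim$ would give $\vdash_{S'}\bot$, contradicting the consistency of $S'$; hence $\nvdash_{S'}p^{\neg B^i}$ for all $S'\supseteq S$, as needed.

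I expect the classical case to be the main obstacle, both because it has no analogue in Sandqvist's proof and because it is where the consistency constraint does the essential work: the equivalence between strong validity of $A^c$ and permanent underivability of the simulation atom $p^{\neg A^i}$ is available only once all bases are required to be consistent, and Lemma~\ref{lemma:syntacticextensionexistence} is exactly the device that turns ``$\bot$ is not derivable'' into ``there is a legitimate consistent extension''. A secondary point that needs care is the bookkeeping ensuring that $B$, $C$, the arbitrary atom $q^i$, and above all $\neg B^i$ always lie in $\Gamma^{\star}$ with strictly smaller degree, which is the reason the induction is run on the degree rather than on the complexity of formulas.
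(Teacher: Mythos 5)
Your proposal is correct and follows essentially the same route as the paper's own proof: induction on the degree of the associated atoms (with the $+2$ for classical formulas doing exactly the work you identify), the simulation rules of $\mathcal{N}$ for each connective, and Lemma~\ref{lemma:syntacticextensionexistence} to turn underivability of $\bot$ into a legitimate consistent extension. The only differences are presentational --- you spell out the semantic-to-syntactic conversion for hypothetical judgements that the paper leaves implicit in its ``Induction hypothesis'' annotations, and in the classical case you apply the induction hypothesis to $\neg B^{i}$ and package the argument as a four-way equivalence where the paper applies it to $B^{i}$ and builds a two-step extension to reach the contradiction; both variants are sound and rest on the same ideas.
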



\begin{proof}
We show the result by induction on the degree of formulas.


\begin{enumerate}
    \item $A = p^i$, for some $p \in \Atb$. Then $p^A = A$, and the result follows immediately from Clause 1 of strong validity;

    \medskip
    
    \item  $A = A \land B$.

 \medskip
 
 ($\Rightarrow$) Assume $\vDash_S A \land B$. Then $\vDash_S A$ and $\vDash_S B$. Induction hypothesis: $\vdash_S p^{A}$ and $\vdash_S p^{B}$. By $p^{A \land B} - int$, we obtain $\vdash_S p^{A \land B}$.

\medskip

 ($\Leftarrow$) Assume $\vdash_{S} p^{A \land B}$. Then, by $p^{A \land B} - elim$  we get both $\vdash_{S} p^{A}$ and $\vdash_{S} p^{B}$. Induction hypothesis: $\vDash_S A$ and $\vDash_S B$. Then, by the semantic clause for conjunction,  $\vDash_S A \land B$.

\medskip

    \item  $A = A \lor B$.

 \medskip
 
 ($\Rightarrow$) Assume $\vDash_S A \lor B$. Then, for every $S \subseteq S'$, $A \vDash_{S'} q^i$ and $B \vDash_{S'} q^i$ implies $\vDash_{S'} q^i$, for any $q \in \Atb$. Induction hypothesis: for all $S \subseteq S'$, $p^{A} \vdash_{S'} q$ and $p^{B} \vdash_{S'} q$ implies $\vdash_{S'} q$, for any $q \in \Atb$. Let $S' = S$. By the rules for $p^{A \lor B} - int$ we can conclude both $p^{A} \vdash_{S} p^{A \lor B}$ and $p^{B} \vdash_{S} p^{A \lor B}$, which we then use to conclude $\vdash_{S} p^{A \lor B}$.

\medskip

 ($\Leftarrow$) Assume $\vdash_{S} p^{A \lor B}$. Let $S'$ be any extension of $S$ such that $p^{A} \vdash_{S'} q$ and $p^{B} \vdash_{S'} q$, for some $q \in \Atb$. Clearly, since $S \subseteq S'$, $\vdash_{S'} p^{A \lor B}$.  Then, regardless of our choice of $q$, the rule $p^{A \lor B}, q - elim$ can be used to show $\vdash_{S'} q$. Induction hypothesis: for every $S \subseteq S'$, $A \vDash_{S'} q^i$ and $B \vDash_{S'} q^i$ implies $\vDash_{S'} q^i$, for any $q \in \Atb$. By the clause for disjunction we immediately conclude $\vDash_{S} A \lor B$.

\medskip

\item $A = A \to B$.

\medskip

 ($\Rightarrow$) Assume $\vDash_S A \to B$. Then, for any $S \subseteq S'$, $A \vDash_{S'} B$. Induction hypothesis: for any $S \subseteq S'$, $p^{A} \vdash_{S'} p^{B}$. Let $S' = S$. Then we can use the rule $p^{A \to B} - int$ to conclude $\vdash_{S} p^{A \to B}$.

\medskip

 ($\Leftarrow$) Assume $\vdash_S p^{A \to B}$. Let $S'$ be any extension of $S'$ such that $\vdash_{S'} p^{A}$. Since $S \subseteq S'$, we have  $\vdash_{S'} p^{A \to B}$. We can use the rule $p^{A \to B} - elim$ to conclude $\vdash_{S'} p^{B}$. Then we have that for any $S \subseteq S'$ it holds that $\vdash_{S'} p^{A}$ implies $\vdash_{S'} p^{B}$. Induction hypothesis: for any $S \subseteq S'$ it holds that $\vDash_{S'} p^{A}$ implies $\vDash_{S'} p^{B}$. By clause 7 of the definition of strong validity this yields $A \vDash_S B$, and thus $\vDash_S A \to B$.

 \medskip

\medskip

\item $A = A^c$.

\medskip

 ($\Rightarrow$) Assume $\vDash_S A^c$. Then, for any $S \subseteq S'$, $A^i \nvDash_{S^n} \bot$.
 
 Induction hypothesis: for any $S \subseteq S'$, $p^{A^i} \nvdash_{S^n} \bot$. Then, by Lemma \ref{lemma:syntacticextensionexistence}, for every $S'$ we have that the system obtained by adding a rule concluding $p^{A^i}$ from empty premises to $S'$ is consistent.

 Assume, for the sake of contradiction, that $p^{\neg A^i} \nvdash_S \bot$. Then, by Lemma \ref{lemma:syntacticextensionexistence}, the system $S'$ obtained by adding a rule concluding $p^{\neg A^i}$ from empty premises to $S$ is consistent. But by the previous result we also have that the system $S''$ obtained by adding a rule concluding $p^{A^i}$ from empty premises to $S'$ must be consistent. However, since $\vdash_{S''} p^{A^i}$ and $\vdash_{S''} p^{\neg A^i}$, we can apply the $p^{\neg A^i} - elim$ rule to show $\vdash_{S''} p^{\bot}$, and thus $\vdash_{S''} \bot$ due to the properties of the mapping $\alpha$. Contradiction. Thus, $p^{\neg A^i} \vdash_S \bot$, and so $ \vdash_S p^{A^c}$ can by obtained trough an application of $p^{A^c} - int$.

\medskip

 ($\Leftarrow$) Assume $\vdash_S p^{A^c}$. Suppose there is an $S \subseteq S'$ such that $p^{A} \vdash_{S'} p^{\bot}$. Then, by $p^{\neg A^i} - int$ we conclude $\vdash_{S'} p^{\neg A^i}$ and, since $S \subseteq S'$ and thus  $\vdash_{S'} p^{A^c}$, we conclude $\vdash_{S'} p^\bot$ through an application of $p^{A^c} - elim$, and thus $\vdash_{S'} \bot$. Contradiction. Hence, for all $S \subseteq S'$ we have $p^{A^i} \nvdash_{S'} \bot$. Induction hypothesis: for all $S \subseteq S'$ it holds that $A^i \nvDash_{S'} \bot$, which by the clauses for classical formulas yield $\vDash_{S} A^c$.  
  \end{enumerate}
\end{proof}

\begin{theorem}[Completeness] \label{thm:completeness}
$\Gamma \vDash A$ implies $\Gamma \vdash_{\NE} A$.
\end{theorem}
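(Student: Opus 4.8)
The plan is to adapt Sandqvist's completeness strategy~\cite{Sandqvist2015IL}, using the tailored base $\mathcal{N}$, its consistency (Theorem~\ref{thm:atomicconsistencyproof}) and the translation Lemma~\ref{lemma:prepcomplete} that are already in place. Assume $\Gamma \vDash A$; we must exhibit a derivation of $A$ from $\Gamma$ in $\NE$. Recall that $A \in (\Gamma \cup \{A\})^{Sub} \subseteq \Gamma^{\star}$ and $\Gamma \subseteq \Gamma^{\star}$, so Lemma~\ref{lemma:prepcomplete} is applicable both to $A$ and to every $B \in \Gamma$, and the mapping $\alpha$ is injective on $\Gamma^{\star}$.

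First I would reduce the semantic hypothesis to an \emph{atomic} deduction in $\mathcal{N}$. Form the extension $\mathcal{N}^{+}$ of $\mathcal{N}$ obtained by adjoining, for each $B \in \Gamma$, an atomic axiom concluding $p^{B}$ from no premises. Since bases must be consistent, I would branch on the status of $\mathcal{N}^{+}$. If $\mathcal{N}^{+}$ is \emph{inconsistent}, take a derivation of $\bot$ in $\mathcal{N}^{+}$; as $\mathcal{N}$ itself is consistent, this derivation must use some of the new axioms, and replacing each such application by an open assumption $p^{B}$ produces $\{p^{B_{1}},\dots,p^{B_{k}}\}\vdash_{\mathcal{N}}\bot$ for a finite $\{B_{1},\dots,B_{k}\}\subseteq\Gamma$. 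If $\mathcal{N}^{+}$ is \emph{consistent}, it is a legitimate base extending $\mathcal{N}$, and $\vdash_{\mathcal{N}^{+}}p^{B}$ for every $B\in\Gamma$, so Lemma~\ref{lemma:prepcomplete} gives $\vDash_{\mathcal{N}^{+}}B$ for all $B\in\Gamma$; since $\Gamma\vDash A$ entails $\Gamma\vDash_{\mathcal{N}}A$, Clause 7 of Definition~\ref{def:Validity} yields $\vDash_{\mathcal{N}^{+}}A$, hence $\vdash_{\mathcal{N}^{+}}p^{A}$ again by Lemma~\ref{lemma:prepcomplete}. A witnessing derivation is finite, so stripping the finitely many new axioms it uses to open assumptions gives $\{p^{B_{1}},\dots,p^{B_{k}}\}\vdash_{\mathcal{N}}p^{A}$ with each $B_{i}\in\Gamma$. (When $\Gamma=\emptyset$, $\mathcal{N}^{+}=\mathcal{N}$ is consistent and this degenerates to $\vdash_{\mathcal{N}}p^{A}$.)

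Second I would \emph{simulate} the atomic deduction back inside $\NE$. Define a translation $(\cdot)^{*}$ on atoms by $q^{*}=D$ whenever $q=p^{D}$ for the unique $D\in\Gamma^{\star}$, and $q^{*}=q^{i}$ otherwise; in particular $\bot^{*}=\bot$. By induction on the structure of the $\mathcal{N}$-derivation one checks that $(\cdot)^{*}$ carries each atomic rule of $\mathcal{N}$ to a genuine application of the corresponding $\NE$-rule: $p^{D}$-int/elim for $D=A\to B,\ A\wedge B,\ A\vee B$ become $\to,\wedge,\vee$-int/elim on the immediate subformulas (which are again in $\Gamma^{\star}$ by construction); $p^{A^{c}}$-int and $p^{A^{c}}$-elim become exactly the rules $A^{c}$-int and $A^{c}$-elim of $\NE$, using that $\neg A^{i}\in\Gamma^{\star}$ whenever the $p^{A^{c}}$-rules were added; and the auxiliary schemes $p^{A\vee B},q$-elim and $\bot,q$-elim become $\vee$-elim and $\bot$-elim with conclusion $q^{*}$, where it is harmless that $q^{*}$ may be any formula. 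Translating the open assumptions ($p^{B}\mapsto B$), this gives $\{B_{1},\dots,B_{k}\}\vdash_{\NE}A$, hence $\Gamma\vdash_{\NE}A$, in the consistent case; in the inconsistent case it gives $\{B_{1},\dots,B_{k}\}\vdash_{\NE}\bot$, whence $\Gamma\vdash_{\NE}A$ by one application of $\bot$-elim.

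The main obstacle is precisely the consistency constraint, which has no analogue in Sandqvist's proof: we cannot simply enrich $\mathcal{N}$ with axioms for $\Gamma$ and read off validity, so the argument must split on whether $\mathcal{N}^{+}$ is consistent and, in the negative branch, turn the inconsistency of the base into a syntactic derivation of $\bot$ that simulates back into $\NE$. The secondary, more routine, difficulty lies in the simulation step: verifying that the atomic $A^{c}$-rules correspond under $(\cdot)^{*}$ to the classicality rules of $\NE$, and that the auxiliary atoms generated by the $\vee$- and $\bot$-elimination schemes do no damage; both come down to bookkeeping about $\Gamma^{\star}$ and $\alpha$, but they are exactly the places where the ecumenical and consistency-aware ingredients of the construction must be made to fit the classical template.
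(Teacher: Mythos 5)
Your proposal is correct and follows essentially the same route as the paper's own proof: the base $\mathcal{N}^{+}$ is exactly the paper's $\mathfrak{B}$, the case split on its consistency, the use of Lemma~\ref{lemma:prepcomplete} in the consistent branch, the conversion of the new axioms into open assumptions, the translation of the atomic derivation back into $\NE$, and the final $\bot$-elimination in the inconsistent branch all match. The only (harmless) cosmetic difference is that you send auxiliary atoms $q$ outside the range of $\alpha$ to $q^{i}$, whereas the paper simply leaves them unsubstituted.
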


\begin{proof}

 Define a mapping $\alpha$ and a system $\mathcal{N}$ for $\Gamma$ and $A$ as shown earlier. Define a set $\Gamma^{\Atb} = \{p^{A} | A \in \Gamma \}$.

Suppose $\Gamma \vDash A$. By the definition of strong validity, we have $\Gamma \vDash_{\mathcal{N}} A$. Now define $\mathfrak{B}$ as the system obtained from $\mathcal{N}$ by adding a rule concluding $p^{B}$ from empty premises for every $p^{B} \in \Gamma^{\Atb}$. 

We split the the proof in two cases:

\begin{enumerate}
    \item $\mathfrak{B}$ is consistent. Then it is a valid extension of $\mathcal{N}$. By the definition of $\mathfrak{B}$, we have $\vdash_{\mathfrak{B}} p^{B}$ for all $p^{B} \in \Gamma^{\Atb}$. By Lemma \ref{lemma:prepcomplete}, for all $B \in \Gamma^\star$ we have that, for any $\mathcal{N} \subseteq S$, $\vdash_S p^{B}$ iff $\vDash_S B$. Since $\Gamma \subseteq \Gamma^{\star}$, we conclude $\vDash_{\mathfrak{B}} B$ for all $B \in \Gamma$. Since $\Gamma \vDash_{\mathcal{N}} A$ and $\mathcal{N} \subseteq \mathfrak{B}$, we also have $\vDash_{\mathfrak{B}} A$, and so by another application of Lemma \ref{lemma:prepcomplete} we conclude $\vdash_{\mathfrak{B}} p^{A}$. Thus, we conclude that there is a deduction $\Pi$ of $p^{A}$ in $\mathfrak{B}$.

    \medskip

    If the deduction does not use any of the rules contained in $\mathfrak{B}$ but not in $\mathcal{N}$, $\Pi$ is a deduction in $\mathcal{N}$, and so $\vDash_{\mathcal{N}} p^{A}$. If it does use some of the rules, by replacing every new rule concluding $p^{B}$ by an assumption $p^{B}$ we obtain a deduction $\Pi'$ of $\mathcal{N}$ which shows $\Delta \vdash_{\mathcal{N}} p^{A}$, for some $\Delta \subseteq \Gamma^{\Atb}$. Thus, in any case we obtain some deduction showing $\Gamma^{\At} \vdash_{\mathcal{N}} p^{A}$

    \medskip

    Let $\Pi$ be the deduction showing $\Gamma^{\At} \vdash_{\mathcal{N}} p^{A}$ obtained earlier. Define $\Pi'$ as the deduction obtained by replacing every formula occurrence $p^{A}$ in $\Pi$ by $A$ (atoms $q$ occurring on instances of atomic rules for disjunction and $\bot$-elimination which are not mapped to anything by $\alpha$ are not substituted). Since every instance of every atomic rule becomes some instance of a rule in our system of natural deduction, it is straightforward to show by induction on the length of derivations that $\Pi'$ is a deduction showing $\Gamma \vdash_{\NE} A$.

\medskip

    \item $\mathfrak{B}$ is inconsistent. Then there is a deduction $\Pi$ in $\mathfrak{B}$ showing $\vdash_{\mathfrak{B}} \bot$. If $\Pi$ does not use any rule contained in $\mathfrak{B}$ but not in $\mathcal{N}$, we have $\vdash_{\mathcal{N}} \bot$, contradicting Theorem \ref{thm:atomicconsistencyproof}. Then, $\Pi$ must use some of the new rules. But then we may replace every new rule of $\mathfrak{B}$ which concludes $p^{B}$ by an assumption with shape $p^{B}$ to obtain a deduction showing $\Delta \vdash_{\mathcal{N}} \bot$ for some $\Delta \subseteq \Gamma^{\Atb}$. Define $\Pi'$ as the deduction obtained by replacing every formula occurrence $p^{A}$ in $\Pi$ by $A$. It is straightforward to show by induction on the length of derivations that $\Pi'$ is a deduction showing $\Gamma \vdash_{\NE} \bot$. As a finishing touch, we apply $\bot - elim$ to obtain a deduction showing  $\Gamma \vdash_{\NE} A$.

\end{enumerate}

\end{proof}

Notice that the only non-constructive part of the completeness proof is the step for $A = A^{c}$ in Lemma \ref{lemma:prepcomplete}. This means that by removing $A^{c}$ from the language we would have a fully constructive proof of completeness for intuitionistic logic.

The fact that the inductive steps for constructive proofs only use constructive reasoning but the steps for classical proofs require classical reasoning bears testament to the fact that our definitions indeed capture the meaning of classical and intuitionistic proofs. As such, the ecumenical behaviour observed in the metalanguage should be taken as evidence both of the independence between the distinct notions of proof and of their conceptual adequacy.

\section{Conclusion}

We have proposed a weak and a strong version of $\Bes$ for ecumenical systems. While the first helped furthering our understanding concerning the difference between double negations in intuitionistic logic and provability in classical logic, the ecumenical semantics comes into full swing when the strong notion of $\Bes$ is provided, since it allows for a new ecumenical natural deduction system which is sound and complete w.r.t. it.




This distinction allows us not only to obtain classical behaviour for formulas containing classical atoms and intuitionistic behaviour for formulas containing intuitionistic atoms, but also to put on the spotlight basic properties of semantic entailment which are not always evident in traditional semantic analysis. It may also shed light on semantic differences between intuitionistic and classical logics from an even broader perspective.

In the course of this paper we have also shown that it is possible to furnish the absurdity constant $\bot$  with a conceptually adequate and technically sound definition by requiring all systems to be consistent. This can be done in non-ecumenical contexts as well, provided some procedure capable of showing consistency of the syntactic calculus (such as a normalization proof) is available.

There are many ways to further develop this work in the future. First of all, the role of local and global validity in $\Bes$ should be better explored, since it opens wide the classical behaviour as it appears in other semantic settings for classical logic, \eg\ as in Kripke models for classical logic~\cite{DBLP:journals/apal/IlikLH10}. One very interesting step in this direction would be to propose a proof system for our weak version of $\Bes$. Of course, there is the natural question of what would be the $\Bes$ proposal for Prawitz' ecumenical system, from which this work took its inspiration but also other ecumenical systems, such as the ones appearing in~\cite{DBLP:journals/apal/LiangM13,DBLP:journals/Dowek16a,DBLP:journals/lmcs/BlanquiDGHT23}. Another option would be to investigate new combinations of locally and globally defined connectives for the weak semantics. Finally, it would be interesting to lift this discussion to ecumenical modal logics~\cite{DBLP:conf/dali/MarinPPS20}.
\newpage


\end{document}